\PassOptionsToPackage{unicode}{hyperref}
\PassOptionsToPackage{hyphens}{url}
\PassOptionsToPackage{usenames,dvipsnames,svgnames,x11names}{xcolor}
\documentclass[12pt]{article}

\usepackage{authblk}
\usepackage{amsmath,amssymb,amsthm,setspace}
\usepackage{iftex}
\ifPDFTeX
  \usepackage[T1]{fontenc}
  \usepackage[utf8]{inputenc}
  \usepackage{textcomp} 
\else 
  \usepackage{unicode-math}
  \defaultfontfeatures{Scale=MatchLowercase}
  \defaultfontfeatures[\rmfamily]{Ligatures=TeX,Scale=1}
\fi
\usepackage{lmodern}
\ifPDFTeX\else  
\fi
\IfFileExists{upquote.sty}{\usepackage{upquote}}{}
\IfFileExists{microtype.sty}{
  \usepackage[]{microtype}
  \UseMicrotypeSet[protrusion]{basicmath} 
}{}

\usepackage{xcolor}
\makeatletter
\ifx\paragraph\undefined\else
  \let\oldparagraph\paragraph
  \renewcommand{\paragraph}{
    \@ifstar
      \xxxParagraphStar
      \xxxParagraphNoStar
  }
  \newcommand{\xxxParagraphStar}[1]{\oldparagraph*{#1}\mbox{}}
  \newcommand{\xxxParagraphNoStar}[1]{\oldparagraph{#1}\mbox{}}
\fi
\ifx\subparagraph\undefined\else
  \let\oldsubparagraph\subparagraph
  \renewcommand{\subparagraph}{
    \@ifstar
      \xxxSubParagraphStar
      \xxxSubParagraphNoStar
  }
  \newcommand{\xxxSubParagraphStar}[1]{\oldsubparagraph*{#1}\mbox{}}
  \newcommand{\xxxSubParagraphNoStar}[1]{\oldsubparagraph{#1}\mbox{}}
\fi
\makeatother

\usepackage{longtable,booktabs,array}
\usepackage{calc} 
\usepackage{etoolbox}
\makeatletter
\patchcmd\longtable{\par}{\if@noskipsec\mbox{}\fi\par}{}{}
\makeatother
\IfFileExists{footnotehyper.sty}{\usepackage{footnotehyper}}{\usepackage{footnote}}
\makesavenoteenv{longtable}
\usepackage{graphicx}
\makeatletter
\def\maxwidth{\ifdim\Gin@nat@width>\linewidth\linewidth\else\Gin@nat@width\fi}
\def\maxheight{\ifdim\Gin@nat@height>\textheight\textheight\else\Gin@nat@height\fi}
\makeatother
\setkeys{Gin}{width=\maxwidth,height=\maxheight,keepaspectratio}
\makeatletter
\def\fps@figure{htbp}
\makeatother

\makeatletter
\@ifpackageloaded{caption}{}{\usepackage{caption}}
\AtBeginDocument{%
\ifdefined\contentsname
  \renewcommand*\contentsname{Table of contents}
\else
  \newcommand\contentsname{Table of contents}
\fi
\ifdefined\listfigurename
  \renewcommand*\listfigurename{List of Figures}
\else
  \newcommand\listfigurename{List of Figures}
\fi
\ifdefined\listtablename
  \renewcommand*\listtablename{List of Tables}
\else
  \newcommand\listtablename{List of Tables}
\fi
\ifdefined\figurename
  \renewcommand*\figurename{Figure}
\else
  \newcommand\figurename{Figure}
\fi
\ifdefined\tablename
  \renewcommand*\tablename{Table}
\else
  \newcommand\tablename{Table}
\fi
}
\@ifpackageloaded{float}{}{\usepackage{float}}
\floatstyle{ruled}
\@ifundefined{c@chapter}{\newfloat{codelisting}{h}{lop}}{\newfloat{codelisting}{h}{lop}[chapter]}
\floatname{codelisting}{Listing}

\makeatother
\makeatletter
\@ifpackageloaded{caption}{}{\usepackage{caption}}
\@ifpackageloaded{subcaption}{}{\usepackage{subcaption}}
\makeatother

\usepackage[margin=1in]{geometry}

\ifLuaTeX
  \usepackage{selnolig}  
\fi
\usepackage[]{natbib}
\usepackage{bookmark}
\usepackage{multirow}
\usepackage{cleveref}

\IfFileExists{xurl.sty}{\usepackage{xurl}}{} 
\hypersetup{
  pdftitle={Nonparametric regression of spatio-temporal data using infinite-dimensional covariates},
  pdfauthor={Subhrajyoty Roy; Soudeep Deb; Sayar Karmakar; Rishideep Roy},
  pdfkeywords={3 to 6 keywords, that do not appear in the title},
  colorlinks=true,
  linkcolor={blue},
  filecolor={Maroon},
  citecolor={Blue},
  urlcolor={Blue},
  pdfcreator={LaTeX via pandoc}}

\newcommand{\argmin}{\arg\,\min}

\newcommand{\R}{\mathbb{R}}
\newcommand{\Z}{\mathbb{Z}}
\newcommand{\N}{\mathbb{N}}
\newcommand{\E}{\mathbb{E}}
\newcommand{\prob}{\mathbb{P}}
\newcommand{\Tcal}{\mathcal{T}}
\newcommand{\Scal}{\mathcal{S}}

\newcommand{\Dcal}{\mathcal{D}}

\newcommand{\Hcal}{\mathcal{H}}
\newcommand{\Kcal}{\mathcal{K}}
\newcommand{\Fcal}{\mathcal{F}}
\newcommand{\Ocal}{\mathcal{O}}
\newcommand{\ppm}{\mathrm{PM}2.5}

\newcommand{\inner}[1]{\left\langle #1\right\rangle}
\newcommand{\bb}[1]{\boldsymbol{#1}}
\newcommand{\var}{\text{Var}}
\newcommand{\cov}{\text{Cov}}
\newcommand{\ind}[1]{\bb{1}_{#1}}
\newcommand{\indd}[1]{\bb{1}_{\{#1\}}}
\newcommand{\tr}{^\intercal}
\newcommand{\abs}[1]{\left\lvert #1 \right\rvert}
\newcommand{\norm}[1]{\left\| #1 \right\|}

\graphicspath{{../..}}
\usepackage{enumitem}
\usepackage[linesnumbered,ruled,vlined]{algorithm2e}
\SetKwInput{KwInput}{Input}                
\SetKwInput{KwOutput}{Output}              

\newtheorem{thm}{Theorem}
\newtheorem{lem}{Lemma}
\newtheorem{corr}{Corollary}
\newtheorem{prop}{Proposition}

\newcommand{\anon}{1}

\begin{document}

\def\spacingset#1{\renewcommand{\baselinestretch}%
{#1}\small\normalsize} \spacingset{1}


\if1\anon
{
  \title{\bf Nonparametric regression of spatio-temporal data using infinite-dimensional covariates}
  \author[1]{Subhrajyoty Roy}
  \author[2]{Soudeep Deb}
  \author[3]{Sayar Karmakar}
  \author[4]{Rishideep Roy}
  
  \affil[1]{{\small Department of Statistics and Data Science, Washington University at St. Louis, USA}}
  \affil[2]{{\small Decision Sciences Area, Indian Institute of Management Bangalore, India}}
  \affil[3]{{\small Department of Statistics, University of Florida, USA}}
  \affil[4]{{\small School of Mathematics, Statistics and Actuarial Science, University of Essex, UK}}
  
  \maketitle
} \fi

\if0\anon
{
  \bigskip
  \bigskip
  \bigskip
  \begin{center}
    {\LARGE\bf Nonparametric regression of spatio-temporal data using infinite-dimensional covariates}
\end{center}
  \medskip
} \fi

\bigskip
\begin{abstract}
In spatio-temporal analysis, we often record data at specific time intervals but with varying spatial locations between these timepoints. We propose a conditional model to analyze such spatio-temporal data that accommodates the dependencies alongside second-order stationary explanatory variables, which may be infinite-dimensional and accommodate spatio-temporal covariates. Because of the absence of a mixing-type dependence condition in this case, which is typically required by the existing studies, we consider a weaker polynomially decaying moment contraction (PMC) condition on the covariates. In this paper, we obtain nonparametric point estimates of the mean and covariate functions of such a regression model, which we then show to be statistically consistent. We also obtain a simultaneous confidence interval of the mean function using the central limit theorem for the proposed estimator. Such simultaneous inference tools can be used to test for certain specifications of the mean function. Some simulation studies and two real-data analyses have been illustrated to corroborate the findings.
\end{abstract}

\noindent%
{\it Keywords:} Simultaneous confidence bands; Spatio-temporal process; Kernel estimation; Infinite dimensional regression
\vfill

\section{Introduction}\label{sec:intro}

Spatiotemporal datasets increasingly arise from monitoring systems whose sampling designs evolve over time. At each timepoint, measurements are recorded at a collection of spatial sites that can change as sensors fail, new sensors are deployed, or access constraints vary. At the same time, analysts often observe explanatory information aggregated at the temporal level (e.g., meteorological fields, land-use surfaces, remote-sensing products, GPS tracking, or historical records), whose natural representation is functional or otherwise high or possibly infinite-dimensional. These features, namely irregular time-stamps, time-varying spatial sampling locations, and infinite-dimensional temporal-level covariates, create a setting in which classical spatiotemporal methods and existing functional regression approaches do not directly deliver nonparametric inference on the conditional mean and volatility surfaces. 

To better understand this framework, as the first practical use-case, we showcase our inference strategy on an air pollution dataset from New Delhi, the capital of India. This dataset exhibits irregularly spaced observations of $\ppm$ (airborne particles with a diameter smaller than 2.5 micrometers), with missing data across time due to inactivity of sensors: see \Cref{fig:complete_cases_pm25} which shows this missingness. Naturally, it demands an infinite-dimensional covariate process due to the growing nature of historical observations and factors like measurements of other pollutants, weather patterns observed in a different set but nearby locations, etc., thus highlighting the need for a model that allows for infinite-dimensional covariates.

\begin{figure}[!ht]
    \centering
    \includegraphics[width=0.7\textwidth,keepaspectratio]{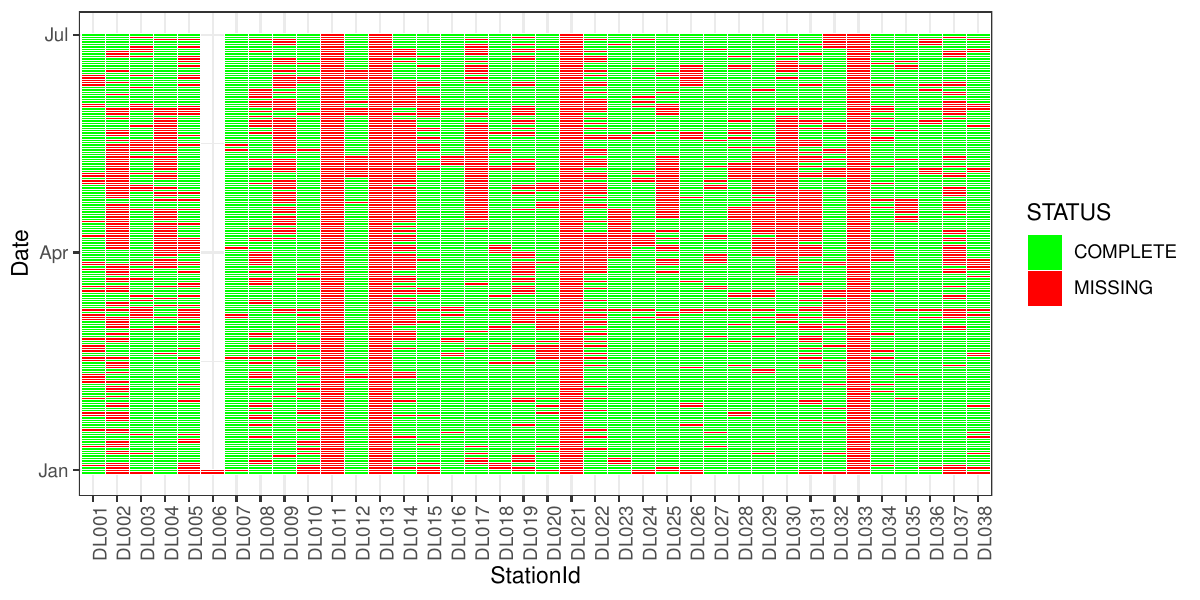}
    \caption{A heatplot indicating the missingness of the $\ppm$  observations across locations (columns) and timepoints (rows).}
    \label{fig:complete_cases_pm25}
\end{figure}

For the second motivating example, we turn our attention to a very different context that stems from the analysis of shots in soccer. In \Cref{fig:football-eda}, the locations of all shots taken in the matches against a specific team (Chelsea) by three different teams (Arsenal, Liverpool, and Manchester City) are illustrated. It is clear that the shots are irregularly distributed over the entire playing field, with no fixed spatial design. One may aim to analyze the quality of a shot, commonly measured through expected goals or xG \citep{mead2023expected}, based on various contextual variables over the season. These covariates are naturally indexed over time, and accumulate as the season progresses, yielding a high (or infinite) dimensional structure. Together, the irregular spatial sampling of shot locations and the complex temporal evolution of contextual information pose challenges for standard spatial or functional regression methods.

\begin{figure}[!ht]
    \centering
\includegraphics[width=0.8\textwidth,keepaspectratio]{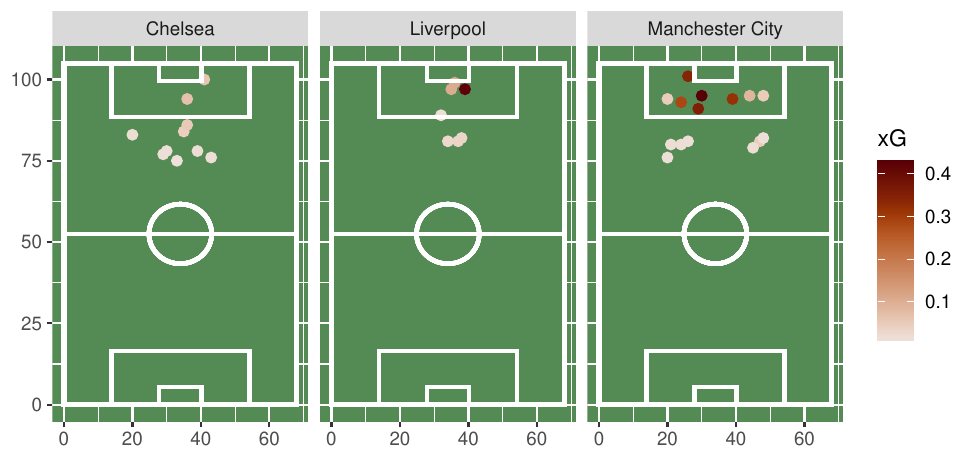}
    \caption{Position and expected goals (xG) of shots taken against Chelsea during the league 2014-15 by the players of Arsenal, Liverpool, and Manchester City.}
    \label{fig:football-eda}
\end{figure}

To define the mathematical framework for similar applications, let $Y_t(s) \in \R$ denote the observed response at time $t$ and location $s \in \Scal$, and let $\bb{X}_t$ be a second-order stationary covariate taking values in an abstract space $\chi \subseteq \R^\infty$. While we only keep the time index in the subscript to focus on the temporal evolution, we emphasize that $\bb{X}_t$, being infinite-dimensional, can accommodate spatio-temporal covariates and lagged responses by including explanatory variables across all locations. We formulate the conditional regression model 
\begin{equation}
    Y_t(s) = \mu(\bb{X}_t, s) + \sigma(\bb{X}_t, s) \, \epsilon_t(s).
    \label{eqn:model}
\end{equation}
where the timepoints $t$ index the temporal horizon $\Tcal$ and locations $s$ index the spatial horizon $\Scal$ which is considered to be a compact subset of $\R^d$. In real-life problems, $\Scal$ is often a compact subset of $\R^2$ or $\R^3$, depending on the dataset at hand. Without loss of generality, following the principles of in-fill asymptotics, we can assume $\Tcal = [0, 1]$ and $\Scal = [0, 1]^d$ for a fixed $d$ after suitably scaling the temporal and spatial domains. The mean function $\mu(\cdot, \cdot)$ and the covariance function $\sigma(\cdot, \cdot)$ are both defined on $\chi \times \Scal$. They are assumed to be measurable and sufficiently smooth. The error component $\epsilon_t(s)$ in \eqref{eqn:model} captures idiosyncratic variation with a nontrivial spatial correlation. Specifically, as temporal dependence is entirely captured by the covariates $\bb{X}_t$, we consider $\epsilon_t(s)$ to be temporally independent and identically distributed (iid) random variables satisfying $\E(\epsilon_t(s)) = 0$. They are also independent of the past covariates $\bb{X}_{t'}$ for any $t' \leqslant t$ and for all $s \in \Scal$. These will be discussed in detail in \Cref{sec:math-framework}.

Although the framework \eqref{eqn:model} models the response variable $Y_t(s)$ for     all timepoints $t \in \Tcal$ and spatial points $s \in \Scal$, in practice, the continuous curves are not typically observed. We assume that the data are collected at some irregularly spaced timepoints and at some specific spatial locations that may differ from one timepoint to another. Keeping that in view, we shall represent the dataset with observations from a total of $n$ timepoints as
\begin{equation}
    \Dcal = \left\{ \left(t_i, s_{t_i j}, \bb{X}_{t_i}, Y_{t_i}(s_{t_i j}) \right): 0 = t_0 < t_1 < \dots < t_{n-1} < t_n = 1, j \in \{ 1, 2, \dots, n_i \} \right\},
    \label{eq:dataset}
\end{equation}
where the $i^{th}$ timepoint $t_i$ observes data from $n_i$ locations. The goal is to draw nonparametric inference about $\mu(\cdot, \cdot)$ and $\sigma(\cdot,\cdot)$ given $\Dcal$, and consequently, use the model \eqref{eqn:model} to make predictions of the response function at unknown locations or at a future timepoint.

Our model framework can be regarded as a significant generalization of the conventional stochastic time series regression model \citep[see, e.g.,][]{zhao2008confidence} in two ways. First, we incorporate the infinite-dimensional covariate $\bb{X}_t$ into the model. Second, we extend the time series framework to a spatiotemporal framework, allowing a non-trivial spatial dependency structure. Of course, it provides an extremely general setting, and includes the popular autoregressive setups (e.g. AR, ARCH) by letting $\bb{X}_t$ to be a spatially aggregated function of the lagged response values. It is easy to note that allowing $\bb{X}_t$ to be infinite-dimensional automatically includes the finite-dimensional case of the covariates $\bb{X}_t \in \R^d$ for some fixed $d > 0$, by choosing the coordinates of $\bb{X}_t$ to be equal to $0$ beyond the fixed dimension $d$.

It is imperative to note that the existing literature addresses only fragments of our setting. Functional regression in an iid setting may accommodate infinite-dimensional covariates but not temporal dependence; functional time series methods accommodate dependence but typically assume regular time grids and do not treat spatially indexed responses observed on time-varying location sets; and nonparametric spatiotemporal regression focuses on finite-dimensional covariates and regular grids. We bridge this research gap and develop frequentist nonparametric estimation and inference for the mean and volatility surfaces in \eqref{eqn:model} under the irregular design \eqref{eq:dataset} with possibly infinite-dimensional temporal-level covariates. Our key contribution in this regard is discussed in more detail next, in light of some existing literature.

\subsection{Existing literature and our contribution}\label{sec:literature}

In functional data analysis, models similar to \eqref{eqn:model} have been studied to some extent, where an infinite-dimensional covariate process is used to model the response variable. The readers are referred to \cite{ferraty2003curves,ferraty2004nonparametric} for a general treatment of univariate responses, and \cite{xiang2013nonparametric,omar2019nonparametric} for a multivariate response setup. However, all these approaches consider each datapoint to be iid without any regard to the temporal dependence. Recently, \cite{li2023statistical} addressed this gap by considering a temporally dependent functional data setup, but the observations are presented at regular intervals of time and space. We relax that by allowing the dataset \eqref{eq:dataset} to have irregularly spaced points in time and varying locations in space.

Nonlinear spatiotemporal regression model is far less studied compared to the extensive literature on the nonlinear analysis of time series or the literature on spatial models \citep{cressie2011statistics}. Much success of spatiotemporal modeling has instead come from a Bayesian point of view \citep[see][for a detailed review]{haining2020modelling}. Among the frequentist viewpoint through nonparametric estimation, \cite{wang2009estimation} solved a related problem of mean (or trend) estimation using a local linear estimation technique. A completely nonparametric estimation was attempted by \cite{yang2018spatiotemporal} for modeling data on incidence rates of diseases. While both studies considered regularly spaced timepoints, the case for irregularly spaced timepoints was investigated by \cite{alsulami2017estimation}, who proposed a semi-parametric approach. However, these studies did not allow the covariate process to be infinite-dimensional. On that note, perhaps the closest article to our approach is the work by \cite{hong2020nonparametric}, who considered infinite-dimensional covariates but only in a time series setup. 

While the existing works on nonparametric modeling of spatiotemporal data and functional data analysis with infinite-order regression have grown into independent sets of literature, our proposed model \eqref{eqn:model} aims to bring a novel perspective by combining these two ideas. By bridging these, it allows us to efficiently estimate the trend function using even irregularly spaced observations in space and time, which had only limited attention in the past. Furthermore, we completely avoid any mixing-type assumption, which is a standard practice to model data dependence across space and time. In fact, \cite{hong2020nonparametric} illustrated that, since $\bb{X}_t$ is infinite-dimensional, it is quite possible that it contains infinite order lags of some exogenous variable, say $\bb{Z}_t$ such that $\bb{X}_t = \sum_{j=0}^\infty a_j \bb{Z}_{t-j}$ for some scalars $a_j \in \R$. This is not $\alpha$-mixing in general. To avoid such scenarios, they considered a near-epoch dependence (NED) condition on the covariates $\bb{X}_t$, but still require an $\alpha$-mixing condition on $(Y_t, \bb{Z}_t)$ jointly. However, we show that it is possible to completely avoid any mixing-type conditions and obtain valid inference under a polynomially-decaying moment contraction (PMC) assumption. It is much weaker than the standard geometric moment contraction (GMC) condition, and encompasses various processes which are not $\alpha$ or $\beta$-mixing \cite[see][for a counter-example]{chen2016selfnormalized}. Our requirement is more general than NED conditions, and can be easier to work with due to lesser restrictions on the moments \citep[Remark 5.1,][]{shao2007asymptotic}. As a result, in addition to allowing stronger than geometric dependence structure, we are able to derive asymptotic validity results of our proposed estimator under only a second-order stationarity condition on the covariate $\bb{X}_t$. This improves upon the requirement of the existence of $(2+\delta)$-order moment ($\delta > 0$) for $Y_t$ and fourth order moment condition for $\bb{X}_t$ required by the analysis of~\cite{hong2020nonparametric}. In summary, the major contributions of our work are threefold: 
\begin{itemize}
    \item[(a)] We propose a first-of-its-kind nonlinear regression model with infinite-dimensional covariates for spatiotemporally dependent data and its nonparametric estimation from irregularly sampled observations; 
    
    \item[(b)] We replace the mixing-type dependence condition with a functional dependence assumption and allow for significantly slow decay of dependence through PMC assumption.  
    
    \item[(c)] We reduce the higher-order moment assumptions, which, in turn, allows for a slightly heavier-tailed distribution of covariates. 
\end{itemize}

The methodology and theoretical results are presented in \Cref{sec:methods} and \Cref{sec:theoretical} below. In \Cref{sec:realdata}, we demonstrate the use of the proposed methodology in the real life dataset of air pollution from India. The paper concludes with some important remarks in \Cref{sec:conclusion}. Additional theoretical discussions, proofs, another real life example with soccer data, and a simulation study are deferred to the supplementary material in the interest of space.

\subsection{Notations}\label{sec:notations}

For better understanding of the technical discussions, we collect some notations that will be used throughout the paper. The sets of real numbers, natural numbers, and integers are denoted respectively by $\R,\N,\Z$. For two sequences $a_n$ and $b_n$, $a_n = \Ocal(b_n)$ denotes that for all sufficiently large $n$, $a_n$ is bounded between $cb_n$ and $Cb_n$ for two positive real constants $c < C < \infty$. The notation $a_n = o(b_n)$ indicates that $\limsup\limits_{n\rightarrow \infty} (a_n/b_n) = 0$. Correspondingly, $\Ocal_{\prob}(\cdot)$ and $o_{\prob}(\cdot)$ denote the probabilistic versions, where the analytical convergence and boundedness are respectively replaced by convergence in probability and stochastic boundedness. 

\section{Nonparametric estimation}\label{sec:methods}

\subsection{Mathematical Framework}\label{sec:math-framework}

Before we discuss the proposed estimator, it is critical to define all components and conditions of model \eqref{eqn:model}. We emphasize, and shall explicate properly as required later, that these conditions do not affect the generality of the spatiotemporal model to a large extent.

As the covariate process $\bb{X}_t$ is allowed to be infinite-dimensional, the abstract space of its range $\chi$ must be well-structured to allow necessary mathematical operations to be performed. We assume that $\chi$ is an affine subspace of a weighted $L^2$ Banach space, namely, for any $\bb{x}, \bb{x}' \in \chi$, the difference $(\bb{x} - \bb{x}') \in \chi$, and, $\chi \subseteq \bb{D}\mathcal{L}^2$, where 
\begin{equation}
    \bb{D}\mathcal{L}^2 = \left\{ (z_1, z_2, \dots) \in \R^\infty : \sum_{i=1}^\infty \zeta_i^{-2}z_i^2 < \infty \right\}
    \label{eqn:scaled-l2-space}
\end{equation}
for some fixed choice of sequence $\{ \zeta_i \}_{i=1}^\infty$. Let us use $\bb{D}$ to denote an infinite-dimensional linear operator that multiplies the $i^{th}$ coordinate by $\zeta_i$ for all $i$. Then, the above restriction allows one to formally define scaled norms for the infinite-dimensional covariates in $\R^\infty$, and induce a scaled $L^2$ metric $\norm{\bb{D}^{-1}(\bb{x} - \bb{x}')}$ for any $\bb{x}, \bb{x}' \in \chi$. In addition, we consider a PMC condition to model the dependence structure of the covariate process $\{ \bb{X}_t\}_{t \in \Tcal}$. Specifically, there exists iid random variables $\{Z_i\}_{i \in \Z}$ and measurable functions $G_t : \R^\infty \rightarrow \chi$ satisfying
\begin{equation*}
    \bb{X}_t = G_t\left( \dots, Z_{t-1}, Z_t \right).
\end{equation*}
For an independent and iid copy $Z_i^\ast$ for $i \in \Z$, assume that the quantity
\begin{equation}
    \theta_2(m) = \sup_{t} \left\Vert \bb{D}^{-1} \left( \bb{X}_t - G_t\left( \dots, Z_{t - m-1}, Z^\ast_{t-m}, Z_{t-m+1}, \dots, Z_t \right)\right) \right\Vert_{2} = \Ocal(m^{-\tau}),
    \label{assum:gmc-Xt}
\end{equation}
for some $\tau > 2$. Equation~\eqref{assum:gmc-Xt} controls the temporal dependence structure through the widely used notion of functional dependence measures \citep[see][]{wu2005nonlinear, wu2011asymptotic, wu2011gaussian, karmakar2020optimal}. The PMC condition also implies that $\Delta_2(m) = \Ocal(m^{-\tau+1})$, where 
\begin{equation*}
    \Delta_2(m) = \sup_{t} \left\Vert \bb{D}^{-1} \left(\bb{X}_t - G_t(\dots, Z^\ast_{t-m-2}, Z^\ast_{t-m-1}, Z^\ast_{t-m}, Z_{t-m+1}, \dots ,Z_t ) \right) \right\Vert.
\end{equation*}
This is a generalization of $\theta_2(m)$ that considers the changes in $\bb{X}_t$ due to the change in all historical values up to a time lag of $m$. 

Next, assume that the error terms satisfy a temporal homogeneity condition 
\begin{equation}
    \cov(\epsilon_{t_1}(s_1), \epsilon_{t_2}(s_2)) = \rho(s_1, s_2) = \Ocal\left( \Vert s_1 - s_2\Vert^{-(d+\delta)} \right),
    \label{eqn:cov-y-rho}
\end{equation}
for all $s_1, s_2 \in \Scal$, where $\rho(\cdot,\cdot)$ is a symmetric function satisfying $\rho(s, s) = 1$ for any $s \in \Scal$ and $d$ is the dimension of the spatial horizon $\Scal$. $\delta > 0$ is some small positive constant. Typically, $\rho(s_1, s_2) = \rho^\ast(\norm{s_1 - s_2})$ with $\rho^\ast(0) = 1$, and the decay rate in~\eqref{eqn:cov-y-rho} simply restricts the $\rho^\ast$ to be a polynomially decaying function of its argument. This connection immediately simplifies the conditional variance structure of the data $Y_t(s)$ by separating it into two parts: one involving temporal correlation governed by the covariate $\bb{X}_t$ and another involving spatial correlation governed by the $\rho(\cdot, \cdot)$ function, i.e., we can write $\cov(Y_{t_1}(s_1), Y_{t_2}(s_2)) = \sigma(\bb{X}_{t_1}, s_1) \sigma(\bb{X}_{t_2}, s_2) \rho(s_1, s_2).$

We have also indicated that in model \eqref{eqn:model}, the mean and covariance functions are sufficiently smooth. To add some mathematical formalism, we assume that for each $\bb{x} \in \chi$, both $\mu(\bb{x}, \cdot)$ and $\sigma(\bb{x}, \cdot)$ are elements of a Hilbert space $\Hcal$ of functions from $\Scal$ to $\R$, equipped with the inner product $\inner{f, g} = \int_{\Scal} f(s)g(s)ds$. The space $\Hcal$ can be represented as
\begin{equation*}
    \Hcal =  \left\{ \sum_{i=1}^\infty a_i b_i(s): a_i \in \R, \text{ and } \sum_{i=1}^\infty \vert a_i\vert < \infty \right\},
\end{equation*}
where $\{(b_i: \Scal \rightarrow \R)\}_{i \in \N}$ is a set of countable orthonormal basis functions of $\Hcal$. This suggests a representation of the mean and covariance functions in the form
\begin{equation}
    \mu(\bb{x}, s) = \sum_{k=1}^\infty \mu_k(\bb{x})b_k(s), \;
        \sigma(\bb{x}, s) = \sum_{k=1}^\infty \sigma_k(\bb{x})b_k(s),
        \label{eqn:mean-sigma-basis}
\end{equation}
such that $\sum_{k=1}^\infty \vert \mu_k(\bb{x})\vert$ and $\sum_{k=1}^\infty \vert \sigma_k(\bb{x})\vert$ both exist and are finite for each $\bb{x} \in \chi$. Such representation has appeared before for spatiotemporal modeling \citep{wikle1999dimension}, and it is standard in the functional PCA literature due to Karhunen-Lo\'{e}ve expansion \citep{karhunen1946,loeve1946}, where the basis functions are chosen to correspond with the eigenfunctions of covariance-operator~\citep{yao2005functional}. The basis functions in \eqref{eqn:mean-sigma-basis} have also been extensively used in dependent data modeling, in the form of bi-square functions \citep{cressie2008fixed}, splines \citep{goldsmith2012longitudinal}, wavelets \citep{chen2015optimal}, etc. We refer to \cite{cressie2022basis} and the references therein for a comprehensive exposition. The particular choice of the basis function often has to be paired with the choice of the hyperparameter $K_\epsilon$ as in Theorem~\ref{thm:pointwise-full}, to maintain a consistent level of accuracy. In a practical setting, one may use any orthogonal set of basis functions for computational convenience, and the corresponding $K_\epsilon$ is picked out via a cross-validation approach; see Section~\ref{sec:realdata} for our specific choice.

While this takes care of a smoothness restriction over the spatial locations $\Scal$, to ensure further smoothness over the set $\chi$, we assume that the component functions $\mu_k(\bb{x})$ and $\sigma_k(\bb{x})$, for any $k \in \N$, are continuous on $\chi$ with respect to the scaled distance metric introduced in  \eqref{eqn:scaled-l2-space}. Moreover, these conditions imposed on $\mu(\cdot,\cdot)$ and $\sigma(\cdot,\cdot)$ in \eqref{eqn:mean-sigma-basis} are not usually very restrictive. Since $\Scal = [0, 1]^d$ is a Hausdorff topological space, and is compact, there exist choices of universal kernel functions such that their linear span is able to uniformly approximate any continuous function \citep{micchelli2006universal}. Hence, by choosing the basis of the corresponding Hilbert space spanned by those universal kernels, it follows that the assumption of existence of such representations given in \eqref{eqn:mean-sigma-basis} becomes simply a rephrasing of a continuity-type smoothness condition for $\mu(\bb{x}, \cdot)$ and $\sigma(\bb{x},\cdot)$.

\subsection{Estimation methodology}\label{sec:mean-estimation}

We begin with the nonparametric estimation of the mean component for our nonlinear regression model: $Y_t(s) = \mu(\bb{X}_t, s) + \sigma(\bb{X}_t, s) \, \epsilon_t(s)$. Because the mean function has a decomposition as shown in \eqref{eqn:mean-sigma-basis}, it is enough to estimate the component functions $\mu_k(\bb{x})$ for each $k \in \N$.  In order to proceed further, we shall need two assumptions about the boundedness of the expected covariate and the error components, given as follows.

\begin{enumerate}[label = (A\arabic*), ref = (A\arabic*)]
    \item\label{assum:mean-sigma-bound} For each $s \in \Scal$, define $\tilde{\mu}(s) =  \E(\vert \mu(\bb{X}_0, s)\vert)$ and $\tilde{\sigma}^2(s) =  \E(\sigma^2(\bb{X}_0, s))$ where the expectations are taken over the marginal distribution of the covariate $\bb{X}_0$. Assume that both $\tilde{\mu}(s)$ and $\tilde{\sigma}^2(s)$ are continuous and are integrable on $\Scal$.
    \item\label{assum:bounded-errors} The first and second order moments of the errors $\epsilon_t(s)$ are uniformly bounded, i.e., $\sup_{t \in [0, 1], s \in \Scal} \E(\epsilon_t^2(s)) < \infty$.
\end{enumerate}

Assumption \ref{assum:bounded-errors} is a technical assumption that controls the behaviour of the errors and allows us to perform interchangeability of integral and expectation operators later on. It is usually very weak: for example, if the second order moment of $\epsilon_t(s)$ exists and $\E(\epsilon_t^2(s))$ is a continuous function of $t$ and $s$, then by compactness of $\Tcal = [0, 1]$ and $\Scal$, the assumption~\ref{assum:bounded-errors} follows naturally. As a consequence of Assumption~\ref{assum:bounded-errors}, we also have $\sup_{t \in [0, 1], s \in \Scal} \E(\vert \epsilon_t(s)\vert) < \infty$. Additionally, note that since $\tilde{\mu}(s)$ and $\tilde{\sigma}^2(s)$ as in Assumption~\ref{assum:mean-sigma-bound} are continuous on the compact set $\Scal$, they are also uniformly bounded. For notational convenience, let us indicate $M$ as a generic constant which serves as the uniform bound for all these quantities. Therefore, it follows by an application of Cauchy-Schwarz inequality that
\begin{equation*}
    \int_{\Scal} \E \left[\abs{\sigma(\bb{X}_t, s) \epsilon_t(s) b_k(s)}\right] ds
    \leqslant \int_{\Scal} M \E \left[\abs{\sigma(\bb{X}_t, s) b_k(s)}\right] ds \leqslant M \left(\int_{\Scal} b_k^2(s) ds \right) \left( \int_{\Scal} \E(\sigma^2(\bb{X}_t, s)) ds \right).
\end{equation*}
The first term $\int_{\Scal} b_k^2(s)ds$ is equal to $1$ as $b_k(s)$ is an orthonormal basis function of $\Hcal$. For the second term, note that due to the second-order stationarity of $\bb{X}_t$, it is equal to $\int_{\Scal} \E(\sigma^2(\bb{X}_0, s))ds$, which is also bounded due to Assumption~\ref{assum:mean-sigma-bound}.

Now, starting with the decomposition of the mean function as in \eqref{eqn:mean-sigma-basis}, an application of Fubini's theorem yields that
\begin{equation*}
     \int_{\Scal} Y_t(s) b_k(s) \, ds 
    = \int_{\Scal} \sum_{i=1}^\infty \mu_i(\bb{X}_t) b_i(s) b_k(s) \, ds + \int_{\Scal} \sigma(\bb{X}_t, s) \epsilon_t(s)b_k(s) \, ds.
\end{equation*}
Following the orthonormality of the basis functions, the above can be simplified as
\begin{equation}
    \int_{\Scal} Y_t(s) b_k(s)ds = \mu_k(\bb{X}_t) + \eta_{tk},
    \label{eqn:mean-reduce-pre}
\end{equation}
where $\eta_{tk} = \int_{\Scal} \sigma(\bb{X}_t, s) \epsilon_t(s)b_k(s)\, ds$. Subsequently, we may write
\begin{equation*}
    \E(\eta_{tk}) = \E\left( \int_{\Scal} \sigma(\bb{X}_t, s) \epsilon_t(s)b_k(s) \, ds \right)
    = \int_{\Scal} b_k(s) \E(\sigma(\bb{X}_t, s))\E(\epsilon_t(s)) \, ds = 0,
\end{equation*}
where we use the independence of $\epsilon_t(s)$ and $\bb{X}_t$. Also, by another application of Fubini's theorem (valid due to Assumption~\ref{assum:bounded-errors}), we get
\begin{align*}
    \cov(\eta_{t_1k}, \eta_{t_2k})
     = \E(\eta_{t_1 k} \eta_{t_2 k}) 
    & = \E\left[ \int_{\Scal}\int_{\Scal} \sigma(\bb{X}_{t_1}, s) \sigma(\bb{X}_{t_2}, s') \epsilon_{t_1}(s)\epsilon_{t_2}(s')b_k(s)b_k(s') \, ds \, ds' \right]\\
    & = \int_{\Scal^2} \E\left[\sigma(\bb{X}_{t_1}, s) \sigma(\bb{X}_{t_2}, s')\right] \rho(s, s') b_k(s) b_k(s') \, ds \, ds'.
\end{align*}

\noindent Hence, for every $k \in \N$, we get a reduced temporal model
\begin{equation}
    Y^\ast_{tk} =  \int_{\Scal} Y_t(s)b_k(s) \, ds = \mu_k(\bb{X}_t) + \eta_{tk},
    \label{eqn:model-reduced}
\end{equation}
where $\eta_{tk}$ is a zero-mean second-order error process. Now, fix any $k \in \N$. The reduced model shown in \eqref{eqn:model-reduced} is a nonlinear regression model for dependent data, but with infinite-dimensional covariates $\bb{X}_t$. Hence, ideally, one may estimate the mean component functions $\mu_k(\bb{x})$ using local constant estimators, by suitably replacing the aggregated response $Y^\ast_{tk}$ (which is unobserved) by an estimate based on the dataset $\Dcal$ as detailed below.

It is worthwhile to note that, although the errors $\epsilon_t(s)$ are independent of the covariates $\bb{X}_t$, that is not the case for the aggregated error $\eta_{tk}$, as its variance depends on $\bb{X}_t$ through the $\sigma(\cdot, \cdot)$ function. However, it still follows that 
\begin{equation*}
    \E\left( \eta_{tk} \mid \{ \bb{X}_{t'}: t' \leqslant t \} \right)
    = \int_{\Scal} b_k(s)\sigma(\bb{X}_t, s) \E\left( \epsilon_t(s) \mid \{ \bb{X}_{t'}: t' \leqslant t \} \right) \, ds
    = 0,
\end{equation*}
for any $t \in \Tcal$. As we shall show below, this conditional mean-zero property is usually enough to ensure desirable theoretical properties of the proposed nonparametric estimators. We next focus on estimating each coefficient $\mu_k(\cdot)$ in \eqref{eqn:model-reduced} using standard nonparametric methods. For that, a kernel function of the following form is introduced:
\begin{equation*}
    \Kcal^\ast \left(\bb{X}_t, \bb{x}\right) = \Kcal \left(\bb{H}_n^{-1}(\bb{X}_t - \bb{x})\right) = K\left( \Vert{\bb{V}^{-1} (\bb{X}_t - \bb{x})}\Vert/h_n\right),
\end{equation*}
where $\bb{H}_n$ is a bandwidth matrix of the form $h_n \bb{D}$, $\bb{D}$ being a diagonal matrix in line with the linear operator defined in \eqref{eqn:scaled-l2-space}. Here, we also make use the scaled distance function as in \eqref{eqn:scaled-l2-space} to ensure that the above expressions are well-defined. Therefore, it is possible to use a univariate kernel function $K(\cdot)$ that acts upon the norm of the scaled distance vector scaled by an appropriate choice of the bandwidth $h_n$. The specific conditions on the choice of the kernel function $K(\cdot)$ are discussed in \Cref{sec:kernel-choice}.

We can now provide an estimator of $\mu_k(x)$ as 
\begin{equation}
    \widehat{\mu}_k(x) 
    =  \dfrac{\sum_{i=1}^n \Kcal(\bb{H}_n^{-1}(\bb{x} - \bb{X}_{t_i}) ) \widehat{Y}_{t_ik}^\ast }{\sum_{i=1}^n \Kcal(\bb{H}_n^{-1}(\bb{x} - \bb{X}_{t_i}) ) }
    = \dfrac{\sum_{i=1}^n K\left( \norm{\bb{H}_n^{-1}(\bb{x} - \bb{X}_{t_i})} \right) \widehat{Y}_{t_i k}^\ast }{\sum_{i=1}^n K\left(\norm{\bb{H}_n^{-1}(\bb{x} - \bb{X}_{t_i})} \right) }
    \label{eqn:mean-estimator}
\end{equation}
where $\widehat{Y}_{t_i k}^\ast$ is an estimate of $Y_{t_i k}^\ast$ based on dataset $\Dcal$. Although the usual Monte Carlo method can be used to estimate the aggregated response as $Y_{t_i k}^\ast \approx \frac{1}{n_i}\sum_{j=1}^{n_i}Y_{t_i}(s_{t_i j}) b_k(s_{t_i j})$, it has one serious drawback. If we assume that the choice of the spatial points $\{ s_{t_i1}, \dots, s_{t_i n_i} \}$ are deterministic, then it is possible that they may concentrate on one region of space for any particular time period. The simple Monte Carlo procedure in this case will provide a biased estimate. On the other hand, if we assume that the choices of spatial points are uniformly at random, then the covariance of the $Y$-values in these locations does not decay sufficiently fast with the distance between two spatial points. Due to the presence of this covariance, the Monte Carlo estimate may end up having infinite variance, thus leading to non-consistency of the estimator. Gathering insights from the functional data analysis, we subdivide the spatial horizon $\Scal$ into small grids, and choose a representative point on each hypercube of the grid. Then, the Monte-Carlo procedure is applied for this reduced set of representative points instead of all the points. This ensures a uniform distribution of the considered spatial points and also allow us to control the covariance between representative points of two distant grids. The final algorithm to obtain $\widehat{Y}_{t_i k}^\ast$ is illustrated in Algorithm~\ref{alg:modified-montecarlo}. A correctness guarantee that this procedure produces a consistent estimate of $Y_{t_i k}^\ast$ is given in Proposition~\ref{lem:Yk-convergence}.

\begin{algorithm}[!ht]
    \DontPrintSemicolon
    \KwInput{$t_i \in \Tcal$, $\{ s_{t_i j} \}_{j=1}^{n_i} \subseteq \Scal$, Basis index $k \in \Z^+$ and response $\{ Y_{t_i}(s_{t_i j})\}_{j=1}^{n_i}$}
    Let $\epsilon^\ast_{t_i} \leftarrow \inf\left\{ \epsilon > 0, \text{ and } \Scal \subseteq \cup_{j=1}^{n_i} B(s_{t_i j}, \epsilon) \right\}$, the ``effective spatial resolution''.\;
    Let $H_1, H_2, \dots, H_{r_i}$ be hypercubes from the grid of $[0, 1]^d$ with diameters equal to $\epsilon_{t_i}^\ast$\;
    Let $s^c_l \leftarrow \text{Center}(H_l)$ for each $l = 1, \dots, r_i = (\epsilon^\ast_{t_i} / \sqrt{d})^{-d}$.\;
    Let $s_{t_i,j_l} \leftarrow \argmin_{j} \Vert s^c_l - s_{t_i j}\Vert$ for each $l = 1, 2, \dots, r_i$.\;
    Obtain the final estimate as
    \vspace{-0.2in}\begin{equation}
        \widehat{Y}_{t_i k}^\ast = \dfrac{1}{r_i} \sum_{l=1}^{r_i} Y_{t_i}(s_{t_i j_l})b_k(s_{t_i j_l})
        \label{eqn:Yk-estimate}
    \end{equation}
    \vspace{-0.2 in}
    \caption{Modified Monte-Carlo procedure for estimating $Y_{t_i,k}^\ast$}
    \label{alg:modified-montecarlo}
\end{algorithm}

\vspace{-0.3 in}
\subsection{Choice of the Kernel and the Bandwidth}\label{sec:kernel-choice}

Before diving into technical results on the aforementioned estimation procedure, it is meaningful to state the underlying assumptions concerning the choice of the kernel and the bandwidth parameters. We restrict our choice to the type-I kernels as motivated by~\cite{hong2020nonparametric}. The kernel function $K: \R^+ \rightarrow \R^+$ is bounded and has bounded support, i.e.,  
\begin{equation}
    \int K(u)du = 1, \ C_1\bb{1}_{[0, \lambda]}(u) \leqslant K(u) \leqslant C_2\bb{1}_{[0, \lambda]}(u), 
    \label{eqn:kernel-bounds}
\end{equation}
for some positive finite constants $C_1, C_2$ and $\lambda$. The uniform kernel is one such example of a type-I kernel. Due to this, we can now represent the small ball probability using
\begin{equation}
    \phi_{\bb{x}}(h_n u) =  \prob\left(\Vert { \bb{H}_n^{-1}(\bb{X}_t - \bb{x})}\Vert \leqslant u \right) = P(\Vert{ \bb{D}^{-1}(\bb{X}_t - \bb{x})}\Vert \leqslant h_n u) = P(\bb{X}_t \in B(\bb{x}, h_n u)),
    \label{eqn:small-ball-prob}
\end{equation}
where $B(\bb{x}, h_n u)$ is the infinite-dimensional hyper ellipsoid centered at $\bb{x} \in \chi \subseteq \R^\infty$ with different axis lengths $2h_n\eta_j u$ at the $j^{th}$ direction. Note that, due to the stationarity of $\bb{X}_t$, the quantity $\phi_{\bb{x}}(h_n \lambda)$ is free of the time index $t$. Furthermore, regarding the choice of the bandwidth, we make the following assumption.

\begin{enumerate}[label = (A\arabic*), ref=(A\arabic*)]
    \setcounter{enumi}{2}    
    \item\label{assum:bandwidth-decay} The sequence of bandwidth parameter $h_n$ satisfies $h_n \rightarrow 0$, $\phi_{\bb{x}}(h_n u) \rightarrow 0$, and $nh_n^2\phi_{\bb{x}}^2(h_n u) \rightarrow \infty$ as $n \rightarrow \infty$ for any fixed $u > 0$ where $\phi_{\bb{x}}(h_n u)$ is the small-ball probability defined in \eqref{eqn:small-ball-prob}. Additionally, the joint small-ball probability defined below in \eqref{eqn:joint-ball-prob} satisfies $\phi_{\bb{x},\vert t-t'\vert}(h_n u, h_n u) = \Ocal(\phi_{\bb{x}}^2(h_nu))$ for any $t \neq s$.
\end{enumerate}

Assumption~\ref{assum:bandwidth-decay} ensures that while the size of the neighborhood and the small-ball probability $\phi_{\bb{x}}(h_n u)$ diminishes asymptotically, the number of observations in this neighborhood $B(\bb{x},h_n u)$ increases at a rate more than $n^{1/2}h_n^{-1}$. In the second part of Assumption~\ref{assum:bandwidth-decay}, we consider the joint small-ball probability for any $t \neq t'$,
\begin{equation}
    \phi_{\bb{x}, \vert t-t'\vert}(h_n u,h_n v) =  
    \prob\left( \Vert{\bb{H}_n^{-1}(\bb{X}_t - \bb{x}) }\Vert \leqslant u, \Vert{\bb{H}_n^{-1}(\bb{X}_{t'} - \bb{x}) }\Vert \leqslant v \right),
    \label{eqn:joint-ball-prob}
\end{equation}
which again, by the second-order stationarity of $\bb{X}_t$, depends only on the difference in time indices $\vert t- t'\vert$. By bounding this joint small-ball probability, Assumption~\ref{assum:bandwidth-decay} ensures only a localized dependence in the joint distribution of $(\bb{X}_t, \bb{X}_{t'})$ for $t \neq t'$. This assumption is the same as Assumption B.5 of \cite{hong2020nonparametric}, which, as we show below, helps to control the cross-product moments of the kernel functions. An obvious consequence of Assumption~\ref{assum:bandwidth-decay} is that as $n \to \infty$,
\begin{equation}
    n \phi_{\bb{x}}(h_n u) = \sqrt{nh^2_n \phi^2_{\bb{x}}(h_n u)} \times \dfrac{\sqrt{n}}{h_n} \rightarrow \infty.
    \label{eqn:bandwidth-density-inf}
\end{equation}

The condition~\ref{assum:bandwidth-decay} along with type-I kernel is key to obtaining various insights on the asymptotic nature of the kernel function. Starting with the expected value of the kernel function, note that $E( \Kcal( \bb{H}_n^{-1}(\bb{X}_t - \bb{x}))) = \E( K(\Vert  \bb{H}_n^{-1}(\bb{X}_t - \bb{x}) \Vert)) = \int K(u) \, d\phi_{\bb{x}}(h_n u)$, which we denote as $\psi_{\bb{x},1}(h_n)$ below. It is also easy to see that $\E\left( \Kcal^2( \bb{H}_n^{-1}(\bb{X}_t - \bb{x})) \right)  = \int K^2(u) \, d\phi_{\bb{x}}(h_n u)$, denoted as $\psi_{\bb{x}, 2}(h_n, 0).$ Due to the boundedness of the kernel function, the quantity $\psi_{\bb{x},1}(h_n)$ decays exactly at the order of $\phi_{\bb{x}}^j(h_n\lambda)$ for $j = 1, 2$. Therefore, we have,
\begin{equation}
    \dfrac{1}{\phi_{\bb{x}}(h_n\lambda)}\E\left[ \Kcal^j( \bb{H}_n^{-1}(\bb{X}_t - \bb{x})) \right] \rightarrow \xi_j,\label{eqn:kernel-convergence}
\end{equation}
as $n \rightarrow \infty$ for some $\xi_1 > 0, \xi_2 > 0$ (cf. Corollary 1 of~\cite{hong2020nonparametric}).

Moving on to the joint distributions, we consider the cross moment of kernel functions
\begin{align}
    &\E\left( \Kcal( \bb{H}_n^{-1}(\bb{X}_t - \bb{x})) \Kcal( \bb{H}_n^{-1}(\bb{X}_{t'} - \bb{x})) \right)
    =  \E\left( K\left(\norm{ \bb{H}_n^{-1}(\bb{X}_t - \bb{x}) }\right) K\left(\norm{ \bb{H}_n^{-1}(\bb{X}_{t'} - \bb{x}) }\right) \right)\nonumber\\
    & \hspace{0.5in} = \int K(u)K(v) d\phi_{\bb{x}, |t-t'|}(h_n u, h_n v) =  \psi_{\bb{x}, 2}(h_n, |t-t'|) \leqslant C_2^2 C \phi_{\bb{x}}^2(h_n \lambda) 
    \rightarrow 0, \label{eqn:kernel-cross-bound}
\end{align}
as $n \rightarrow \infty$ due to Assumption~\ref{assum:bandwidth-decay}. Here, $C > 0$ is some arbitrary constant free of $\bb{x}$. While the above derivation provides a uniform rate of decay for the cross-order moments of the kernel functions, we expect the decay rate to be $\xi_1^2\phi_{\bb{x}}^2(h_n\lambda) + g(\vert t - t'\vert)$ for some function $g(\cdot)$ that controls the covariance between the kernel function evaluated at $(\bb{X}_t - x)$ and $(\bb{X}_{t'} - \bb{x})$ as a function of the lag $\vert t-t'\vert$. In view of the PMC dependence structure as in~\eqref{assum:gmc-Xt}, it turns out that a quantitative upper bound to a Martingale difference of any measurable Lipschitz function of $\bb{X}_t$ is possible, as illustrated in the following result.

\begin{prop}\label{lem:f-Xk-bound}
    Let $f : \chi \rightarrow \R$ be a function such that it is measurable and Lipschitz (under the scaled norm given in \eqref{eqn:scaled-l2-space}) with a constant $C$, i.e., $\abs{f(\bb{x}) - f(\bb{y})} \leqslant C\Vert{\bb{D}^{-1}(\bb{x} - \bb{y})}\Vert$ for any $\bb{x}, \bb{y} \in \chi$. Then, with $\Fcal_{t}$ denoting the $\sigma$-field generated by $\{ \bb{X}_{t'}: t' \leqslant t \}$, 
    \begin{equation}
        \left\vert \E\left( f(\bb{X}_{t_i}) \mid \Fcal_{t_i-m} \right) - \E(f(\bb{X}_{t_i})) \right\vert \leqslant C \Delta_2(m)/\sqrt{2}.
        \label{eqn:f-Xk-bound}
    \end{equation}
\end{prop}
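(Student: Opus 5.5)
The plan is a coupling argument. We represent $\E(f(\bb{X}_{t_i}) \mid \Fcal_{t_i-m})$ via the innovations, replace all innovations up to lag $m$ by an independent copy, and then use the Lipschitz property together with the definition of $\Delta_2(m)$.

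First, work with the innovation filtration $\mathcal{G}_{t} = \sigma(\dots, Z_{t-1}, Z_t)$. Since $\bb{X}_{t'} = G_{t'}(\dots, Z_{t'})$, we have $\Fcal_{t} \subseteq \mathcal{G}_t$. Define the coupled variable
\[
\bb{X}^\ast_{t_i,m} = G_{t_i}(\dots, Z^\ast_{t_i-m-1}, Z^\ast_{t_i-m}, Z_{t_i-m+1}, \dots, Z_{t_i}).
\]
This variable is independent of $\mathcal{G}_{t_i-m}$, because it depends only on the starred copies and on $Z_{t_i-m+1}, \dots, Z_{t_i}$. It also has the same law as $\bb{X}_{t_i}$. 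Hence
\[
\E\bigl(f(\bb{X}^\ast_{t_i,m}) \mid \mathcal{G}_{t_i-m}\bigr) = \E f(\bb{X}_{t_i}).
\]
Subtracting, we obtain
\[
\E\bigl(f(\bb{X}_{t_i}) \mid \mathcal{G}_{t_i-m}\bigr) - \E f(\bb{X}_{t_i}) = \E\bigl(f(\bb{X}_{t_i}) - f(\bb{X}^\ast_{t_i,m}) \mid \mathcal{G}_{t_i-m}\bigr).
\]
Conditioning further down to $\Fcal_{t_i-m}$ via the tower property gives the same identity with $\Fcal_{t_i-m}$ in place of $\mathcal{G}_{t_i-m}$.

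Second, apply conditional Jensen and then the Lipschitz bound:
\[
\bigl|\E\bigl(f(\bb{X}_{t_i}) - f(\bb{X}^\ast_{t_i,m}) \mid \Fcal_{t_i-m}\bigr)\bigr| \le C\,\E\bigl(\norm{\bb{D}^{-1}(\bb{X}_{t_i} - \bb{X}^\ast_{t_i,m})} \mid \Fcal_{t_i-m}\bigr).
\]
Taking $L^2$ norms, the $L^2$ norm of the conditional expectation is bounded by $C\Delta_2(m)$. The factor $1/\sqrt{2}$ should come from symmetrization. Write $W = f(\bb{X}_{t_i}) - f(\bb{X}^\ast_{t_i,m})$ and take its conditional expectation. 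A cleaner route is to observe that $\E(f(\bb{X}_{t_i}) \mid \mathcal{G}_{t_i-m})$ and its analogue built from the starred past, $\E(f(\bb{X}^\ast_{t_i,m}) \mid \mathcal{G}^\ast_{t_i-m})$, are i.i.d. Their difference $U$ therefore satisfies $\E U^2 = 2\var(\E(f \mid \mathcal{G}_{t_i-m}))$. By Jensen, $\E U^2 \le \E W^2 \le C^2\Delta_2(m)^2$. This gives $\norm{\E(f(\bb{X}_{t_i}) \mid \mathcal{G}_{t_i-m}) - \E f}_2 \le C\Delta_2(m)/\sqrt{2}$, and the same holds after projecting onto $\Fcal_{t_i-m}$.

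The main obstacle is the interpretation of the left-hand side of \eqref{eqn:f-Xk-bound}, which is a random variable. The argument above naturally yields an $L^2$ (hence also $L^1$) bound, not an almost-sure pointwise bound. I would therefore state and prove the inequality in the $\norm{\cdot}_2$ sense, which is what the later variance calculations for kernel cross-moments actually use.

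A second minor point needs checking: that $\bb{X}^\ast_{t_i,m}$ matches exactly the coupling in the definition of $\Delta_2(m)$, i.e. that the replaced block is $Z_{t_i-m}$ and earlier. If the indexing is off by one, I would adjust $m$ to $m\pm 1$, which is harmless since $\Delta_2(m)=\Ocal(m^{-\tau+1})$.
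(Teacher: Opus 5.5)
Your proof is correct. It follows a genuinely different route from the paper's, and yours is the one that holds up.

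\textbf{Comparison with the paper's proof.} The paper first applies conditional Jensen:
$(\E(f(\bb{X}_{t_i})\mid\Fcal_{t_i-m})-\E f(\bb{X}_{t_i}))^2\le\E((f(\bb{X}_{t_i})-\E f(\bb{X}_{t_i}))^2\mid\Fcal_{t_i-m})$.
It then rewrites the right side as $\tfrac12\E((f(\bb{X}_{t_i})-f(\bb{X}'_{t_i}))^2\mid\Fcal_{t_i-m})$ for an ``independent copy'' $\bb{X}'_{t_i}$, treats that copy as if it were the $m$-coupled variable, equates $\Fcal_{t_i-m}$ with the innovation $\sigma$-field, and finally bounds the conditional expectation by $\Delta_2^2(m)$ almost surely.

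These steps fail. The halving identity holds only unconditionally. A deterministic $L^2$ quantity does not dominate a conditional expectation pointwise. In fact, taking expectations in that chain would give $\var(f(\bb{X}_{t_i}))\le C^2\Delta_2^2(m)/2\to 0$, which would make every Lipschitz functional of $\bb{X}_{t_i}$ degenerate.

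Your coupling avoids all of this. Because $\bb{X}^\ast_{t_i,m}$ is independent of $\mathcal{G}_{t_i-m}$ but shares $Z_{t_i-m+1},\dots,Z_{t_i}$ with $\bb{X}_{t_i}$, the centred conditional mean equals $\E(f(\bb{X}_{t_i})-f(\bb{X}^\ast_{t_i,m})\mid\cdot)$ exactly. The factor $1/\sqrt2$ then comes legitimately from the i.i.d. pair $\E(f(\bb{X}_{t_i})\mid\mathcal{G}_{t_i-m})$ and $\E(f(\bb{X}^\ast_{t_i,m})\mid\mathcal{G}^\ast_{t_i-m})$. You also handle $\Fcal_{t_i-m}\subseteq\mathcal{G}_{t_i-m}$ correctly through the $L^2$-contraction of conditional expectation.

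Two small points. First, write out why your ``by Jensen'' step applies: $U=\E(W\mid\mathcal{G}_{t_i-m}\vee\mathcal{G}^\ast_{t_i-m})$, because the shared innovations are independent of that joint $\sigma$-field. Second, your coupling matches the definition of $\Delta_2(m)$ exactly (the replaced block is $Z_{t_i-m}$ and earlier), so no $m\pm1$ adjustment is needed.

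\textbf{The almost-sure statement is false.} Proving the bound in $L^2$ rather than almost surely is necessary, not just convenient, because the pointwise version of \eqref{eqn:f-Xk-bound} is false. Take:
\begin{itemize}
\item $\bb{X}_t=(Y_t,0,0,\dots)$ with $\zeta_1=1$;
\item $Y_t=Z_t+aZ_{t-1}$ with $0<\abs{a}<1$ and $Z_j$ i.i.d. $N(0,1)$;
\item $f(\bb{x})=x_1$ (so $C=1$) and $m=1$.
\end{itemize}
Invertibility gives $\Fcal_{t-1}=\mathcal{G}_{t-1}$, so $\E(Y_t\mid\Fcal_{t-1})-\E Y_t=aZ_{t-1}$. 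This is unbounded, whereas $C\Delta_2(1)/\sqrt2=\abs{a}$. A causal, invertible MA$(\infty)$ with coefficients $(j+1)^{-3}$ satisfies PMC with $\tau=3$ and produces the same failure for every $m$.

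In the MA(1) example, $\norm{aZ_{t-1}}_2=\abs{a}$, so your $L^2$ bound holds with equality and the constant $1/\sqrt2$ is sharp. The $L^2$ (hence $L^1$) form is also what the later arguments need. The proof of Lemma~\ref{lem:muk2-conv} uses only $\E\abs{\E(U_i\mid\Fcal_{t_i-m})}$. In the big-block factorization, the conditional characteristic-function discrepancy can be controlled inside an expectation.
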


We note that, to make use of Proposition~\ref{lem:f-Xk-bound}, we need to apply Lipschitz-type smoothness restrictions on the kernel function. This is formalized below.

\begin{enumerate}[label = (A\arabic*), ref=(A\arabic*)]
    \setcounter{enumi}{3}
    \item\label{assum:kernel-mu-lipschitz} For each $\bb{x} \in \chi$, the function $f_{\bb{x}} : \chi \rightarrow \R$ given by $f_{\bb{x}}(\bb{y}) = \Kcal( \bb{H}_n^{-1}(\bb{x}-\bb{y}) )\mu_k(\bb{y})$     is Lipschitz over $\chi$ in the scaled $L^2$-metric given in \eqref{eqn:scaled-l2-space}, where $\mu_k$ is the component functions of $\mu$ as given in \eqref{eqn:mean-sigma-basis}. The Lipschitz constant is also uniform over $\bb{x} \in \chi$. The same holds for the kernel-weighted variance components, i.e., for $g_{\bb{x}}(\bb{y}) =  \Kcal( \bb{H}_n^{-1}(\bb{x}-\bb{y}) )\sigma_k(\bb{y})$ for any fixed $k \in \Z^+$ and uniformly over any $\bb{x} \in \chi$.
\end{enumerate}

This results in an improved bound on the autocovariance between the kernel functions, as illustrated in the following corollary, whose proof is provided in the supplementary material.

\begin{corr}\label{corr:kernel-covariance-bound}
    For sufficiently large $n$, for some constant $C > 0$ and $C_2$ as in~\eqref{eqn:kernel-bounds}, we have
    \begin{multline}
        \cov\left( \Kcal(\bb{H}_n^{-1}(\bb{X}_t - \bb{x})), \Kcal(\bb{H}_n^{-1}(\bb{X}_{t'} - \bb{x})) \right)
        \leqslant \dfrac{\sqrt{2} C C_2^2}{9h_n^2\lambda^2} \Delta_2(\vert t - t'\vert) (1 - \E\left( \phi_{\bb{X}_0}(3h_n\lambda) \right))\\ 
        + C_2^2 (1 + C) \phi_{\bb{X}}^2(h_n \lambda) \E\left( \phi_{\bb{X}_0}(3h_n\lambda) \right).
        \label{eqn:Kt-cov-bound}
    \end{multline}
\end{corr}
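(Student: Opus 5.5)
Let $\Kcal_t = \Kcal(\bb{H}_n^{-1}(\bb{X}_t - \bb{x}))$ and assume without loss of generality $t' < t$ with $m = \vert t - t'\vert$. The plan is to write the covariance through conditioning on the past,
\[
\cov(\Kcal_t, \Kcal_{t'}) = \E\left[ \Kcal_{t'} \left( \E(\Kcal_t \mid \Fcal_{t-m}) - \E(\Kcal_t) \right) \right],
\]
which holds because $\Kcal_{t'}$ is $\Fcal_{t-m}$-measurable. Proposition~\ref{lem:f-Xk-bound} then controls the inner martingale-difference term, provided $\Kcal_t$ is a Lipschitz function of $\bb{X}_t$. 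A discontinuous type-I kernel (e.g.\ uniform) is not Lipschitz, so the first step is to replace $K$ by a Lipschitz surrogate. I would take a piecewise-linear envelope $\tilde K$ with $\tilde K = C_2$ on $[0,\lambda]$, decreasing linearly to $0$ on $[\lambda, 3\lambda]$. Then $K \leqslant \tilde K$, and $\bb{y} \mapsto \tilde K(\norm{\bb{D}^{-1}(\bb{y}-\bb{x})}/h_n)$ is Lipschitz in the scaled metric with constant of order $C_2/(h_n \lambda)$. This is the source of the $h_n^{-1}\lambda^{-1}$ scaling. The squared factor $1/(9h_n^2\lambda^2)$ suggests the argument is applied to both kernels, or to the product, with ramp width $3\lambda$; I would tune the ramp to reproduce exactly that constant, absorbing any discrepancy into $C$.

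Next I would split according to whether $\bb{X}_t$ and $\bb{X}_{t'}$ lie in the enlarged ball $B(\bb{x}, 3h_n\lambda)$. The event is integrated over the location of $\bb{X}_0$, which is where the factors $\E(\phi_{\bb{X}_0}(3h_n\lambda))$ and $1 - \E(\phi_{\bb{X}_0}(3h_n\lambda))$ come from.

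\emph{Case 1: both points in the enlarged ball.} Here I bound the covariance crudely by the cross moment $\psi_{\bb{x},2}(h_n, m) + \psi_{\bb{x},1}^2(h_n)$. Using \eqref{eqn:kernel-cross-bound} and \eqref{eqn:kernel-convergence} together with Assumption~\ref{assum:bandwidth-decay}, this is at most $C_2^2(1+C)\phi_{\bb{x}}^2(h_n\lambda)$. Multiplying by the probability weight $\E(\phi_{\bb{X}_0}(3h_n\lambda))$ gives the second term of \eqref{eqn:Kt-cov-bound}.

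\emph{Case 2: the complementary event.} The surrogate kernels are smooth, so I apply Proposition~\ref{lem:f-Xk-bound} to the Lipschitz product function, using the uniform Lipschitz property in the spirit of Assumption~\ref{assum:kernel-mu-lipschitz}. The conditional deviation is then at most $\mathrm{Lip}\cdot\Delta_2(m)/\sqrt{2}$, and the outer factor $\Kcal_{t'}$ is bounded by $C_2$. After rearranging the $\sqrt{2}$, this yields $\sqrt{2}CC_2^2\Delta_2(m)/(9h_n^2\lambda^2)$ times the complementary probability $1-\E(\phi_{\bb{X}_0}(3h_n\lambda))$.

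The main obstacle is making this conditioning split rigorous. The event depends on $\bb{X}_t$ itself, so it is not $\Fcal_{t-m}$-measurable, and Proposition~\ref{lem:f-Xk-bound} cannot be applied naively on a restricted event. I would try to handle this by smoothing the indicator with the same Lipschitz ramp, so that the decomposition becomes an additive split of Lipschitz functions rather than a restriction to an event. I would also use ``sufficiently large $n$'' to replace $\psi$ quantities by their $\phi$ limits via \eqref{eqn:kernel-convergence}. Getting the exact constants is secondary; I would absorb discrepancies into the generic $C$.
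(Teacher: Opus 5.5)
There are two genuine gaps in your plan.

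First, the surrogate never transfers back to $K$. Proposition~\ref{lem:f-Xk-bound} applied to $\tilde K$ controls $\E(\tilde K_t\mid\Fcal_{t-m})-\E\tilde K_t$. Your identity needs $\E(\Kcal_t\mid\Fcal_{t-m})-\E\Kcal_t$, and $K\leqslant\tilde K$ gives no bound on it (nor on $\cov(\Kcal_t,\Kcal_{t'})$), because centred quantities are not monotone in the kernel. A two-sided sandwich $K^-\leqslant K\leqslant\tilde K$ only trades this for an annulus term $\E(\tilde K_t-K^-_t)$ that does not decay in $m$.

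Second, the split is on the wrong event. Since $\Kcal_t\Kcal_{t'}\neq 0$ forces $\bb{X}_t,\bb{X}_{t'}\in B(\bb{x},h_n\lambda)$, on your event $A=\{\bb{X}_t,\bb{X}_{t'}\in B(\bb{x},3h_n\lambda)\}$ you have $\E[\Kcal_t\Kcal_{t'}\ind{A}]=\E[\Kcal_t\Kcal_{t'}]$. Bounding this by $C_2^2(1+C)\phi_{\bb{x}}^2(h_n\lambda)$ and then multiplying by a probability therefore double-counts. On $A^c$ at least one kernel vanishes, so the centred product reduces to $c^2$, $-c(\Kcal_t-c)$ or $-c(\Kcal_{t'}-c)$ with $c=\E\Kcal_0$. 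Your split thus reproduces only the trivial bound \eqref{eqn:kernel-cross-bound} and never produces a $\Delta_2(m)$ term. Also, $\prob(A)=\phi_{\bb{x},m}(3h_n\lambda,3h_n\lambda)$, not $\E\phi_{\bb{X}_0}(3h_n\lambda)$. In the paper the latter stands for the probability that $\bb{X}_t$ and $\bb{X}_{t'}$ are within $3h_n\lambda$ of \emph{each other}.

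The idea you are missing is that no smoothing is needed. For a type-I kernel, if $\norm{\bb{D}^{-1}(\bb{y}-\bb{y}')}>\delta>2h_n\lambda$, then at most one of $K(\norm{\bb{H}_n^{-1}(\bb{y}-\bb{x})})$ and $K(\norm{\bb{H}_n^{-1}(\bb{y}'-\bb{x})})$ is nonzero. Their difference is therefore at most $C_2\leqslant (C_2/\delta)\norm{\bb{D}^{-1}(\bb{y}-\bb{y}')}$, so the original kernel is Lipschitz on separated pairs. The paper splits on $\{\norm{\bb{D}^{-1}(\bb{X}_t-\bb{X}_{t'})}>\delta\}$. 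It uses this constant with Proposition~\ref{lem:f-Xk-bound} on the far event and \eqref{eqn:kernel-cross-bound} on the near event, writes the near-event probability as $\E\phi_{\bb{X}_0}(\delta)$, and sets $\delta=3h_n\lambda$. The displayed $1/(9h_n^2\lambda^2)=1/\delta^2$ merely weakens the $1/\delta$ actually obtained (valid once $\delta<1$), so there is no ramp to tune.

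Be aware, however, that the obstacle you flag is real and the paper does not resolve it:
\begin{itemize}
\item it applies Proposition~\ref{lem:f-Xk-bound} under an indicator that depends on $\bb{X}_t$;
\item its identity $\prob(\norm{\bb{D}^{-1}(\bb{X}_t-\bb{X}_{t'})}\leqslant\delta)=\E\phi_{\bb{X}_0}(\delta)$ treats $\bb{X}_t$ and $\bb{X}_{t'}$ as independent;
\item its near-event bound double-counts just as your Case~1 does, since that event contains $\{\bb{X}_t,\bb{X}_{t'}\in B(\bb{x},h_n\lambda)\}$.
\end{itemize}

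A rigorous version of your conditioning idea is a coupling rather than a Lipschitz argument. Replacing all innovations up to time $t'$ gives $\bb{X}_t^\ast$, equal in law to $\bb{X}_t$ and independent of $\Fcal_{t'}$, with $\E\norm{\bb{D}^{-1}(\bb{X}_t-\bb{X}_t^\ast)}\leqslant\Delta_2(m)$. Then $\Kcal_t\leqslant C_2\ind{\{\bb{X}_t^\ast\in B(\bb{x},3h_n\lambda)\}}+C_2\ind{\{\norm{\bb{D}^{-1}(\bb{X}_t-\bb{X}_t^\ast)}>2h_n\lambda\}}$, and Markov's inequality gives $\cov(\Kcal_t,\Kcal_{t'})\leqslant C_2^2\phi_{\bb{x}}(h_n\lambda)\phi_{\bb{x}}(3h_n\lambda)+C_2^2\Delta_2(m)/(2h_n\lambda)$. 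Its $\Delta_2(m)$ term is of the stated type, but its remaining term is not the one in \eqref{eqn:Kt-cov-bound}.
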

\noindent A related inequality can be found in inequality (34) of~\cite{xiao2011asymptotic}. We shall make use of \eqref{eqn:kernel-convergence} and \eqref{eqn:Kt-cov-bound} repeatedly while deriving the asymptotic properties of the nonparametric estimators of $\mu(\cdot, \cdot)$ and $\sigma(\cdot, \cdot)$.

\section{Asymptotic properties}\label{sec:theoretical}

\subsection{Consistency of the estimator of mean}\label{sec:mean-consistency}

The first step in establishing various asymptotic properties of the proposed estimator would be to ensure that it is consistent. In this subsection, we derive two key theorems related to the consistency of the mean estimator. However, before proceeding with them, we present a short result that provides correctness guarantees for the Modified Monte-Carlo procedure illustrated in Algorithm~\ref{alg:modified-montecarlo} and provides a quantitative bound on the error.

\begin{prop}\label{lem:Yk-convergence}
    Fix any $t_i \in \Tcal, k \in \Z^+$. Under assumptions~\ref{assum:mean-sigma-bound}, we have
    \begin{equation}
        \E\abs{\widehat{Y}^\ast_{t_i k} - Y^\ast_{t_i k}} = \Ocal\left( \max\left\{ (\epsilon^\ast_{t_i})^d, \sup_{\Vert s - s'\Vert \leqslant \epsilon^\ast_{t_i}} \vert b_k(s) -b_k(s')\vert,  \sup_{\Vert s - s'\Vert \leqslant \epsilon^\ast_{t_i}} \vert \widetilde{\mu}(s) - \widetilde{\mu}(s') \vert \right\}  \right),
        \label{eqn:Yk-convergence-errorbound}
    \end{equation}
    where $\epsilon_{t_i}^\ast$ and $\widehat{Y}^\ast_{t_i k}$ is the estimated spatially-aggregated response $Y^\ast_{t_i k}$ using the modified Monte-Carlo procedure (see \eqref{eqn:Yk-estimate}), and $\widetilde{\mu}(s)$ is as defined in Assumption~\ref{assum:mean-sigma-bound}. Let us denote the right-hand side of \eqref{eqn:Yk-convergence-errorbound} as $\delta_k(\epsilon_{t_i}^\ast)$.
\end{prop}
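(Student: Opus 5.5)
The plan is to treat $\widehat{Y}^\ast_{t_i k}$ as a Riemann sum for $Y^\ast_{t_i k}=\int_{\Scal} Y_{t_i}(s)b_k(s)\,ds$ over the grid $H_1,\dots,H_{r_i}$ built in Algorithm~\ref{alg:modified-montecarlo}. Each hypercube has volume $\abs{H_l} = (\epsilon^\ast_{t_i}/\sqrt{d})^d = 1/r_i$. We may therefore write
\begin{equation*}
    \widehat{Y}^\ast_{t_i k} - Y^\ast_{t_i k} = \sum_{l=1}^{r_i} \int_{H_l} \left[ Y_{t_i}(s_{t_i j_l}) b_k(s_{t_i j_l}) - Y_{t_i}(s) b_k(s) \right] ds .
\end{equation*}
By the definition of $\epsilon^\ast_{t_i}$, every point of $\Scal$, and in particular the centre $s^c_l$, lies within distance $\epsilon^\ast_{t_i}$ of some observed site. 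Hence $\Vert s_{t_i j_l} - s^c_l\Vert \leqslant \epsilon^\ast_{t_i}$, and for all $s \in H_l$ we get $\Vert s - s_{t_i j_l}\Vert \leqslant 2\epsilon^\ast_{t_i}$ (up to a constant factor, which we absorb into the $\Ocal$). Each integrand is then a difference of the product $Y b_k$ at two points that are $\Ocal(\epsilon^\ast_{t_i})$ apart.

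Next, insert and subtract $Y_{t_i}(s_{t_i j_l}) b_k(s)$ inside each integrand. This splits the error into two pieces:
\begin{equation*}
    T_1 = \sum_l \int_{H_l} Y_{t_i}(s_{t_i j_l})\left[b_k(s_{t_i j_l}) - b_k(s)\right] ds, \qquad
    T_2 = \sum_l \int_{H_l} b_k(s)\left[Y_{t_i}(s_{t_i j_l}) - Y_{t_i}(s)\right] ds .
\end{equation*}
For $T_1$, take expectations of absolute values. We have $\E\abs{Y_{t_i}(s')} \leqslant \widetilde{\mu}(s') + (\E \sigma^2(\bb{X}_0,s'))^{1/2}(\E\epsilon^2)^{1/2} \leqslant M$, using Assumption~\ref{assum:mean-sigma-bound}, Assumption~\ref{assum:bounded-errors} and the Cauchy--Schwarz argument already given in the text. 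Together with $\sum_l \abs{H_l}=1$, this yields $\E\abs{T_1} \leqslant M \sup_{\Vert s-s'\Vert\leqslant \epsilon^\ast_{t_i}}\abs{b_k(s)-b_k(s')}$.

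For $T_2$, decompose $Y = \mu + \sigma\epsilon$. The mean part is the only place where $\widetilde{\mu}$ can enter. Bounding $\E\abs{\mu(\bb{X}_{t_i},s')-\mu(\bb{X}_{t_i},s)}$ directly by $\abs{\widetilde{\mu}(s')-\widetilde{\mu}(s)}$ is not literally valid, so I would either bound $\E\abs{\int b_k[\mu(\cdot,s_l)-\mu(\cdot,s)]}$ through the modulus of continuity of $\E\abs{\mu(\bb{X}_0,\cdot)}$, or split $\mu = \mu^+ - \mu^-$ and use monotonicity. In either case the goal is the third term of \eqref{eqn:Yk-convergence-errorbound}, with $\abs{b_k}$ bounded on the compact $\Scal$ absorbed into the constant.

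The noise part $\sum_l \int_{H_l} b_k(s)\sigma(\bb{X}_{t_i},s)[\ldots]$ is where the $(\epsilon^\ast_{t_i})^d$ rate should come from. It is a centred sum of $r_i$ cell-contributions, each of size $\abs{H_l}=(\epsilon^\ast_{t_i})^d d^{-d/2}$ times an $L^2$-bounded variable. I would bound its second moment using \eqref{eqn:cov-y-rho}: the double sum $\sum_{l,l'}\abs{H_l}\abs{H_{l'}}\rho(s_{l},s_{l'})$ is summable, because $\rho$ decays like distance$^{-(d+\delta)}$ and the sites are separated by multiples of the grid width. Then Jensen's inequality passes from $L^2$ to $L^1$.

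I expect this last step to be the main obstacle. Getting exactly $(\epsilon^\ast_{t_i})^d$ rather than $(\epsilon^\ast_{t_i})^{d/2}$ requires care. The diagonal terms contribute $r_i\abs{H_l}^2 = \abs{H_l}$ to the variance, so a crude Jensen argument gives only the square root. I would first try to bound $\E\abs{\cdot}$ directly term by term, using the sum over cells of $\abs{H_l}\cdot \E\abs{\sigma\epsilon}$ differences, and fall back on the variance argument if that fails. I would also state the result modulo constants depending on $M$, $\sup\abs{b_k}$ and $d$.
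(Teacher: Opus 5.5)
Your bound on $T_1$ is fine. The two steps you flag, however, are genuine gaps, and neither proposed repair can close them.

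For the mean part, $\widetilde{\mu}(s)=\E\abs{\mu(\bb{X}_0,s)}$ only satisfies $\abs{\widetilde{\mu}(s)-\widetilde{\mu}(s')}\leqslant\E\abs{\mu(\bb{X}_0,s)-\mu(\bb{X}_0,s')}$. Continuity of $s\mapsto\E\abs{\mu(\bb{X}_0,s)}$, or of $s\mapsto\E\mu^{\pm}(\bb{X}_0,s)$, concerns one-point moments, not increments. So neither the modulus of $\widetilde{\mu}$ nor the split $\mu=\mu^+-\mu^-$ reverses that inequality. For instance, take $\mu(\bb{x},s)=x_1\cos(\pi s_2)+x_2\sin(\pi s_2)$ with $(X_{0,1},X_{0,2})$ uniform on the unit circle. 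Then $\widetilde{\mu}\equiv 2/\pi$, so the third term of \eqref{eqn:Yk-convergence-errorbound} is zero, although every realisation $\mu(\bb{X}_0,\cdot)$ has slope of order one.

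For the noise part, you already see that the second-moment route gives at best $(\epsilon^\ast_{t_i})^{d/2}$. Moreover, the summability you want from \eqref{eqn:cov-y-rho} is not available. On the bounded set $\Scal$ one has $\norm{s-s'}^{-(d+\delta)}\geqslant d^{-(d+\delta)/2}$, so that condition adds nothing to $\abs{\rho}\leqslant 1$ and allows, e.g., $\rho\equiv 1$. Your term-by-term fallback needs $\E\abs{\sigma(\bb{X}_0,s)\epsilon_{t_i}(s)-\sigma(\bb{X}_0,s')\epsilon_{t_i}(s')}\to 0$ as $\norm{s-s'}\to 0$, i.e.\ mean-square continuity of the noise field, which is not assumed. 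Even then it only yields that modulus at scale $\epsilon^\ast_{t_i}$, not $(\epsilon^\ast_{t_i})^d$.

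In fact the stated rate cannot be proved for $d\geqslant 2$, because a nearest-site Riemann sum is only first-order accurate. Here is a counterexample.
\begin{itemize}
\item Take $d=2$ and $a=1/N$. Place the sites at the centres $s^c_l$ of the grid of side $a$, shifted by $(0,-a/4)$.
\item Move two sites near a corner (by continuity) so that the covering radius is exactly the cell diameter $\sqrt{2}a$. Algorithm~\ref{alg:modified-montecarlo} then uses exactly this grid and, for all but $\Ocal(1)$ cells, selects $s^c_l-(0,a/4)$.
\item Let $k=1$, $b_1\equiv 1$, $\mu\equiv 0$, $\sigma(\bb{x},s)=2+s_2$ and $\epsilon_t(s)\equiv\xi_t\sim N(0,1)$. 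Both supremum terms vanish.
\item Since the midpoint rule is exact for affine functions, $\widehat{Y}^\ast_{t_i1}-Y^\ast_{t_i1}=\xi_{t_i}\,r_i^{-1}\sum_l(s_{t_ij_l,2}-s^c_{l,2})=\xi_{t_i}(-a/4+\Ocal(a^3))$.
\item Hence $\E\abs{\widehat{Y}^\ast_{t_i1}-Y^\ast_{t_i1}}\asymp\epsilon^\ast_{t_i}\gg(\epsilon^\ast_{t_i})^2$.
\end{itemize}
The $\mu$ above, with $\sigma\equiv 1$ and the same noise, likewise gives an error $\tfrac{a}{2}X_{0,1}+\Ocal(a^2)$ while $\widetilde{\mu}$ is constant.

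So you should not expect to close the gap, and following the paper will not help. Its proof reaches \eqref{eqn:Yk-convergence-errorbound} only through four faulty steps:
\begin{itemize}
\item a mean value theorem applied to the random integrand;
\item computing $\var(\widehat{Y}^\ast_{t_ik})$ (and dropping the $\mu$-contribution) in place of the variance of the difference;
\item losing a factor $(\epsilon^\ast_{t_i}/\sqrt{d})^{-(d+\delta)}$ in the shell sum over $\rho$;
\item a final squaring of the maximum, although its variance bound is only $\Ocal((\epsilon^\ast_{t_i})^d)$.
\end{itemize}

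What your decomposition does prove, without any variance argument, is
\[
\E\abs{\widehat{Y}^\ast_{t_ik}-Y^\ast_{t_ik}}\lesssim\omega_{b_k}(2\epsilon^\ast_{t_i})+\omega_{\mu}(2\epsilon^\ast_{t_i})+\omega_{\eta}(2\epsilon^\ast_{t_i}).
\]
Here $\omega_{b_k}$ is the modulus of continuity of $b_k$, $\omega_\mu(r)=\sup_{\norm{s-s'}\leqslant r}\E\abs{\mu(\bb{X}_0,s)-\mu(\bb{X}_0,s')}$, and $\omega_\eta$ is the analogous $L^1$ modulus of $\sigma(\bb{X}_0,\cdot)\epsilon_{t_i}(\cdot)$. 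If these moduli are assumed to vanish at $0$, this suffices for the consistency used in Theorem~\ref{thm:pointwise-component}, but not for the rate claimed here.
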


It is clear that, in addition to Assumption~\ref{assum:mean-sigma-bound}, if we consider $\epsilon_{t_i}^\ast \rightarrow 0$ (which is a proxy for the average gap of data-points in the spatial horizon at timepoint $t_i$) in the asymptotic regime, then Proposition~\ref{lem:Yk-convergence} establishes $L^1$-type convergence for the modified Monte-Carlo estimate $\widehat{Y}^\ast_{t_i k}$, from which the consistency of the Algorithm~\ref{alg:modified-montecarlo} readily follows.

Moving on, the first key result presented in this section, establishes the consistency of the estimate $\widehat{\mu}_k(\bb{x})$ of each component of the mean function to the corresponding component $\mu_k(\bb{x})$ as in \eqref{eqn:mean-sigma-basis}, pointwise at each $\bb{x} \in \chi$. 

\begin{thm}\label{thm:pointwise-component}
    Along with Assumptions~\ref{assum:mean-sigma-bound}-\ref{assum:kernel-mu-lipschitz}, assume that the effective spatial resolution $\epsilon_{t_i}^\ast$ as defined in Algorithm~\ref{alg:modified-montecarlo} decays to $0$ uniformly over $t_i \in \Tcal$. Then, the estimate of the mean components, i.e., $\widehat{\mu}_k(x)$ as in \eqref{eqn:mean-estimator} is pointwise consistent for $\mu_k(x)$ as $n \rightarrow \infty$. Mathematically, it means that for any fixed $k \in \Z^+$ and any $\bb{x} \in \chi$, $    \widehat{\mu}_k(\bb{x}) - \mu_k(\bb{x}) = o_{\prob}(1)$.
\end{thm}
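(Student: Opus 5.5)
We decompose the estimation error in the standard Nadaraya--Watson fashion. Write $K_i = \Kcal(\bb{H}_n^{-1}(\bb{x} - \bb{X}_{t_i}))$ and normalise by $n\phi_{\bb{x}}(h_n\lambda)$. Then
\begin{equation*}
    \widehat{\mu}_k(\bb{x}) - \mu_k(\bb{x}) = \frac{\widehat{g}_1 + \widehat{g}_2 + \widehat{g}_3}{\widehat{f}},
\end{equation*}
where
\begin{equation*}
    \widehat{f} = \frac{1}{n\phi_{\bb{x}}(h_n\lambda)}\sum_{i=1}^n K_i,
    \qquad
    \widehat{g}_1 = \frac{1}{n\phi_{\bb{x}}(h_n\lambda)}\sum_{i=1}^n K_i\,\bigl(\mu_k(\bb{X}_{t_i}) - \mu_k(\bb{x})\bigr),
\end{equation*}
\begin{equation*}
    \widehat{g}_2 = \frac{1}{n\phi_{\bb{x}}(h_n\lambda)}\sum_{i=1}^n K_i\,\eta_{t_ik},
    \qquad
    \widehat{g}_3 = \frac{1}{n\phi_{\bb{x}}(h_n\lambda)}\sum_{i=1}^n K_i\,\bigl(\widehat{Y}^\ast_{t_ik} - Y^\ast_{t_ik}\bigr).
\end{equation*}
The plan has two parts: show $\widehat{f} \to \xi_1 > 0$ in probability, and show $\widehat{g}_j = o_{\prob}(1)$ for $j = 1, 2, 3$.

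\textbf{The denominator.} By \eqref{eqn:kernel-convergence}, $\E \widehat{f} \to \xi_1$. For the variance, expand $\var(\widehat{f})$ into diagonal and off-diagonal terms.
\begin{itemize}
    \item The diagonal terms contribute $\Ocal\bigl(1/(n\phi_{\bb{x}}(h_n\lambda))\bigr)$, which tends to $0$ by \eqref{eqn:bandwidth-density-inf}.
    \item For the off-diagonal terms we use Corollary~\ref{corr:kernel-covariance-bound}. Its second term is $\Ocal(\phi^2_{\bb{x}})$ and its first is proportional to $h_n^{-2}\Delta_2(|t_i - t_j|)$. Summing over pairs and treating the lag as proportional to $|i-j|$ on the time grid, $\sum_m \Delta_2(m) = \sum_m m^{-\tau+1} < \infty$ because $\tau > 2$.
    \item Hence the off-diagonal part is $\Ocal\bigl(1/(n h_n^2\phi^2_{\bb{x}})\bigr) + \Ocal(\E\phi_{\bb{X}_0}(3h_n\lambda))$. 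The first piece vanishes by Assumption~\ref{assum:bandwidth-decay}. The second vanishes because small-ball probabilities shrink as $h_n \to 0$, using dominated convergence.
\end{itemize}
Chebyshev's inequality then gives $\widehat{f} \to_{\prob} \xi_1$.

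\textbf{The bias and noise terms.}
\begin{itemize}
    \item For $\widehat{g}_1$: the kernel has support $[0,\lambda]$, so only $\bb{X}_{t_i} \in B(\bb{x}, h_n\lambda)$ contribute. Continuity of $\mu_k$ in the scaled metric then gives $|\widehat{g}_1| \leqslant \widehat{f}\,\sup_{\bb{y}\in B(\bb{x},h_n\lambda)}|\mu_k(\bb{y}) - \mu_k(\bb{x})| = o_{\prob}(1)$.
    \item For $\widehat{g}_2$: $\E(\eta_{tk} \mid \Fcal_t) = 0$, and $\epsilon_t$ is temporally iid and independent of past covariates. Conditioning on the later index, the summands $K_i\eta_{t_ik}$ are therefore uncorrelated, and $\E\widehat{g}_2 = 0$. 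The second moment is $(n\phi_{\bb{x}})^{-2}\sum_i \E\bigl(K_i^2\,\E(\eta^2_{t_ik}\mid\bb{X}_{t_i})\bigr)$.
    \item Bound $\E(\eta^2_{tk}\mid\bb{X}_t) \leqslant M\int\sigma^2(\bb{X}_t,s)\,ds$ by Cauchy--Schwarz and Assumption~\ref{assum:bounded-errors}. A local boundedness of $\sigma$ near $\bb{x}$ (via continuity of the $\sigma_k$, or the Lipschitz condition in Assumption~\ref{assum:kernel-mu-lipschitz} applied to $g_{\bb{x}}$) then gives $\var(\widehat{g}_2) = \Ocal\bigl(1/(n\phi_{\bb{x}})\bigr) \to 0$.
    \item If the cross terms do not vanish exactly, bound them with Proposition~\ref{lem:f-Xk-bound} applied to $g_{\bb{x}}$ instead.
\end{itemize}

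\textbf{The Monte-Carlo error $\widehat{g}_3$, which I expect to be the main obstacle.} The kernel weights and the Monte-Carlo errors are dependent, and Proposition~\ref{lem:Yk-convergence} gives only an unconditional $L^1$ bound. The approach is:
\begin{itemize}
    \item Use $K_i \leqslant C_2$ to bound $\E|\widehat{g}_3| \leqslant C_2\,\sup_i\delta_k(\epsilon^\ast_{t_i})/\phi_{\bb{x}}(h_n\lambda)$.
    \item Here $\sup_i\delta_k(\epsilon^\ast_{t_i}) \to 0$ by the uniform decay of $\epsilon^\ast_{t_i}$ and the uniform continuity of $b_k$ and $\widetilde{\mu}$ on the compact $\Scal$.
    \item This crude bound carries a factor $1/\phi_{\bb{x}}(h_n\lambda)$. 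I would first try to remove it by redoing the proof of Proposition~\ref{lem:Yk-convergence} conditionally on $\bb{X}_{t_i}$, exploiting the independence of $\epsilon_{t_i}$ from $\bb{X}_{t_i}$. This should give $\E\bigl(K_i\,|\widehat{Y}^\ast_{t_ik} - Y^\ast_{t_ik}|\bigr) \lesssim \E(K_i)\cdot\widetilde{\delta}_k(\epsilon^\ast_{t_i})$, with $\widetilde{\delta}_k$ defined through $\mu(\bb{y},\cdot)$ and $\sigma(\bb{y},\cdot)$ for $\bb{y}$ near $\bb{x}$, so that the normalisation cancels.
    \item Failing that, I would interpret the uniform decay hypothesis as requiring $\sup_i\delta_k(\epsilon^\ast_{t_i}) = o(\phi_{\bb{x}}(h_n\lambda))$.
\end{itemize}

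Combining the three terms with the denominator via Slutsky's theorem gives $\widehat{\mu}_k(\bb{x}) - \mu_k(\bb{x}) = o_{\prob}(1)$.
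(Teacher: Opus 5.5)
Your skeleton matches the paper's: the ratio form, the denominator handled exactly as in Lemma~\ref{lem:muk1-conv} (Corollary~\ref{corr:kernel-covariance-bound} plus Chebyshev), and the Monte-Carlo error via Proposition~\ref{lem:Yk-convergence}. The numerator, however, you treat differently, and to your advantage.

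The paper centres the numerator ($A_1,A_2,A_3$) and controls $\widehat{\mu}_{k,2}(\bb{x})-\E\widehat{\mu}_{k,2}(\bb{x})$ by Andrews' $L^1$-mixingale law of large numbers, using Proposition~\ref{lem:f-Xk-bound} and \ref{assum:kernel-mu-lipschitz} for the $K_{t_i}\mu_k(\bb{X}_{t_i})$ part. That argument only yields $\E\bigl|n^{-1}\sum_i U_i\bigr|\to 0$ with no rate, and this is then multiplied by $1/\E(K_0)\to\infty$. Your uncentred $\widehat{g}_1$ is bounded pathwise by $\widehat{f}$ times the modulus of continuity of $\mu_k$ at $\bb{x}$, so it needs neither a dependence argument nor \ref{assum:kernel-mu-lipschitz}. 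Your second-moment bound for $\widehat{g}_2$ carries an explicit rate that survives the $1/\phi_{\bb{x}}(h_n\lambda)$ normalisation.

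Two repairs are needed there. First, the vanishing of the cross terms rests on reading the model as an innovation structure: $\epsilon_{t_j}$ independent of $(\bb{X}_{t_i},\bb{X}_{t_j},\epsilon_{t_i})$ jointly for $t_i<t_j$. Your fallback to Proposition~\ref{lem:f-Xk-bound} would not help, since it controls conditional means of Lipschitz functions of $\bb{X}_t$, not products of errors. Second, local boundedness of $\int_{\Scal}\sigma^2(\bb{y},s)\,ds$ near $\bb{x}$ is not assumed, but you do not need it. By \ref{assum:mean-sigma-bound}, \ref{assum:bounded-errors} and stationarity, $\E\bigl[K_i^2\,\E(\eta_{t_ik}^2\mid\bb{X}_{t_i})\bigr]\leqslant MC_2^2\int_{\Scal}\widetilde{\sigma}^2(s)\,ds$ with $M=\sup_{t,s}\E\epsilon_t^2(s)$. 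Hence $\var(\widehat{g}_2)=\Ocal\bigl(1/(n\phi_{\bb{x}}^2(h_n\lambda))\bigr)\to 0$, because \ref{assum:bandwidth-decay} gives $nh_n^2\phi_{\bb{x}}^2(h_n\lambda)\to\infty$ with $h_n\to 0$.

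On $\widehat{g}_3$ you have found a genuine weak point, and it is the paper's too. Lemma~\ref{lem:muk2-muk-conv} bounds $n^{-1}\sum_i\bigl|\E K_{t_i}(\widehat{Y}^\ast_{t_ik}-Y^\ast_{t_ik})\bigr|/\E K_0$ by $(C_2/C_1)\sup_{t_i}\E\bigl|\widehat{Y}^\ast_{t_ik}-Y^\ast_{t_ik}\bigr|$. Since $\E K_0\geqslant C_1\phi_{\bb{x}}(h_n\lambda)$ rather than $C_1$, this holds only for the conditional error given $\bb{X}_{t_i}\in B(\bb{x},h_n\lambda)$. With the unconditional Proposition~\ref{lem:Yk-convergence}, your factor $1/\phi_{\bb{x}}(h_n\lambda)$ remains.

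Your first remedy is exactly what the paper tacitly uses. It needs uniformity over $\bb{y}$ near $\bb{x}$ of the spatial moduli of continuity of $\mu(\bb{y},\cdot)$ and of bounds on $\sigma(\bb{y},\cdot)$, which \ref{assum:mean-sigma-bound}--\ref{assum:kernel-mu-lipschitz} do not provide. Your second remedy is an added hypothesis; it follows from the second half of \ref{assum:bandwidth-zero}, which this theorem does not assume.

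You can weaken the required rate. The proof of Proposition~\ref{lem:Yk-convergence} actually bounds $\E\bigl(\widehat{Y}^\ast_{t_ik}-Y^\ast_{t_ik}\bigr)^2$ by $\Ocal(\delta_k^2(\epsilon^\ast_{t_i}))$. Cauchy--Schwarz against $\bb{1}\{\bb{X}_{t_i}\in B(\bb{x},h_n\lambda)\}$ then gives $\E|\widehat{g}_3|=\Ocal\bigl(\sup_i\delta_k(\epsilon^\ast_{t_i})/\phi_{\bb{x}}^{1/2}(h_n\lambda)\bigr)$, so $\sup_i\delta_k(\epsilon^\ast_{t_i})=o\bigl(\phi_{\bb{x}}^{1/2}(h_n\lambda)\bigr)$ suffices. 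With that condition stated explicitly, or a conditional version of Proposition~\ref{lem:Yk-convergence} actually proved, your argument is complete, and on the numerator it is more rigorous than the paper's.
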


While Theorem~\ref{thm:pointwise-component} establishes the pointwise consistency of $\widehat{\mu}_k(\bb{x})$ for the component functions $\mu_k(\bb{x})$, it has limited applicability in practice. Since the unknown mean function $\mu(\bb{x}, s)$ may consist of infinitely many component functions as illustrated in \eqref{eqn:mean-sigma-basis}, with a finite amount of computational power, we can estimate it by a truncated series consisting of a large, but finitely many terms. This motivates the following theorem.

\begin{thm}\label{thm:pointwise-full}
    Suppose that the assumptions of Theorem~\ref{thm:pointwise-component} hold. In addition, assume that there exists a function $b_\infty : \Scal \rightarrow (0,\infty)$ such that $\vert b_k(s)\vert \leqslant b_\infty(s)$ for all $s \in \Scal$. Then for any fixed $x \in \chi$, $s \in \Scal$ and $\epsilon > 0$, there exists sufficiently large $K_{\epsilon}$ such that
    \begin{equation*}
        \left(\sum_{k=1}^{K_{\epsilon}} \widehat{\mu}_k(\bb{x})b_k(s) - \mu(\bb{x}, s)\right) = \Ocal_{\mathbb{P}}(\epsilon),
    \end{equation*}
    as $n \rightarrow \infty$. Moreover, if $b_\infty(s)$ is continuous, then as the sample size $n \rightarrow \infty$, 
    \begin{equation*}
        \sup_{s \in \Scal}\left\vert \sum_{k=1}^{K_{\epsilon}} \widehat{\mu}_k(\bb{x})b_k(s) - \mu(\bb{x},s) \right\vert = \Ocal_{\mathbb{P}}(\epsilon).
    \end{equation*}
\end{thm}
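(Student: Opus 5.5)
The plan is a truncation argument: split the error into a finite estimation part and a deterministic series tail. For fixed $\bb{x}$, $s$ and any $K$,
\begin{equation*}
\sum_{k=1}^{K} \widehat{\mu}_k(\bb{x})b_k(s) - \mu(\bb{x},s) = \sum_{k=1}^{K} \bigl(\widehat{\mu}_k(\bb{x}) - \mu_k(\bb{x})\bigr) b_k(s) - \sum_{k > K} \mu_k(\bb{x}) b_k(s).
\end{equation*}

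\textbf{Tail term.} By \eqref{eqn:mean-sigma-basis}, $\sum_k \abs{\mu_k(\bb{x})} < \infty$. Since $\abs{b_k(s)} \leqslant b_\infty(s)$, the tail is at most $b_\infty(s)\sum_{k>K}\abs{\mu_k(\bb{x})}$. Given $\epsilon>0$, choose $K_\epsilon$ so that $\sum_{k>K_\epsilon}\abs{\mu_k(\bb{x})} \leqslant \epsilon$. The tail is then $\leqslant b_\infty(s)\,\epsilon$, a deterministic quantity of order $\epsilon$ for fixed $s$. The choice of $K_\epsilon$ depends only on $\bb{x}$, not on $s$ or $n$.

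\textbf{Finite part.} With $K_\epsilon$ fixed, Theorem~\ref{thm:pointwise-component} gives $\widehat{\mu}_k(\bb{x}) - \mu_k(\bb{x}) = o_{\prob}(1)$ for each $k \leqslant K_\epsilon$. A finite sum of $o_{\prob}(1)$ terms is $o_{\prob}(1)$, so this part is at most $b_\infty(s)\sum_{k\leqslant K_\epsilon}\abs{\widehat{\mu}_k(\bb{x}) - \mu_k(\bb{x})} = o_{\prob}(1)$. Hence for $n$ large it is below $\epsilon$ with probability tending to one. Adding the two parts gives the first claim, $\Ocal_{\prob}(\epsilon)$, with the constant $1+b_\infty(s)$.

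\textbf{Uniform statement.} If $b_\infty$ is continuous on the compact set $\Scal=[0,1]^d$, then $B^\ast=\sup_{s}b_\infty(s)<\infty$. Both bounds above hold uniformly in $s$ after replacing $b_\infty(s)$ by $B^\ast$. The random factor $\sum_{k\leqslant K_\epsilon}\abs{\widehat{\mu}_k(\bb{x}) - \mu_k(\bb{x})}$ does not depend on $s$. Therefore
\begin{equation*}
\sup_{s}\Bigl\vert \sum_{k\leqslant K_\epsilon}\widehat{\mu}_k(\bb{x}) b_k(s) - \mu(\bb{x},s)\Bigr\vert \leqslant B^\ast\Bigl(o_{\prob}(1) + \epsilon\Bigr) = \Ocal_{\prob}(\epsilon).
\end{equation*}

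\textbf{Main obstacle.} The delicate point is the order of limits: $K_\epsilon$ must be fixed before $n\to\infty$, so that only finitely many applications of Theorem~\ref{thm:pointwise-component} are needed. Nothing requires uniformity of its convergence in $k$. The rest is routine.
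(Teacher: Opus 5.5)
Your proposal is correct and follows essentially the same route as the paper. You fix $K_\epsilon$ before letting $n\to\infty$, control the finitely many estimation errors with Theorem~\ref{thm:pointwise-component}, bound the deterministic tail by $b_\infty(s)\sum_{k>K_\epsilon}\abs{\mu_k(\bb{x})}$, and for the uniform claim replace $b_\infty(s)$ by its supremum over the compact $\Scal$; the only cosmetic difference is that the paper calibrates the tail threshold as $\epsilon/(2b_\infty(s))$ (resp.\ $\epsilon/(2B_\infty)$) so the total is at most $\epsilon$, whereas you keep $K_\epsilon$ independent of $s$ and absorb $b_\infty(s)$ (or $B^\ast$) into the constant, which is equally valid.
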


\subsection{Asymptotic distribution of the estimate of mean}\label{sec:mean-asymp}

A natural next step is to establish a non-degenerate asymptotic limit of the estimator, after proper normalization. As before, fix $k \in \Z^+$. Then, for any $\bb{x} \in \chi$, we aim to understand the asymptotics for $v_{nk}(\bb{x})^{-1/2} (\widehat{\mu}_k(\bb{x}) - \mu_k(\bb{x}) - \tilde{b}_{nk}(\bb{x}) )$, where $\tilde{b}_{nk}(\bb{x})$ is a bias component and $v_{nk}(\bb{x})$ is the suitable normalization factor to obtain a non-degenerate limit, which turns out to be equal to $(n \phi_{\bb{x}}(h_n \lambda))^{-1}$. However, before describing the result in detail, we present an additional assumption that strengthens the bandwidth restrictions of Assumption~\ref{assum:bandwidth-decay} by connecting it with both the spatial and temporal dependency patterns.

\begin{enumerate}[label = (A\arabic*), ref=(A\arabic*)]
    \setcounter{enumi}{4}
    \item\label{assum:bandwidth-zero} Let $\tau$ be the exponent in PMC condition~\eqref{assum:gmc-Xt} and $\delta_k(\epsilon^\ast_{t_i})$ be as given in Proposition~\ref{lem:Yk-convergence}. Assume that there exists $0 < (\tau - 1)^{-1} < \beta < \alpha < 1$  such that the sequence of bandwidth parameter $h_n$ satisfies
    $n^{1 - 2\alpha + 2\beta}\phi_{\bb{x}}(h_n\lambda) \to 0$ and $\sqrt{n}\phi_{\bb{x}}^{-1/2}(h_n \lambda) ( \sup_{t_i} \delta_k(\epsilon^\ast_{t_i})) \to 0$.
\end{enumerate}

Assumption~\ref{assum:bandwidth-zero} can be thought of as an extension of Assumption~\ref{assum:bandwidth-decay} that determines the precise rate at which the small-ball probability needs to decay. For example, if $\chi = \R^d$ and $h_n = n^{-d/2}$, then the small-ball probability decays at the rate of $\Ocal(h_n^d) = \Ocal(n^{-1/2})$. If $\tau = 3$, one may choose $\beta = x + 1/2$ and $\alpha = x + 5/6$ for some $x \in (0, 1/6)$. In this case, $\phi_{\bb{x}}(h_n\lambda)n^{1-2(\alpha - \beta)} = \Ocal(n^{-1/6})$ which decays to $0$, satisfying the first part of Assumption~\ref{assum:bandwidth-zero}. As $n \to \infty$ and one has dense samples over the unit square, $\epsilon_{t_i}^\ast \to 0$, which along with continuity of mean and basis functions also imply $\delta_k(\epsilon^\ast_{t_i}) \to 0$. With a precise interplay between the spatial resolution of the sampling and the small-ball probability, the second part of Assumption~\ref{assum:bandwidth-zero} can be satisfied. The final result on the asymptotic normality of the suitably centered and scaled estimator is obtained below.

\begin{thm}\label{thm:asymptotic-normality}
    Let $\xi_1$ and $\xi_2$ be as defined in \eqref{eqn:kernel-convergence}. Under Assumptions~\ref{assum:mean-sigma-bound}-\ref{assum:bandwidth-zero}, as $n \rightarrow \infty$,
    \begin{equation}
        \dfrac{\sqrt{n\phi_{\bb{x}}(h_n\lambda)}\xi_1}{\sqrt{\xi_2 \sigma_{kk}(\bb{x})}} \left( \widehat{\mu}_k(\bb{x}) - \mu_k(\bb{x}) - \tilde{b}_{nk}(\bb{x}) \right),
        \label{eqn:normalized-estimate}
    \end{equation}
    is asymptotically distributed as a standard normal random variable, where 
    \begin{equation}
        \sigma_{kl}(\bb{x}) = \sum_{u=1}^\infty\sum_{v=1}^\infty c_{u,v}(k,l) \sigma_u(\bb{x})\sigma_v(\bb{x}), \ \text{ for any } k, l = 1, 2, \dots,
        \label{eqn:sigma-k-l-defn}
    \end{equation}
    \begin{equation}
        c_{u,v}(k,l) =  \int_{\Scal^2} \rho(s,s')b_k(s)b_u(s)b_l(s')b_v(s')ds ds'.
        \label{eqn:c-kl-defn}
    \end{equation}
    and $\sigma_{u}(\bb{x})$s are the component functions of $\sigma(\bb{x}, s)$ as in \eqref{eqn:mean-sigma-basis}. The bias $\tilde{b}_{nk}(\bb{x})$ is given by
    \begin{equation}
        \tilde{b}_{nk}(\bb{x}) = \dfrac{1}{n\E(K_0)} \sum_{i=1}^n \E(K_{t_i} \mu_k(\bb{X}_{t_i})) - \mu_k(\bb{x}).
        \label{eqn:bias-term}
    \end{equation}
\end{thm}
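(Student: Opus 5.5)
Write $K_{t_i} = \Kcal(\bb{H}_n^{-1}(\bb{X}_{t_i}-\bb{x}))$ and $\phi = \phi_{\bb{x}}(h_n\lambda)$. The plan is to split the centred estimator into a denominator ratio, a martingale part driven by $\eta_{t_ik}$, a kernel-weighted fluctuation of $\mu_k(\bb{X}_{t_i})$, and a Monte-Carlo error. With $\hat f = (n\E K_0)^{-1}\sum_i K_{t_i}$, we have
\begin{equation*}
\hat f\,\big(\widehat{\mu}_k(\bb{x}) - \mu_k(\bb{x})\big) - \tilde b_{nk}(\bb{x}) = \frac{1}{n\E K_0}\sum_{i=1}^n K_{t_i}\eta_{t_ik} + \frac{1}{n\E K_0}\sum_{i=1}^n \Big(K_{t_i}\mu_k(\bb{X}_{t_i}) - \E\big(K_{t_i}\mu_k(\bb{X}_{t_i})\big)\Big) - \mu_k(\bb{x})(\hat f - 1) + R_n ,
\end{equation*}
where $R_n = (n\E K_0)^{-1}\sum_i K_{t_i}(\widehat Y^\ast_{t_ik}-Y^\ast_{t_ik})$.

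\textbf{Denominator.} First show $\hat f \to 1$ in probability. Its mean is one, so it suffices to show $\var(\hat f)\to 0$. The variance is bounded using \eqref{eqn:kernel-convergence} for the diagonal terms, which contribute $O((n\phi)^{-1})$. The off-diagonal terms are handled by Corollary~\ref{corr:kernel-covariance-bound} together with the summability $\sum_m \Delta_2(m) < \infty$, which holds because $\tau > 2$. Since $\hat f - 1 = o_{\prob}(1)$, replacing $\hat f$ by $1$ costs only a negligible factor in front of the other terms, and leaves the term $\mu_k(\bb{x})(\hat f - 1)$ to be shown $o_{\prob}((n\phi)^{-1/2})$.

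\textbf{Negligible terms.} The remainder $R_n$ is controlled by Proposition~\ref{lem:Yk-convergence}. Since $K_{t_i}\le C_2$, we have $\E|R_n| \le C_2 \sup_i \delta_k(\epsilon^\ast_{t_i}) / \E K_0 = O(\phi^{-1}\sup_i\delta_k)$. After multiplying by $\sqrt{n\phi}$, this is exactly what the second part of \ref{assum:bandwidth-zero} kills. The second sum and the $(\hat f-1)$ term are averages of centred Lipschitz functionals of $\bb{X}_t$, which is what \ref{assum:kernel-mu-lipschitz} provides. For these I plan a big-block/small-block (Bernstein) argument with block lengths $n^\alpha$ and $n^\beta$. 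Within a block, the second-moment bound from Corollary~\ref{corr:kernel-covariance-bound} applies. Across blocks, Proposition~\ref{lem:f-Xk-bound} gives projection bounds of order $\Delta_2(n^\beta)=O(n^{-\beta(\tau-1)})$, which is $o(n^{-1})$ because $\beta > (\tau-1)^{-1}$. The condition $n^{1-2\alpha+2\beta}\phi\to 0$ is intended to make the resulting variance of these fluctuation terms, once scaled by $n\phi$, vanish. I am not fully sure of the exact bookkeeping; I will match it to the blocking calculation.

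\textbf{Main term (CLT).} The main term is $S_n = (n\E K_0)^{-1}\sum_i K_{t_i}\eta_{t_ik}$. Each $K_{t_i}\eta_{t_ik}$ is a martingale difference with respect to $\mathcal G_i = \sigma(\bb{X}_{t'}, \epsilon_{t'} : t' \le t_i)$, since $\E(\eta_{t_ik}\mid \bb{X}_{t'}, t'\le t_i)=0$ and the $\epsilon$'s are independent in time. I apply the martingale CLT, for example McLeish or Hall–Heyde, to $\sqrt{n\phi}\,S_n$.

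\emph{Conditional variance.} Expanding $\sigma(\bb{X}_t,s)=\sum_u\sigma_u(\bb{X}_t)b_u(s)$ gives $\E(\eta_{tk}^2\mid\bb{X}_t)=\sigma_{kk}(\bb{X}_t)$, with $\sigma_{kk}$ as in \eqref{eqn:sigma-k-l-defn}. Hence the conditional variance is $(n\phi)(n\E K_0)^{-2}\sum_i K_{t_i}^2\sigma_{kk}(\bb{X}_{t_i})$. Its mean converges to $\xi_2\sigma_{kk}(\bb{x})/\xi_1^2$, using \eqref{eqn:kernel-convergence} and continuity of $\sigma_u$ on the shrinking ball. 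Its concentration follows by the same covariance argument as for $\hat f$.

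\emph{Lindeberg condition.} This is the step I expect to be the main obstacle, because only second moments of $\epsilon$ are assumed. I would handle it by truncation. Since the $\eta_{t_ik}$ given $\bb{X}$ are conditionally iid in structure and $K_{t_i}\le C_2$, each summand is bounded by $C_2(n\phi)^{-1/2}\xi_1^{-1}|\eta_{t_ik}|$. Then $\sum_i \E\big[(n\phi)^{-1}K_{t_i}^2\eta_{t_ik}^2\,\ind{|\eta_{t_ik}|>c\sqrt{n\phi}}\big] \lesssim \E\big[\eta^2\ind{|\eta|>c\sqrt{n\phi}}\big\mid \bb{X}\in B(\bb{x},h_n\lambda)\big]$, which tends to zero by uniform integrability, since $n\phi\to\infty$ by \eqref{eqn:bandwidth-density-inf}.

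\textbf{Conclusion.} Combining these steps with Slutsky's lemma gives the standard normal limit in \eqref{eqn:normalized-estimate}.
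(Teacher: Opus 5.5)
\textbf{Gap: splitting the $\mu_k$-fluctuation destroys the cancellation that makes it negligible.} You plan to show separately that $\mu_k(\bb{x})(\hat f-1)$ and the second sum $\frac{1}{n\E K_0}\sum_i\big(K_{t_i}\mu_k(\bb{X}_{t_i})-\E(K_{t_i}\mu_k(\bb{X}_{t_i}))\big)$ are $o_{\prob}((n\phi)^{-1/2})$. Neither is, whenever $\mu_k(\bb{x})\neq 0$.

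By \eqref{eqn:kernel-convergence}, the diagonal part of $\var(\hat f)$ is $\var(K_0)/(n(\E K_0)^2)\sim\xi_2/(\xi_1^2 n\phi)$. This is exactly the $O((n\phi)^{-1})$ you obtained in your denominator step, so $\sqrt{n\phi}\,(\hat f-1)$ does not vanish. For instance, take iid covariates, which trivially satisfy PMC and the joint small-ball condition. Then $\sqrt{n\phi}\,\mu_k(\bb{x})(\hat f-1)$ and $\sqrt{n\phi}$ times the second sum both have variance tending to $\mu_k^2(\bb{x})\xi_2/\xi_1^2$. No choice of block lengths, and no use of $n^{1-2\alpha+2\beta}\phi\to0$, can remove a diagonal variance of this order.

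Only their difference is negligible: $\frac{1}{n\E K_0}\sum_i\big[K_{t_i}(\mu_k(\bb{X}_{t_i})-\mu_k(\bb{x}))-\E(\cdot)\big]$. On the support of $K_{t_i}$ one has $|\mu_k(\bb{X}_{t_i})-\mu_k(\bb{x})|=O(h_n)$, so its normalized diagonal variance is $O(h_n^2)$. This is why the paper's summand $U_i=K_{t_i}(Y^\ast_{t_ik}-\mu_k(\bb{x}))-\E(\cdot)$ keeps $\mu_k(\bb{x})$ inside. It is also why the paper's variance calculation shows $\E[K_0^2(\mu_k(\bb{X}_0)-\mu_k(\bb{x}))^2]\to0$.

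Relatedly, Corollary~\ref{corr:kernel-covariance-bound}, which you propose for the within-block second moments, is too crude at the $\sqrt{n\phi}$ scale. Its $h_n^{-2}\Delta_2(|t-t'|)$ term sums to $O(nh_n^{-2})$ over pairs. That contributes $O((h_n^2\phi)^{-1})\to\infty$ to the normalized variance. The corollary is adequate only for the $o_{\prob}(1)$ statement about $\hat f$. Cross-covariances of the regrouped summands need bounds that keep the $O(h_n)$ factor.

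Also, dividing your identity by $\hat f$ produces a term $\tilde b_{nk}(\bb{x})(\hat f^{-1}-1)$ that you drop. Since $\tilde b_{nk}=O(h_n)$, it is harmless once $\hat f-1=O_{\prob}((n\phi)^{-1/2})$. That is again a CLT-scale rate which Corollary~\ref{corr:kernel-covariance-bound} does not give.

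The rest of your plan is a sound and genuinely different route. You use a martingale CLT for $\sum_iK_{t_i}\eta_{t_ik}$ with an explicit Lindeberg check. The paper instead applies Bernstein blocking to the whole $U_i$, factorizes characteristic functions via PMC, and then cites Masry's CLT. Your route is cleaner. One caveat: for $K_{t_i}^2\sigma_{kk}(\bb{X}_{t_i})$ to be the conditional variance, the $\sigma$-field at step $i-1$ must already include $\bb{X}_{t_i}$.
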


It is clear from Assumption~\ref{assum:kernel-mu-lipschitz} that $\tilde{b}_{nk}(\bb{x}) = O(h_n)$, where $h_n$ is the bandwidth. While $h_n \rightarrow 0$ asymptotically, justifying the pointwise consistency of the mean estimator, in many practical purposes, we may need a bias-correction procedure to ensure a lower-order bias term. For this, one may consider a jackknife-type procedure and obtain a revised estimator as
\begin{equation}
    \widehat{\mu}_{k}^{\ast, (h_n)}(\bb{x}) = 2\widehat{\mu}_{k}^{(h_n)}(\bb{x}) - \widehat{\mu}_{k}^{(2h_n)}(\bb{x}),
    \label{eqn:bias-corrected-estimate}
\end{equation}
where $\widehat{\mu}_{k}^{(h_n)}(\bb{x})$ is the estimator of the $k^{th}$ mean component with bandwidth $h_n$. It is now easy to see that the bias of the corrected estimate $\widehat{\mu}_{k}^{\ast, (h_n)}(\bb{x})$ is $O(h_n^2)$. Another point to note is that the quantity $\phi_{\bb{x}}(h_n\lambda)$ appearing in normalizing constant is usually unknown, but one may estimate the entire normalizing constant $n^{1/2}\phi^{1/2}_{\bb{x}}(h_n\lambda)$ by a plug-in estimate of the square root of the number of points lying in the support of the kernel function $K(\cdot, \bb{x})$ centered at $\bb{x} \in \chi$. Finally, if we were to assume a Karhunen-Lo\'{e}ve-type expansion for the $Y$-value at a location $s$, when covariate is $\bb{x}$, as
$
    Y(\bb{x}) = \sum_{k=1}^\infty a_k(\bb{x}) b_k(s),
$
then the quantity $\sigma_{kl}(\bb{x})$ can be viewed as a covariance operator acting on these random coefficients $a_k(\bb{x})$.

The proof of Theorem~\ref{thm:asymptotic-normality} follows by considering a decomposition of the normalized estimated given in \eqref{eqn:normalized-estimate} as in~\cite{masry2005nonparametric} and then follows a big-block small-block decomposition to establish the result. While \cite{masry2005nonparametric} makes use of the $\alpha$-mixing property to establish asymptotic independence between the big blocks to apply central limit theorems, we achieve this through applying the PMC condition on $\bb{X}_t$ as in Assumption~\ref{assum:gmc-Xt}. The technical details of the proof are outlined in the supplementary material. One may also establish a more general result that considers the estimation of multiple components of the mean function simultaneously. We present this as the following corollary.

\begin{corr}
    Under same assumptions as in Theorem~\ref{thm:asymptotic-normality}, for any finite set of distinct indices $k_1, k_2, \dots, k_m \in \Z^+$ and any fixed $\bb{x} \in \chi$, as $n \to \infty$,
    \begin{equation}
        \sqrt{n\phi_{\bb{x}}(h_n\lambda)}\begin{bmatrix}
            \widehat{\mu}_{k_1}(\bb{x}) - \mu_{k_1}(\bb{x}) - \tilde{b}_{n k_1}(\bb{x})\\
            \vdots \\
            \widehat{\mu}_{k_m}(\bb{x}) - \mu_{k_m}(\bb{x}) - \tilde{b}_{n k_m}(\bb{x})
        \end{bmatrix}
        \xrightarrow{d}
        N\left( \bb{0}, \dfrac{\xi_2}{\xi_1^2} \begin{bmatrix}
            \sigma_{k_1 k_1}(\bb{x}) & \dots & \sigma_{k_1 k_m}(\bb{x})\\
            \vdots & \vdots & \ddots & \vdots \\
            \sigma_{k_1 k_m}(\bb{x}) &  \dots & \sigma_{k_m k_m}(\bb{x})\\
        \end{bmatrix} \right).
        \label{eqn:asymp-normal-joint}
    \end{equation}
    \noindent Additionally, under the setting of Theorem~\ref{thm:pointwise-full}, for any $\epsilon > 0$ there exists a sufficiently large $K_\epsilon$ such that, for a standard normal random variable $Z$,
    \begin{equation}
        \sqrt{n\phi_{\bb{x}}(h_n\lambda)} \dfrac{ \sum_{k=1}^{K_\epsilon} \left( \widehat{\mu}_k(\bb{x}) - \tilde{b}_{nk}(\bb{x}) \right)b_k(s) - \mu(\bb{x}, s) }{\sqrt{\frac{\xi_2}{\xi_1^2}\sum_{k=1}^{K_\epsilon}\sum_{l=1}^{K_\epsilon} \sigma_{kl}(\bb{x})b_k(s)b_l(s)}} = Z + \Ocal_{\prob}(\epsilon).
        \label{eqn:mean-total-normal}
    \end{equation}
\end{corr}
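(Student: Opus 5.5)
The first claim \eqref{eqn:asymp-normal-joint} will follow from the Cram\'er--Wold device combined with the proof of Theorem~\ref{thm:asymptotic-normality}. Fix $\bb{a} = (a_1,\dots,a_m)\tr \in \R^m$ and consider
\[
  T_n = \sqrt{n\phi_{\bb{x}}(h_n\lambda)}\sum_{j=1}^m a_j\bigl(\widehat{\mu}_{k_j}(\bb{x}) - \mu_{k_j}(\bb{x}) - \tilde{b}_{nk_j}(\bb{x})\bigr).
\]
All the estimators share the same denominator $\sum_i K_{t_i}$, so I will use the same decomposition as in the proof of Theorem~\ref{thm:asymptotic-normality}. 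The denominator, normalized by $n\E(K_0)$, converges to $1$ in probability. The Monte-Carlo error $\widehat{Y}^\ast_{t_ik}-Y^\ast_{t_ik}$ is negligible by Proposition~\ref{lem:Yk-convergence} and the second part of Assumption~\ref{assum:bandwidth-zero}. The term $\sum_i (K_{t_i}\mu_{k}(\bb{X}_{t_i}) - \E K_{t_i}\mu_k(\bb{X}_{t_i}))$ is negligible by Corollary~\ref{corr:kernel-covariance-bound} and Proposition~\ref{lem:f-Xk-bound}, via Assumption~\ref{assum:kernel-mu-lipschitz}. Since each of these remainders is $o_{\prob}(1)$ for every fixed $k$, and $m$ is finite, $T_n$ reduces to
\[
  (n\phi_{\bb{x}}(h_n\lambda))^{-1/2}\xi_1^{-1}\sum_{i=1}^n K_{t_i}\,\eta^{(\bb{a})}_{t_i},
  \qquad
  \eta^{(\bb{a})}_{t} = \sum_j a_j \eta_{tk_j} = \int_{\Scal}\sigma(\bb{X}_t,s)\epsilon_t(s)\sum_j a_j b_{k_j}(s)\,ds .
\]

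This is exactly the leading term treated in Theorem~\ref{thm:asymptotic-normality}, with $b_k$ replaced by $\sum_j a_j b_{k_j}$. The big-block/small-block argument with the PMC coupling goes through unchanged, because it only uses the conditional mean-zero property of $\eta_{tk}$ given the past of the covariates, the temporal independence of $\epsilon_t$, and the Lipschitz control of $\Kcal\sigma_u$. Both of the latter are stable under finite linear combinations. The only new computation is the conditional variance. Expanding $\sigma(\bb{X}_t,s)=\sum_u\sigma_u(\bb{X}_t)b_u(s)$ and using \eqref{eqn:cov-y-rho} gives
\[
  \E\bigl[(\eta^{(\bb{a})}_t)^2\mid \bb{X}_t\bigr]=\sum_{j,l}a_ja_l\,\sigma_{k_jk_l}(\bb{X}_t).
\]
Here the definitions \eqref{eqn:sigma-k-l-defn}--\eqref{eqn:c-kl-defn} appear; Fubini is justified by absolute summability of the $\sigma_u$ and Assumption~\ref{assum:bounded-errors}. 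Continuity of $\sigma_u$ in $\bb{x}$ then lets me replace $\bb{X}_t$ by $\bb{x}$ on the kernel support, which produces the variance $\xi_2\xi_1^{-2}\,\bb{a}\tr\Sigma(\bb{x})\bb{a}$. Cram\'er--Wold then yields \eqref{eqn:asymp-normal-joint}. The main obstacle is verifying that the cross terms $\sigma_{k_jk_l}$ have no contribution across distinct times. This holds because $\epsilon_{t}$ and $\epsilon_{t'}$ are independent, with the conditioning handled exactly as in the diagonal case.

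For \eqref{eqn:mean-total-normal}, I will apply \eqref{eqn:asymp-normal-joint} with $\{k_1,\dots,k_m\}=\{1,\dots,K_\epsilon\}$ and $a_k=b_k(s)$. By the continuous mapping theorem, $\sqrt{n\phi_{\bb{x}}(h_n\lambda)}\sum_{k\le K_\epsilon}(\widehat{\mu}_k-\mu_k-\tilde{b}_{nk})b_k(s)$, divided by the displayed standard deviation, converges to $Z$. The remaining piece is the truncation error $\sum_{k\le K_\epsilon}\mu_k(\bb{x})b_k(s)-\mu(\bb{x},s)=-\sum_{k>K_\epsilon}\mu_k(\bb{x})b_k(s)$. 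It is bounded by $b_\infty(s)\sum_{k>K_\epsilon}\abs{\mu_k(\bb{x})}$, which can be made smaller than any prescribed multiple of $\epsilon$ by the absolute summability in \eqref{eqn:mean-sigma-basis}, as in Theorem~\ref{thm:pointwise-full}.

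A subtlety is that this deterministic tail is multiplied by the diverging factor $\sqrt{n\phi_{\bb{x}}(h_n\lambda)}$. I will therefore read the statement as choosing $K_\epsilon$, possibly depending on $n$ through the tail bound, so that the scaled truncation error is at most $\epsilon$. This requires a lower bound on the denominator $\sum_{k,l}\sigma_{kl}b_kb_l$, which I will assume is bounded away from zero for large $K_\epsilon$. Justifying this reading, which absorbs the truncation into $\Ocal_{\prob}(\epsilon)$, is the step I expect to require the most care.
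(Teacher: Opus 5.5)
Your argument for \eqref{eqn:asymp-normal-joint} is essentially what the paper means by ``analogous to the proof of Theorem~\ref{thm:asymptotic-normality}''. You apply Cram\'er--Wold, then rerun the big-block/small-block argument with $b_k$ replaced by $\sum_j a_j b_{k_j}$. Your conditional-variance computation correctly yields $\frac{\xi_2}{\xi_1^2}\sum_{j,l}a_ja_l\sigma_{k_jk_l}(\bb{x})$.

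Two small points on this part:
\begin{itemize}
\item The term you discard must be the centered sum of $K_{t_i}(\mu_k(\bb{X}_{t_i})-\mu_k(\bb{x}))$. You need the $-\mu_k(\bb{x})(\widehat{\mu}_{k,1}(\bb{x})-1)$ part of $Q_{nk}(\bb{x})$ for the cancellation. The centered sum of $K_{t_i}\mu_k(\bb{X}_{t_i})$ on its own, as you wrote it, is of the same order as the leading term.
\item The exact vanishing of cross-time covariances uses the temporal independence of $\epsilon_t$ stated in the introduction. The paper's proof of Lemma~\ref{lem:small-block-zero} instead follows the literal form of \eqref{eqn:cov-y-rho}, which allows $t_1\neq t_2$. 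It bounds these covariances by $\Ocal(\phi_{\bb{x}}^2(h_n\lambda))$ via the joint small-ball condition in Assumption~\ref{assum:bandwidth-decay} and absorbs them with Assumption~\ref{assum:bandwidth-zero}. Under that reading you would need this bound rather than exact cancellation.
\end{itemize}

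The genuine gap is in \eqref{eqn:mean-total-normal}. You correctly observe that the truncation error $r_K(\bb{x},s)=\sum_{k>K}\mu_k(\bb{x})b_k(s)$ is multiplied by $\sqrt{n\phi_{\bb{x}}(h_n\lambda)}\to\infty$. So for fixed $K_\epsilon$ the display cannot hold literally unless $r_{K_\epsilon}(\bb{x},s)=0$.

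Your repair, however, breaks your own proof. Requiring $\sqrt{n\phi_{\bb{x}}(h_n\lambda)}\,b_\infty(s)\sum_{k>K}\abs{\mu_k(\bb{x})}\leqslant\epsilon$ forces $K=K_n\to\infty$, unless $\mu(\bb{x},\cdot)$ has finitely many nonzero coefficients. Once $K$ grows with $n$, \eqref{eqn:asymp-normal-joint} (valid only for a fixed finite index set) plus continuous mapping no longer gives the limit. You would need a triangular-array CLT over a growing number of components, with control uniform in $k\leqslant K_n$ of:
\begin{itemize}
\item the Monte-Carlo remainders, since the second half of Assumption~\ref{assum:bandwidth-zero} is for fixed $k$ and $\delta_k$ involves the modulus of continuity of $b_k$;
\item $\sum_{k\leqslant K_n}\abs{\tilde{b}_{nk}(\bb{x})b_k(s)}$, since the Lipschitz constants in Assumption~\ref{assum:kernel-mu-lipschitz} may depend on $k$;
\item the variance from below.
\end{itemize}
None of this follows from the paper's hypotheses, and you only propose to assume the last.

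The reading consistent with how the paper actually uses the result keeps $K_\epsilon$ fixed and puts the $\epsilon$-error on the original scale. This is how the paper's interval adds $\epsilon b_\infty(s)$ to the half-width, and how the proof of Theorem~\ref{thm:simult-conf-1} works with $\alpha_{n,K_\epsilon}(\abs{\beta_{n,K_\epsilon}}\pm\epsilon b_\infty(s))$. Concretely:
\[
\sqrt{n\phi_{\bb{x}}(h_n\lambda)}\;\frac{\sum_{k=1}^{K_\epsilon}\bigl(\widehat{\mu}_k(\bb{x})-\tilde{b}_{nk}(\bb{x})\bigr)b_k(s)-\mu(\bb{x},s)+r_{K_\epsilon}(\bb{x},s)}{\sqrt{\frac{\xi_2}{\xi_1^2}\sum_{k=1}^{K_\epsilon}\sum_{l=1}^{K_\epsilon}\sigma_{kl}(\bb{x})b_k(s)b_l(s)}}\xrightarrow{d}Z,
\qquad \abs{r_{K_\epsilon}(\bb{x},s)}\leqslant\epsilon\,b_\infty(s).
\]
This follows at once from \eqref{eqn:asymp-normal-joint} with $\{k_1,\dots,k_m\}=\{1,\dots,K_\epsilon\}$ and $a_k=b_k(s)$, together with the tail bound from the proof of Theorem~\ref{thm:pointwise-full}. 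It needs only that the quadratic form in the denominator be positive for that single $K_\epsilon$, not a lower bound as $K_\epsilon$ grows.
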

\noindent The proof of this is analogous to the proof of Theorem~\ref{thm:asymptotic-normality} and hence is omitted for brevity.

\subsection{Simultaneous Confidence Interval Estimation}\label{sec:interval-est}

Extending on the point estimators for the mean component, we can obtain an interval estimator to quantify the uncertainty of the estimation. From a direct usage of the asymptotic normality present in \eqref{eqn:mean-total-normal}, one can obtain an asymptotic confidence interval for $\mu_k(\bb{x})$, provided a valid consistent estimator is available for $\sigma_{kk}(\bb{x})$. To this end, consider the covariance between $\eta_{t_1 k}$ and $\eta_{t_2 l}$ for any $k, l \in \Z^+$ as given in \eqref{eqn:model-reduced}. 
\begin{align*}
    \cov(\eta_{t_1 k}, \eta_{t_2 l})
    & = \int_{\Scal^2} \E\left[ \sigma(\bb{X}_{t_1}, s)\sigma(\bb{X}_{t_2}, s') \right]\rho(s, s')b_k(s) b_l(s')ds ds'\\
    & = \sum_{u, v} \E\left[ \sigma_u(\bb{X}_{t_1})\sigma_v(\bb{X}_{t_2}) \right] \int_{\Scal^2} \rho(s,s')b_k(s)b_u(s)b_l(s')b_v(s')ds ds'\\
    & = \sum_{u,v} c_{u,v}(k,l) \E\left[ \sigma_u(\bb{X}_0)\sigma_v(\bb{X}_{\vert t_1 - t_2\vert}) \right],
\end{align*}
where the last line follows from stationarity of $\bb{X}_t$ and $c_{u,v}(k,l)$ is as given in \eqref{eqn:c-kl-defn}. Here, we also apply Fubini's theorem to exchange the expectation and the integral operator, which can be easily validated due to the uniform boundedness of the basis functions and $\tilde{\sigma}(s)$, as given in Assumption~\ref{assum:mean-sigma-bound}. By choosing $t_1 = t_2$, we get $\cov(\eta_{t_1 k},\eta_{t_1 l}) = \sum_{u,v} c_{u,v}(k,l) \E\left[ \sigma_u(\bb{X}_0)\sigma_v(\bb{X}_0) \right]$, or more precisely, the conditional covariance will be
\begin{equation}
    \cov(\eta_{t_i k}, \eta_{t_i l} \mid \Fcal_{t_i}) = \sum_{u,v} c_{u,v}(k,l) \sigma_u(\bb{X}_{t_i})\sigma_v(\bb{X}_{t_i}).
    \label{eqn:model-reduced-variance-cond}
\end{equation}
Connecting the above expression, namely \eqref{eqn:model-reduced-variance-cond}, with its population counterpart $\sigma_{k,l}(\bb{x})$ as in \eqref{eqn:sigma-k-l-defn}, a natural nonparametric estimate of $\sigma_{kl}(\bb{x})$ is given by
\begin{equation}
    \widehat{\sigma}_{kl}(\bb{x}) = \dfrac{\sum_{i=1}^n K( \Vert{\bb{H}_n^{-1}(\bb{x} - \bb{X}_{t_i})}\Vert ) (\widehat{Y}^\ast_{t_i k} - \widehat{\mu}_k(\bb{x}))(\widehat{Y}^\ast_{t_i l} - \widehat{\mu}_l(\bb{x})) }{\sum_{i=1}^n K( \Vert{\bb{H}_n^{-1}(\bb{x} - \bb{X}_{t_i})}\Vert ) }.
    \label{eqn:mode-reduced-variance-estimate}
\end{equation}
As established in the following theorem, it provides a reasonably valid estimator for $\sigma_{kl}(\bb{x})$. 

\begin{thm}\label{thm:sigma-pointwise}
    Under the same set of assumptions as in Theorem~\ref{thm:pointwise-component}, for any fixed $k, l \in \Z^+$ and a fixed $\bb{x} \in \chi$, we have $\widehat{\sigma}_{kl}(\bb{x}) - \sigma_{kl}(\bb{x}) = o_{\mathbb{P}}(1)$ as $n \rightarrow \infty$.
\end{thm}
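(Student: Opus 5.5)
Write $K_i = K(\Vert \bb{H}_n^{-1}(\bb{x} - \bb{X}_{t_i})\Vert)$ and $W_i = K_i/\sum_{j=1}^n K_j$, so that $\widehat{\sigma}_{kl}(\bb{x}) = \sum_i W_i (\widehat{Y}^\ast_{t_i k} - \widehat{\mu}_k(\bb{x}))(\widehat{Y}^\ast_{t_i l} - \widehat{\mu}_l(\bb{x}))$. The plan is to peel off the estimation errors one at a time until only a kernel-weighted average of the products $\eta_{t_i k}\eta_{t_i l}$ remains. That average is then handled by the same local-averaging argument used for Theorem~\ref{thm:pointwise-component}, applied to the ``response'' $\eta_{t_i k}\eta_{t_i l}$ instead of $Y^\ast_{t_i k}$.

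\textbf{Step 1: decomposition.} Set $e_{ik} = \widehat{Y}^\ast_{t_i k} - Y^\ast_{t_i k}$ and $r_{ik} = \mu_k(\bb{X}_{t_i}) - \widehat{\mu}_k(\bb{x})$. By \eqref{eqn:model-reduced},
\begin{equation*}
\widehat{Y}^\ast_{t_i k} - \widehat{\mu}_k(\bb{x}) = \eta_{t_i k} + r_{ik} + e_{ik},
\end{equation*}
and the analogous identity holds for $l$. Expanding the product, $\widehat{\sigma}_{kl}(\bb{x})$ becomes the main term $\sum_i W_i \eta_{t_i k}\eta_{t_i l}$ plus eight cross terms.

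\textbf{Step 2: cross terms.} Each cross term is bounded by Cauchy--Schwarz with respect to the weights $W_i$, so it suffices to control the following three weighted averages.
\begin{itemize}
\item[(i)] $\sum_i W_i r_{ik}^2 = o_{\prob}(1)$. On the support of $K_i$ we have $\Vert \bb{D}^{-1}(\bb{X}_{t_i}-\bb{x})\Vert \le \lambda h_n$, and $\mu_k$ is continuous in the scaled metric, so $|\mu_k(\bb{X}_{t_i}) - \mu_k(\bb{x})| = o(1)$ uniformly over the active indices. Also $\widehat{\mu}_k(\bb{x}) - \mu_k(\bb{x}) = o_{\prob}(1)$ by Theorem~\ref{thm:pointwise-component}.
\item[(ii)] $\sum_i W_i e_{ik}^2 = o_{\prob}(1)$. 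The denominator satisfies $\frac{1}{n\phi_{\bb{x}}(h_n\lambda)}\sum_j K_j \to \xi_1$ in probability; this was already established in the proof of Theorem~\ref{thm:pointwise-component} using \eqref{eqn:kernel-convergence}, Corollary~\ref{corr:kernel-covariance-bound} and \eqref{eqn:bandwidth-density-inf}. It therefore suffices to show $\frac{1}{n\phi}\sum_i K_i e_{ik}^2 \to 0$ in $L^1$, which follows from Proposition~\ref{lem:Yk-convergence} once $\sup_{t_i}\epsilon^\ast_{t_i}\to 0$.
\item[(iii)] $\sum_i W_i \eta_{t_i k}^2 = \Ocal_{\prob}(1)$. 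This follows from Step 3 applied with $l=k$.
\end{itemize}

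One gap remains. Proposition~\ref{lem:Yk-convergence} is stated as an $L^1$ bound, but Step~2(ii) needs a second-moment bound. The Monte Carlo error $e_{ik}$ is a Riemann-sum error of $\int Y_{t_i}b_k$. Using \ref{assum:bounded-errors}, the continuity of $\tilde\sigma^2$ from \ref{assum:mean-sigma-bound}, and the grid construction of Algorithm~\ref{alg:modified-montecarlo}, the same argument as in Proposition~\ref{lem:Yk-convergence} should give $\E e_{ik}^2 \to 0$ uniformly in $i$; the conditioning on $\bb{X}_{t_i}$ enters only through $\sigma(\bb{X}_{t_i},\cdot)$. If this cannot be made to work, a fallback is to use the weaker product bound $\sum_i W_i|e_{ik}|\,|\eta_{t_i l}|$ together with the independence of $\epsilon_{t_i}$ from $\Fcal_{t_i}$.

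\textbf{Step 3: the main term.} By \eqref{eqn:model-reduced-variance-cond},
\begin{equation*}
\eta_{t_i k}\eta_{t_i l} = g(\bb{X}_{t_i}) + \zeta_i, \qquad g(\bb{y}) = \sum_{u,v} c_{u,v}(k,l)\sigma_u(\bb{y})\sigma_v(\bb{y}), \qquad \E(\zeta_i \mid \Fcal_{t_i}) = 0.
\end{equation*}
\begin{itemize}
\item The bias part $\sum_i W_i g(\bb{X}_{t_i}) - g(\bb{x})$ tends to $0$ by continuity of $g$ at $\bb{x}$, exactly as in Step~2(i). Continuity of $g$ follows from continuity of the $\sigma_u$, absolute summability of $\sum_u|\sigma_u|$, and boundedness of $c_{u,v}$.
\item The noise part $\frac{1}{n\phi}\sum_i K_i\zeta_i$ has mean zero, and its summands are uncorrelated across $i$ because each $\epsilon_{t_i}$ is iid in time and independent of the past. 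Its variance is therefore of order $\frac{1}{n\phi}$, which tends to $0$ by \eqref{eqn:bandwidth-density-inf}.
\end{itemize}

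\textbf{Main obstacle.} The noise-part variance calculation needs $\E(\zeta_i^2\mid \bb{X}_{t_i})$ bounded near $\bb{x}$, i.e.\ fourth moments of $\epsilon$, which \ref{assum:bounded-errors} does not provide. I would try a truncation argument instead. Split $\zeta_i = \zeta_i\ind{|\zeta_i|\le M_n} + \zeta_i\ind{|\zeta_i|>M_n}$ with $M_n\to\infty$ slowly.
\begin{itemize}
\item The truncated martingale-difference sum has variance at most $M_n^2/(n\phi)\to 0$. Truncation destroys the exact mean-zero property, so recentre each truncated summand by its conditional mean; the shift this introduces is controlled by the tail bound below.
\item The tail part is controlled in $L^1$ by $\sup_{\bb{y}\in B(\bb{x},\lambda h_n)} \E\bigl(|\zeta|\ind{|\zeta|>M_n}\mid \bb{X}=\bb{y}\bigr)$, which tends to $0$ by uniform integrability of $\eta_{tk}\eta_{tl}$ given $\bb{X}_t$ near $\bb{x}$. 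This uniform integrability comes from the second moments in \ref{assum:bounded-errors} and the local boundedness of $\sigma$.
\end{itemize}
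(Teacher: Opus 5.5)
Your skeleton matches the paper's. You expand $\widehat{\sigma}_{kl}(\bb{x})$ around $\sum_i W_i\eta_{t_ik}\eta_{t_il}$, dispose of the estimation errors, and split the main term into the kernel-weighted conditional mean $\sum_i W_i\sigma_{kl}(\bb{X}_{t_i})$ plus a martingale-difference remainder. The execution differs in three places.

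\textbf{Cross terms.} You use only Theorem~\ref{thm:pointwise-component} and a weighted Cauchy--Schwarz inequality with $\sum_i W_i\eta_{t_ik}^2=\Ocal_{\prob}(1)$. The paper instead invokes Theorem~\ref{thm:asymptotic-normality}, which needs \ref{assum:bandwidth-zero} (beyond the stated hypotheses) and yields no moment bound. It also relies on a deterministic bound on $\eta_{t_ik}$ that \ref{assum:bounded-errors} does not supply. Your $r_{ik}$ also correctly carries $\mu_k(\bb{X}_{t_i})$ where the paper writes $\mu_k(\bb{x})$.

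\textbf{Noise part.} The paper cites Andrews' $L^1$ law of large numbers, as in Lemma~\ref{lem:muk2-conv}. You truncate and use conditional uniform integrability. Your version works directly at the scale $n\phi_{\bb{x}}(h_n\lambda)$ imposed by the factor $1/\E(K_0)$, which an $L^1$ limit for the paper's $n^{-1}\sum_i V_i$ does not control by itself.

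\textbf{Bias part.} The paper truncates the $(u,v)$-series at $U_\epsilon$ and applies \ref{assum:kernel-mu-lipschitz} to the finitely many remaining terms. You appeal instead to continuity of $\sigma_{kl}$ at $\bb{x}$.

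Two steps remain unjustified. First, continuity of $g=\sigma_{kl}$ at $\bb{x}$ does not follow from continuity of each $\sigma_u$ plus pointwise finiteness of $\sum_u\abs{\sigma_u(\bb{y})}$. Take $l=k$ and continuous bumps $\sigma_u$ with disjoint supports shrinking to points $\bb{y}_u\to\bb{x}$, with heights chosen so that $c_{u,u}(k,k)\,\sigma_u^2(\bb{y}_u)=1$. Both conditions hold, yet $g(\bb{y}_u)=1$ while $g(\bb{x})=0$. Likewise, the local boundedness of $\sigma$ that you invoke for uniform integrability is not among the assumptions. The paper's truncation at $U_\epsilon$ is meant to supply this uniformity, but it controls the tail only under the marginal law of $\bb{X}_0$, not under the kernel-localized one.

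A single extra hypothesis repairs both problems: continuity of $\bb{y}\mapsto\sigma(\bb{y},\cdot)$ at $\bb{x}$ as a map into $L^2(\Scal)$. This follows, for instance, from locally uniform convergence of $\sum_u\abs{\sigma_u}$, which is the kind of condition the paper states explicitly only in Theorem~\ref{thm:simult-conf-1}. Since $\abs{\rho}\leqslant 1$ and $b_k,b_l$ are bounded, it gives continuity of $\sigma_{kl}$ at $\bb{x}$. By Cauchy--Schwarz it also gives $\abs{\eta_{tk}}\leqslant \sup_s\abs{b_k(s)}\,\norm{\sigma(\bb{X}_t,\cdot)}_{L^2}\norm{\epsilon_t}_{L^2(\Scal)}$. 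The first norm is bounded for $\bb{X}_t$ near $\bb{x}$, and the second is square integrable by \ref{assum:bounded-errors}. This domination is exactly what your truncation argument needs.

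Second, and more seriously, Step~2(ii) needs a \emph{localized} bound $\E(K_i e_{ik}^2)=o(\phi_{\bb{x}}(h_n\lambda))$ uniformly in $i$. This holds, for example, if $\sup_{\bb{y}\in B(\bb{x},\lambda h_n)}\E(e_{ik}^2\mid\bb{X}_{t_i}=\bb{y})\to 0$. No unconditional bound of the type in Proposition~\ref{lem:Yk-convergence}, in $L^1$ or in $L^2$, can give this, because the weights live on an event of probability $\phi_{\bb{x}}(h_n\lambda)\to 0$.

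Moreover, the conditioning enters through $\mu(\bb{y},\cdot)$ as well, via the deterministic Riemann-sum error of $\mu(\bb{y},\cdot)b_k$, not only through $\sigma(\bb{y},\cdot)$. What is required is therefore a conditional version of Proposition~\ref{lem:Yk-convergence}, uniform over $\bb{y}$ near $\bb{x}$. That would hold, for example, under equicontinuity in $s$ of $\mu(\bb{y},\cdot)$ and $\sigma(\bb{y},\cdot)$ for such $\bb{y}$, together with continuity of $\rho$. Your fallback does not avoid this, since the $e_{ik}e_{il}$ and $r_{ik}e_{il}$ terms still need a localized bound on $e_{ik}$. The paper's proof passes over the same point: it bounds $\E\bigl(\sum_i w_i\abs{e_{ik}}\abs{e_{il}}\bigr)$ with unconditional second moments, as if the random weights $w_i$ were deterministic.
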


Guided by Theorems~\ref{thm:asymptotic-normality} and ~\ref{thm:sigma-pointwise}, for any fixed $k \in \Z^+$ and $\bb{x} \in \chi$, we can now construct a Wald-style $100(1-\alpha)\%$ asymptotic confidence interval for $\mu_k(\bb{x})$ as illustrated in the following Corollary.
\begin{corr}
    Under the assumptions of Theorems~\ref{thm:asymptotic-normality} and~\ref{thm:sigma-pointwise}, for any fixed $k \in \Z^+$ and $\bb{x} \in \chi$, 
    \begin{equation*}
        \lim_{n \to \infty}\prob\left( \left\vert \widehat{\mu}_k(\bb{x}) - \tilde{b}_{nk}(\bb{x}) - \mu_k(\bb{x}) \right\vert \leq   \dfrac{z_{1-\alpha/2} \sqrt{\xi_2 \widehat{\sigma}_{k,k}(\bb{x}) }}{\xi_1 \sqrt{n\phi_{\bb{x}}(h_n\lambda)}} \right) = (1-\alpha),
    \end{equation*}
    \noindent $z_{1-\alpha/2}$ is the $(1-\alpha/2)^{th}$ quantile of the standard normal distribution, and the estimates $\widehat{\mu}_k(\bb{x})$ and $\widehat{\sigma}_{k k}(\bb{x})$ are as given by \eqref{eqn:mean-estimator} and \eqref{eqn:mode-reduced-variance-estimate} respectively. The same conclusion holds if one considers a bias-corrected version $\widehat{\mu}_{k}^{\ast, (h_n)}(\bb{x})$ as in \eqref{eqn:bias-corrected-estimate}.
\end{corr}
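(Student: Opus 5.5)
The corollary should follow from Theorem~\ref{thm:asymptotic-normality}, Theorem~\ref{thm:sigma-pointwise} and Slutsky's lemma. Write
\[
T_n = \frac{\sqrt{n\phi_{\bb{x}}(h_n\lambda)}\,\xi_1}{\sqrt{\xi_2\sigma_{kk}(\bb{x})}}\bigl(\widehat{\mu}_k(\bb{x}) - \mu_k(\bb{x}) - \tilde{b}_{nk}(\bb{x})\bigr),
\qquad
\widehat{T}_n = T_n \sqrt{\sigma_{kk}(\bb{x})/\widehat{\sigma}_{kk}(\bb{x})}.
\]
The event in the corollary is exactly $\{\lvert\widehat{T}_n\rvert \leq z_{1-\alpha/2}\}$, possibly after dropping an event whose probability vanishes.

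First, Theorem~\ref{thm:asymptotic-normality} gives $T_n \xrightarrow{d} N(0,1)$. Second, Theorem~\ref{thm:sigma-pointwise} gives $\widehat{\sigma}_{kk}(\bb{x}) \to \sigma_{kk}(\bb{x})$ in probability. The step I expect to need care is that $\sigma_{kk}(\bb{x}) > 0$. This is implicit in Theorem~\ref{thm:asymptotic-normality}, whose normalization divides by $\sqrt{\sigma_{kk}(\bb{x})}$, so I will state it as part of the standing assumptions.

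Given $\sigma_{kk}(\bb{x}) > 0$, the continuous mapping theorem applied to $y \mapsto \sqrt{\sigma_{kk}(\bb{x})/y}$, which is continuous at $y = \sigma_{kk}(\bb{x})$, shows the ratio converges to $1$ in probability. On the event $\{\widehat{\sigma}_{kk}(\bb{x}) \leq 0\}$, where the interval degenerates, the probability tends to zero and can be ignored. Slutsky's lemma then yields $\widehat{T}_n \xrightarrow{d} N(0,1)$.

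Since the limit law is continuous, Portmanteau applied to the interval $[-z_{1-\alpha/2}, z_{1-\alpha/2}]$, whose boundary has $N(0,1)$-measure zero, gives
\[
\prob\bigl(\lvert\widehat{T}_n\rvert \leq z_{1-\alpha/2}\bigr) \to 1-\alpha.
\]
This is the claim.

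For the bias-corrected version $\widehat{\mu}_k^{\ast,(h_n)}(\bb{x}) = 2\widehat{\mu}_k^{(h_n)}(\bb{x}) - \widehat{\mu}_k^{(2h_n)}(\bb{x})$, I would repeat the argument with $\widehat{\mu}_k$ replaced by $\widehat{\mu}_k^\ast$ and $\tilde{b}_{nk}$ replaced by its jackknife combination. This needs a CLT for the combination. I would obtain it by applying the joint big-block/small-block argument behind Theorem~\ref{thm:asymptotic-normality} to the pair $(\widehat{\mu}_k^{(h_n)}, \widehat{\mu}_k^{(2h_n)})$. The result is an asymptotically normal linear combination whose variance is a constant multiple of the $h_n$-variance, with the constant determined by the kernel constants $\xi_1, \xi_2$ and the cross term at bandwidths $h_n$ and $2h_n$.

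This variance bookkeeping is the main obstacle. The stated interval uses the same constant $\xi_2/\xi_1^2$, so I would argue that the ``same conclusion'' holds with the normalizing constant suitably adjusted, or with the normalizer estimated directly by the plug-in count of points in the kernel support. Everything else is Slutsky.
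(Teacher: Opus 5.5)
Your argument for the main claim is the one the paper relies on. The paper gives no separate proof; it presents the corollary as an immediate consequence of Theorem~\ref{thm:asymptotic-normality} and Theorem~\ref{thm:sigma-pointwise}, which is exactly your Slutsky, continuous-mapping and Portmanteau step. You also make $\sigma_{kk}(\bb{x})>0$ explicit and discard the event where $\widehat{\sigma}_{kk}(\bb{x})$ vanishes (it is a kernel-weighted sum of squares, so it is never negative). That is more careful than the paper, which leaves both implicit.

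For the bias-corrected clause the paper offers no argument, and your hesitation is justified. With the unchanged normaliser $\xi_2\widehat{\sigma}_{kk}(\bb{x})/(\xi_1^2 n\phi_{\bb{x}}(h_n\lambda))$ the clause fails in general. Take the uniform kernel $K=\lambda^{-1}\mathbf{1}_{[0,\lambda]}$, so that $\xi_2/\xi_1^2=1$, and put $r=\lim_n \phi_{\bb{x}}(h_n\lambda)/\phi_{\bb{x}}(2h_n\lambda)\in[0,1]$. The $h_n$-ball lies inside the $2h_n$-ball. So after multiplication by $n\phi_{\bb{x}}(h_n\lambda)$, the variance of $\widehat{\mu}_k^{(h_n)}(\bb{x})$ tends to $\sigma_{kk}(\bb{x})$, the variance of $\widehat{\mu}_k^{(2h_n)}(\bb{x})$ tends to $r\sigma_{kk}(\bb{x})$, and their covariance tends to $r\sigma_{kk}(\bb{x})$. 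Hence
\[
n\phi_{\bb{x}}(h_n\lambda)\,\var\bigl(2\widehat{\mu}_k^{(h_n)}(\bb{x})-\widehat{\mu}_k^{(2h_n)}(\bb{x})\bigr)\longrightarrow (4-3r)\,\sigma_{kk}(\bb{x}).
\]
This equals $\tfrac{5}{2}\sigma_{kk}(\bb{x})$ for $\chi=\R$ with a positive continuous density, and $4\sigma_{kk}(\bb{x})$ in the typical infinite-dimensional case $r=0$. The stated interval then has limiting coverage $2\Phi\bigl(z_{1-\alpha/2}/\sqrt{4-3r}\bigr)-1<1-\alpha$, with $\Phi$ the standard normal distribution function.

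The centring is a second problem. It has to be $2\tilde{b}_{nk}^{(h_n)}(\bb{x})-\tilde{b}_{nk}^{(2h_n)}(\bb{x})$, not $\tilde{b}_{nk}(\bb{x})$. Assumption~\ref{assum:bandwidth-decay} in fact forces $\sqrt{n\phi_{\bb{x}}(h_n\lambda)}\,h_n\to\infty$, so in the generic case $\tilde{b}^{(h)}_{nk}\approx c(\bb{x})h$ the mismatch diverges. Dropping the centring altogether would need two things the paper lacks: the bias expansion behind its $O(h_n^2)$ claim, which is never established, and an undersmoothing condition $\sqrt{n\phi_{\bb{x}}(h_n\lambda)}\,h_n^2\to0$, which is not assumed.

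Your proposed repair is on the right track but incomplete. The inflation factor depends on the small-ball ratio $r$, which is a property of the law of $\bb{X}_0$ near $\bb{x}$, not only on kernel constants. Its existence is itself an extra assumption. A plug-in count of points in the kernel support only replaces $n\phi_{\bb{x}}(h_n\lambda)$ and does not capture this factor.

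A clean fix is to self-normalise with the jackknife weights. Use $\widehat{\sigma}_{kk}(\bb{x})\sum_i\bigl(2w_i^{(h_n)}-w_i^{(2h_n)}\bigr)^2$ as the variance estimate, where $w_i^{(h)}=K^{(h)}_{t_i}/\sum_j K^{(h)}_{t_j}$. This needs the joint two-bandwidth CLT you describe, plus laws of large numbers for $\sum_i (K^{(h)}_{t_i})^2$ and $\sum_i K^{(h_n)}_{t_i}K^{(2h_n)}_{t_i}$ of the type in Lemma~\ref{lem:muk1-conv}. Given these, the estimate is ratio-consistent for the variance above and requires neither $r$ nor $\xi_1,\xi_2$.
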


However, to produce a confidence interval for the entire mean function $\mu(\bb{x}, s)$, not just the individual components, we need to additionally consider the existence of a dominating function $b_\infty(s)$ on the basis functions, as in Theorem~\ref{thm:pointwise-full}. Then, starting with any $\epsilon > 0$, we obtain a sufficiently large $K_\epsilon \in \Z^+$ as guaranteed by Theorem~\ref{thm:pointwise-full}, which by means of \eqref{eqn:mean-total-normal} produces a $100(1-\alpha)\%$ confidence interval for $\mu(\bb{x}, s)$ with the endpoints given by
\begin{equation*}
     \widehat{\mu}^\ast_{1:K_\epsilon}(\bb{x},s) \pm \left( \dfrac{z_{1-\alpha/2} \sqrt{\xi_2 Q_{K_\epsilon}(\bb{x},s) }}{\xi_1 \sqrt{n\phi_{\bb{x}}(h_n\lambda)}} +\epsilon b_\infty(s) \right),
\end{equation*}
where $b_\infty(\cdot)$ is as given in Theorem~\ref{thm:pointwise-full} and
\begin{align*}
    \widehat{\mu}_{1:K}^\ast(\bb{x},s) & =  \sum_{k=1}^{K}\widehat{\mu}_{k}^{\ast, (h_n)}(\bb{x})b_k(s), \ 
    \bb{b}_{1:K}(s) =  (b_1(s), \dots, b_{K})\tr, \\
    \widehat{\bb{\Sigma}}_{1:K, 1:K}(\bb{x}) & =  ((\widehat{\sigma}_{kl}(\bb{x}) ))_{k,l=1}^{K}, \ Q_K(\bb{x},s) = \bb{b}_{1:K}\tr(s)\widehat{\bb{\Sigma}}_{1:K, 1:K}(\bb{x}) \bb{b}_{1:K}(s).
\end{align*}

It is possible to extend the above pointwise confidence interval for each fixed $\bb{x} \in \chi$ to a simultaneous confidence interval over a dense subset $\chi_n$ of $\chi$ given by $\left\{ \bb{x} : \bb{x} \in \chi, \text{ and } B(\bb{x}, 2h_n\lambda) \cap \chi = \phi \right\}$, where $B(\bb{x}, 2h_n\lambda)$ denotes the infinite-dimensional ball centered at $\bb{x}$ with radius $2h_n\lambda$ with respect to the scaled distance metric. This means, for any $\bb{x} \neq \bb{x}' \in \chi_n$, the scaled norm $\Vert{\bb{D}^{-1}(\bb{x} - \bb{x}')}\Vert_2 > 2h_n\lambda$.  The following theorem establishes this uniform probabilistic bound over all $x \in \chi_n$ but for a fixed $s \in \Scal$. 

\begin{thm}\label{thm:simult-conf-1}
    Fix any $s \in \Scal$. Suppose that the Assumptions~\ref{assum:mean-sigma-bound}-\ref{assum:bandwidth-zero} hold. Let $\chi_n$ be a countable, bounded subset of $\chi$ such that for any $\bb{x}_0 \in \chi_n$, there is a local neighborhood $B(\bb{x}_0, \delta)$ for which the infinite series $\sum_{k=1}^\infty \mu_k(\bb{x})$ converges uniformly over all $\bb{x} \in B(\bb{x}_0, \delta)$ with $\delta$ independent of $n$. In addition, assume that there is a dominating function $b_\infty(s)$ such that $\vert b_k(s)\vert \leqslant b_\infty(s)$ for this $s \in \Scal$. Then for any $\epsilon > 0$, there exists $K_\epsilon \in \Z^+$ such that
    \begin{equation}
        \lim_{n \rightarrow \infty}\mathbb{P}\left( \sup_{\bb{x} \in \chi_n} \dfrac{\xi_1 \sqrt{n\phi_{\bb{x}}(h_n\lambda)}}{\sqrt{\xi_2 Q_{K_\epsilon }(\bb{x},s) }} \; \biggl\vert \abs{\widehat{\mu}^\ast_{1:K_\epsilon}(\bb{x},s) - \mu(\bb{x},s)} - \epsilon b_\infty(s) \biggr\vert < B_{m_n}(z) \right) \geqslant e^{-2e^{-z}}
        \label{eqn:conf-simult-1}
    \end{equation}
    for any fixed $z > 0$ and $s \in \Scal$. Here, 
    \begin{equation*}
        B_{m_n}(z) =  \sqrt{2\log(m_n)} - \dfrac{1}{\sqrt{2\log(m_n)}}\left[ \dfrac{1}{2}\log(\log(m_n)) + \log(2\sqrt{\pi}) \right] + \dfrac{z}{\sqrt{2\log(m_n)}}, \ m_n \geqslant 2,
    \end{equation*}
    and $m_n = \vert \chi_n\vert$, the number of elements in the set $\chi_n$.
\end{thm}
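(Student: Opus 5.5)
The plan is to reduce the supremum over $\chi_n$ to a maximum of asymptotically independent standard normal variables, and then use the classical Gumbel limit for Gaussian maxima. The normalizing sequence for that limit is exactly $B_{m_n}(z)$, with $\prob(\max_{j\le m}|Z_j| \le B_m(z)) \to e^{-2e^{-z}}$ for $m$ iid standard normals.

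\textbf{Step 1: reduce to a truncated sum.} Fix $\epsilon>0$. The uniform convergence of $\sum_k \mu_k(\bb{x})$ on the neighbourhoods $B(\bb{x}_0,\delta)$ and the domination $|b_k(s)|\le b_\infty(s)$ give a single $K_\epsilon$ with
\begin{equation*}
\sup_{\bb{x}\in\chi_n}\Bigl|\sum_{k>K_\epsilon}\mu_k(\bb{x})b_k(s)\Bigr|\le \epsilon b_\infty(s).
\end{equation*}
Boundedness of $\chi_n$ is used here so that finitely many neighbourhoods suffice. By the reverse triangle inequality, $\bigl||\widehat{\mu}^\ast_{1:K_\epsilon}(\bb{x},s)-\mu(\bb{x},s)|-\epsilon b_\infty(s)\bigr|$ is then at most $|T_n(\bb{x})|$, where
\begin{equation*}
T_n(\bb{x})=\sum_{k\le K_\epsilon}\bigl(\widehat{\mu}^{\ast,(h_n)}_k(\bb{x})-\mu_k(\bb{x})\bigr)b_k(s).
\end{equation*}
It therefore suffices to show
\begin{equation*}
\liminf_n \prob\Bigl(\max_{\bb{x}\in\chi_n} W_n(\bb{x})|T_n(\bb{x})|<B_{m_n}(z)\Bigr)\ge e^{-2e^{-z}},
\end{equation*}
where $W_n(\bb{x})$ is the normalizing factor appearing in the statement.

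\textbf{Step 2: linearize.} Following the decomposition behind Theorem~\ref{thm:asymptotic-normality}, write $W_n(\bb{x})T_n(\bb{x})=S_n(\bb{x})+R_n(\bb{x})$. The leading term is
\begin{equation*}
S_n(\bb{x})=\frac{\xi_1}{\sqrt{\xi_2 Q}}\,\frac{1}{\sqrt{n\phi_{\bb{x}}(h_n\lambda)}\,\E K_0}\sum_i K_{t_i}(\bb{x})\sum_{k\le K_\epsilon}\eta_{t_ik}b_k(s).
\end{equation*}
The remainder $R_n(\bb{x})$ collects four pieces:
\begin{itemize}
\item the Monte-Carlo error, controlled by Proposition~\ref{lem:Yk-convergence} and the second half of Assumption~\ref{assum:bandwidth-zero};
\item the jackknifed bias, which is $O(h_n^2)$;
\item the fluctuation of the denominator;
\item the replacement of $\widehat{\sigma}_{kl}$ by $\sigma_{kl}$, via Theorem~\ref{thm:sigma-pointwise}.
\end{itemize}
Because $B_{m_n}(z)\asymp\sqrt{\log m_n}$, each piece must satisfy $\max_{\bb{x}\in\chi_n}|R_n(\bb{x})|=o_\prob((\log m_n)^{-1/2})$. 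I would get this through a union bound with second-moment (Chebyshev) bounds, which requires $m_n$ to grow slowly relative to the rates in Assumption~\ref{assum:bandwidth-zero}. I would state that growth restriction explicitly.

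\textbf{Step 3: independence across $\chi_n$.} The points of $\chi_n$ are more than $2h_n\lambda$ apart, and the kernel has support $[0,\lambda]$. Hence for each $i$, at most one $\bb{x}\in\chi_n$ has $K_{t_i}(\bb{x})\neq0$, so for $\bb{x}\ne\bb{x}'$ the products $K_{t_i}(\bb{x})K_{t_i}(\bb{x}')$ vanish. Combining this with the conditional mean-zero property of $\eta_{tk}$, the terms $S_n(\bb{x})$ are martingale sums with asymptotically orthogonal increments across different $\bb{x}$. Their conditional variances converge to $1$ by the computation in Theorem~\ref{thm:asymptotic-normality}, using \eqref{eqn:model-reduced-variance-cond} and \eqref{eqn:kernel-convergence}.

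\textbf{Step 4: Gaussian approximation of the maximum.} The main obstacle is upgrading the fixed-dimensional CLT to the vector $(S_n(\bb{x}))_{\bb{x}\in\chi_n}$ with growing dimension $m_n$, uniformly enough to control the maximum. I would first apply the big-block/small-block construction from the proof of Theorem~\ref{thm:asymptotic-normality}, with block lengths $n^\alpha$ and $n^\beta$. Replacing $\bb{X}_t$ by $m$-dependent couplings, with error controlled by $\Delta_2(n^\beta)=O(n^{-\beta(\tau-1)})$ and Proposition~\ref{lem:f-Xk-bound}, makes the big blocks independent. Then I would use a martingale Gaussian approximation for maxima of sums of independent vectors, in the style of Chernozhukov--Chetverikov--Kato, or alternatively a moderate-deviation (Cram\'er-type) bound for each coordinate. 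Either route gives
\begin{equation*}
\prob\Bigl(\max_{\bb{x}}|S_n(\bb{x})|\le B_{m_n}(z)\Bigr)=\prob\Bigl(\max_{j\le m_n}|Z_j|\le B_{m_n}(z)\Bigr)+o(1),
\end{equation*}
where the $Z_j$ are Gaussian with identity limit covariance (Step 3). If the approximating covariance has small nonzero off-diagonals, a Berman-type comparison still yields $e^{-2e^{-z}}$. Only second moments are assumed, so I expect to need truncation of $\eta_{tk}$, and this is where the proof is most delicate. The inequality ``$\ge$'' rather than equality in the statement reflects the reverse-triangle reduction of Step 1.
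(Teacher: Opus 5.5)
Your Step~1 reduction is false, and the failure cannot be repaired. Write $\widehat{\mu}^\ast_{1:K_\epsilon}(\bb{x},s)-\mu(\bb{x},s)=T_n(\bb{x})-R(\bb{x})$, where $R(\bb{x})=\sum_{k>K_\epsilon}\mu_k(\bb{x})b_k(s)$ and $|R(\bb{x})|\le\epsilon b_\infty(s)$. The reverse triangle inequality only gives $|\widehat{\mu}^\ast_{1:K_\epsilon}-\mu|-\epsilon b_\infty(s)\le|T_n(\bb{x})|$. The other side, $\epsilon b_\infty(s)-|\widehat{\mu}^\ast_{1:K_\epsilon}-\mu|$, is not bounded by $|T_n|$: take $T_n=R=0$.

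No argument can rescue the two-sided form. Take $\mu\equiv0$, so that $R=0$ for every $K$. The event in the statement then forces $\bigl||T_n(\bb{x})|-\epsilon b_\infty(s)\bigr|<B_{m_n}(z)/W_n(\bb{x})$ at every $\bb{x}\in\chi_n$, i.e.\ $|T_n(\bb{x})|\approx\epsilon b_\infty(s)>0$, while $T_n(\bb{x})\to0$ in probability. Here $B_{m_n}(z)/W_n(\bb{x})\to0$: by \ref{assum:bandwidth-decay} (holding uniformly on $\chi_n$), $W_n(\bb{x})\asymp\sqrt{n\phi_{\bb{x}}(h_n\lambda)}\gg n^{1/4}$, whereas $B_{m_n}(z)\asymp\sqrt{\log m_n}$ and disjointness of the kernel balls gives $m_n\le 1/\min_{\chi_n}\phi_{\bb{x}}(h_n\lambda)$. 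So that probability tends to $0$.

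What is provable is the one-sided event, with $|\widehat{\mu}^\ast_{1:K_\epsilon}-\mu|-\epsilon b_\infty(s)$ in place of its absolute value. This is exactly what the paper's sandwich $\alpha_{n,K}(|\beta_{n,K}|-\epsilon b_\infty(s))\le\alpha_{n,K}|\gamma_{n,K}|\le\alpha_{n,K}(|\beta_{n,K}|+\epsilon b_\infty(s))$ delivers, and what the band $\widehat{\mu}^\ast_{1:K_\epsilon}\pm(B_{m_n}(z)/W_n+\epsilon b_\infty(s))$ uses. Note also that for this one-sided event the fixed slack $\epsilon b_\infty(s)$ dwarfs $B_{m_n}(z)/W_n$. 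If you choose $K_\epsilon$ with $\sup_{\chi_n}|R|\le\epsilon b_\infty(s)/2$, the event holds whenever $\max_{\chi_n}|T_n(\bb{x})|<\epsilon b_\infty(s)/2$. Uniform consistency on $\chi_n$ therefore already gives limit $1$, and neither your Steps~2--4 nor the paper's extreme-value step is needed for the inequality as stated.

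If you want the sharp Gumbel statement, Step~2 mislinearizes the jackknifed estimator. The combination $\widehat{\mu}^{\ast,(h_n)}_k=2\widehat{\mu}^{(h_n)}_k-\widehat{\mu}^{(2h_n)}_k$ changes more than the bias. Its stochastic part is $2L^{(h_n)}-L^{(2h_n)}$, with $L^{(h)}$ the kernel-weighted average of the $\eta_{t_ik}$ at bandwidth $h$. Your $S_n(\bb{x})$ keeps only $L^{(h_n)}$, and the discrepancy $L^{(h_n)}-L^{(2h_n)}$ is of the same order, not $o_\prob((\log m_n)^{-1/2})$. For the uniform kernel, $\var(2L^{(h_n)}-L^{(2h_n)})\approx\bigl(4-3\phi_{\bb{x}}(h_n\lambda)/\phi_{\bb{x}}(2h_n\lambda)\bigr)\var(L^{(h_n)})$. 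This factor exceeds one, and is near $4$ for typical infinite-dimensional small-ball rates. So after normalizing by $W_n$ the coordinates are not standard normal, and their maximum is of order $\sqrt{v\,2\log m_n}$ with $v>1$, overshooting $B_{m_n}(z)$. The $2h_n$ kernel also has support radius $2h_n\lambda$, so the disjointness you use in Step~3 fails for points only $2h_n\lambda$ apart. The paper's proof runs this step for the plain estimator with its bias $b^\ast_{n,1:K}$ subtracted (its $\beta_{n,K}$), which is the object $W_n$ is calibrated for.

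Beyond that, your Step~4 is a genuinely different route. The paper gets the Gumbel limit in one line, from ``zero covariance'' across $\chi_n$ plus the Zhao--Wu extreme-value argument. Your version correctly exposes that uncorrelatedness is not enough when $m_n\to\infty$. Moderate-deviation control at level $\sqrt{2\log m_n}$ needs:
\begin{itemize}
\item moments beyond the second, or a slow, distribution-dependent growth of $m_n$;
\item uniformity of \ref{assum:bandwidth-decay} and \ref{assum:bandwidth-zero} over $\chi_n$;
\item a rate tying $m_n$ to $n$ and $h_n$.
\end{itemize}
The statement supplies none of these.

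Finally, two further leaps in your Step~1 are shared with the paper. In $\bb{D}\mathcal{L}^2$, boundedness of $\chi_n$ does not yield finitely many $\delta$-balls; you need $\bigcup_n\chi_n$ to be totally bounded. And bounding $|\sum_{k>K}\mu_k(\bb{x})b_k(s)|$ by a multiple of $b_\infty(s)$ requires uniform smallness of $\sum_{k>K}|\mu_k(\bb{x})|$, not of $|\sum_{k>K}\mu_k(\bb{x})|$.
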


The size of the confidence set is controlled by $m_n$, which depends on the properties of the covariate domain $\chi$. For example, if $\chi$ is a subset of Euclidean vector space, then $m_n$ can be approximated by typical covering number metrics, while if $\chi$ is a smooth-function class over a field $\mathfrak{F}$, $m_n$ can be approximated by the covering number of $\mathfrak{F}$ multiplied by appropriate functions of VC-dimension or Dudley's entropy integral for the function class. We find it imperative to point out that such a Gumbel distribution-based simultaneous confidence interval arises in many situations: see~\cite{zhao2007inference, zhou2010simultaneous, ma2012simultaneous, deb2024nonparametric} for some useful applications.

\section{Real-life example: Analyzing air pollution in Delhi}\label{sec:realdata}

We consider a dataset on air pollution in India, extracted from the government website of the Central Pollution Control Board (CPCB: \url{https://cpcb.nic.in/}). The dataset consists of measurements of various pollutants (e.g. $\ppm$, PM10, $\text{NO}_2$, etc.), collected from stations irregularly distributed across different regions of India, with many not actively monitored at the moment. Keeping in view the continuously worsening pollution situation in Delhi \citep{dutta2022air}, we focus on the 38 stations in the national capital region for our analysis. Figure~\ref{fig:station-locations} shows the active status and locations of these measurement stations, most of which are concentrated around the union territory of Delhi and nearby regions.

\begin{figure}[!ht]
    \centering
    \includegraphics[width=0.48\linewidth]{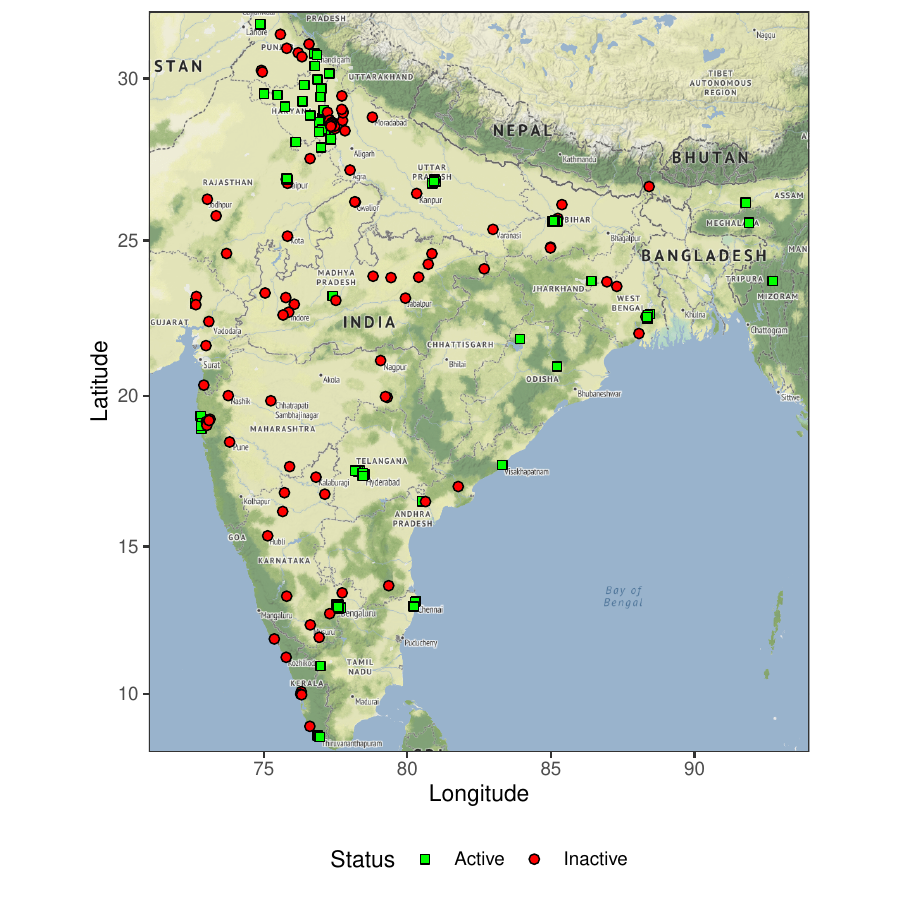}%
    \raisebox{0.9cm}{\includegraphics[width=0.42\linewidth]{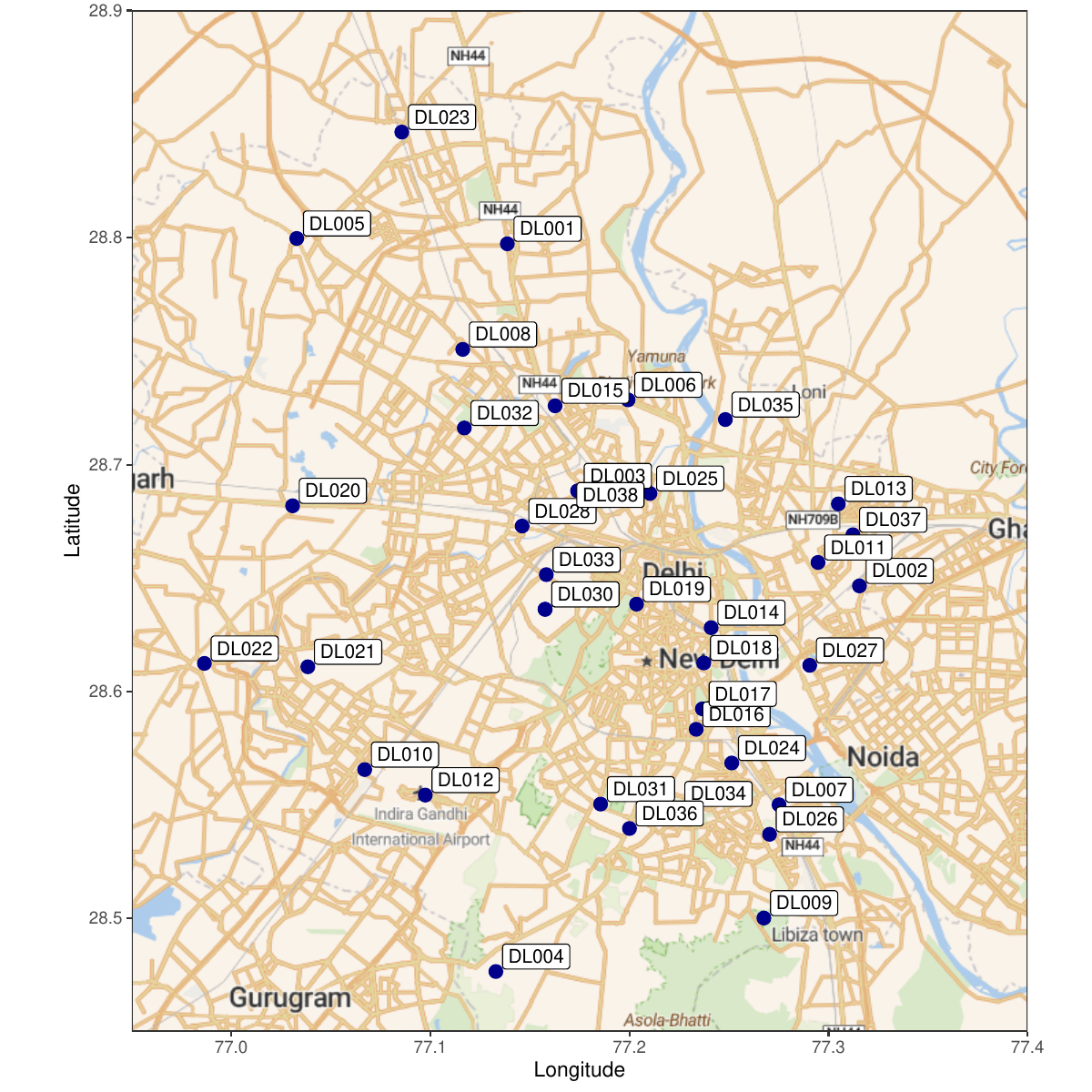}}%
    \caption{The locations of the measurement stations across India and their status (Left), and in the union territory of Delhi region (Right)}
        \label{fig:station-locations}
\end{figure}

The pollutant measurements are collected at an hourly rate between January, 2015 to June, 2020; though these ranges differ from one station to another. Empirically, all pollutant measurements display a positively skewed pattern. Thus, we apply the log-transformation to all of them for the main analysis, and use $\log(\ppm)$ as our main response variable $Y_t(s)$, for station $s$ and timepoint $t$. In general, we shall use $\bb{Z}_t(s)$ to denote the collection of all pollutant measurements (including $\ppm$) available at station $s$ and timepoint $t$. As an illustration, the temporal variation of the log-transformed $\ppm$ measurements are presented in Figure~\ref{fig:stations-pm25}, with a few selected stations highlighted. All $38$ stations demonstrate a consistent pattern, where the $\ppm$ values are usually higher in winter, show a decreasing trend till the months of late summer, and remain low during the monsoon months~\citep{roy2020pollution}.

\begin{figure}[htpb]
    \centering
    \includegraphics[width=0.7\linewidth]{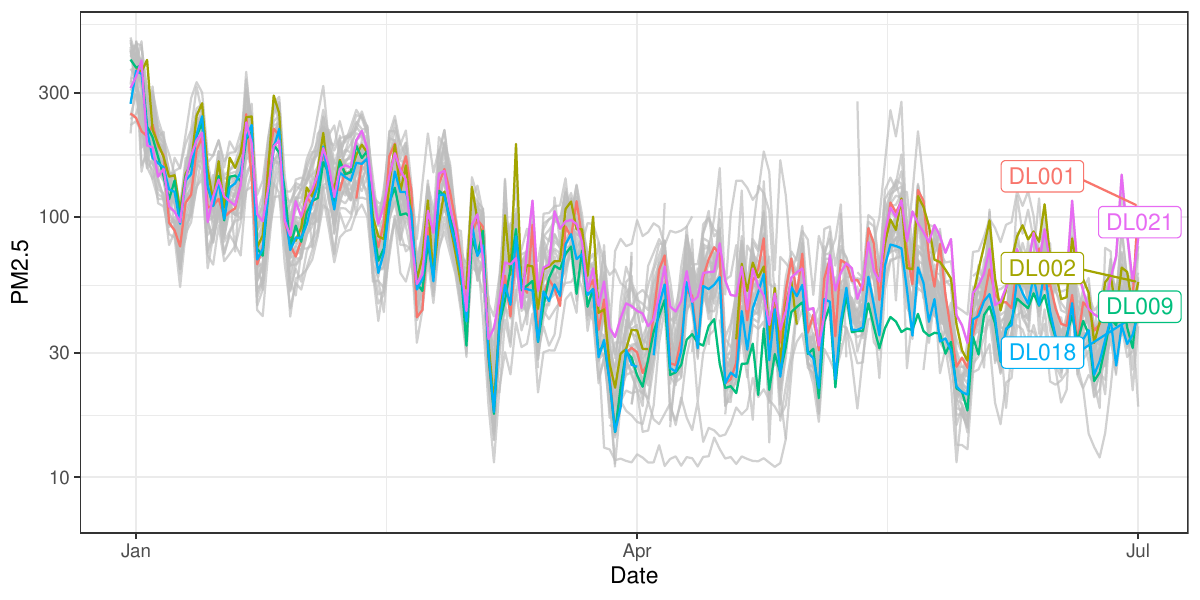}
    \caption{Hourly measurements of PM2.5 in different measurement stations in Delhi for the year 2020; some representative locations are highlighted by colour (y-axis is in log-scale).}
    \label{fig:stations-pm25}
\end{figure}

\subsection{Application of the proposed method}

To demonstrate the applicability of our estimator described in \Cref{sec:mean-estimation}, we model $Y_t(s)$ using $\{\bb{Z}_j(s) : s \in \Scal, \; j < t\}$ (this set of covariates is denoted as $\bb{X}_t$). As it includes measurements of different pollutants in vastly different scales, we also perform a normalization step for each. Although our theoretical analysis considers $\bb{X}_t$ to be infinite-dimensional, the number of historical observations is always finite in practice. Hence, we take these finite-dimensional covariates $\bb{X}_t' =  (X_{t1}, X_{t2}, \dots, X_{tp})$ and augment infinitely many zeros (i.e., $\bb{X}_t' =  (X_{t1}, \dots, X_{tp}, 0, 0, \dots)$) to make it infinite-dimensional and coherent with the theories derived above. As a distance metric, we choose a discounted $\mathcal{L}_2$-metric that discounts past observations with lag $l$ by $\phi^l$ for some $\phi \in (0, 1)$. We choose $\phi = 0.9$ by looking at the partial autocorrelation function of the log-transformed pollutant measurements; however, as discussed below, results remain fairly stable for any $\phi \geqslant 0.6$. Additionally, we use orthogonal Legendre polynomials on $[0, 1]^2$ as basis functions, and make a judicious choice of the number of basis functions via a leave-one-location-out cross-validation method. As illustrated in the introduction, the missingness of the observations brings challenges for standard spatiotemporal techniques with the subset of only complete cases or techniques involving regularly-spaced time series observations. Our approach circumvents these issues by modeling through irregularly-spaced time series data and using infinite-dimensional covariates spanning across multiple locations.

To study the efficacy of our proposed estimator, it may be useful to look at the two different plots given in Figures~\ref{fig:pollution-spatial-inference} and~\ref{fig:pm25_prediction}. The first one primarily illustrates the spatial aspect of the fitted model by depicting the estimated level of the response variable across the entire region. The map shows that the PM2.5 levels are higher in central Delhi, corresponding to the increasing population density of the city, while the levels are lower in the suburbs. Also, the spatial structure is non-isotropic: the PM2.5 levels are higher in the north-west direction compared to the south-east part of Delhi. We also depict this spatial variation across the first day of three different months and for two times of the day (Morning 10 AM, and evening 8 PM). The intensity of the difference in pollution levels between central Delhi and suburbs are most prominent during colder months and decays as the weather gets warmer. 

\begin{figure}[!ht]
    \centering
    \includegraphics[width=0.8\textwidth]{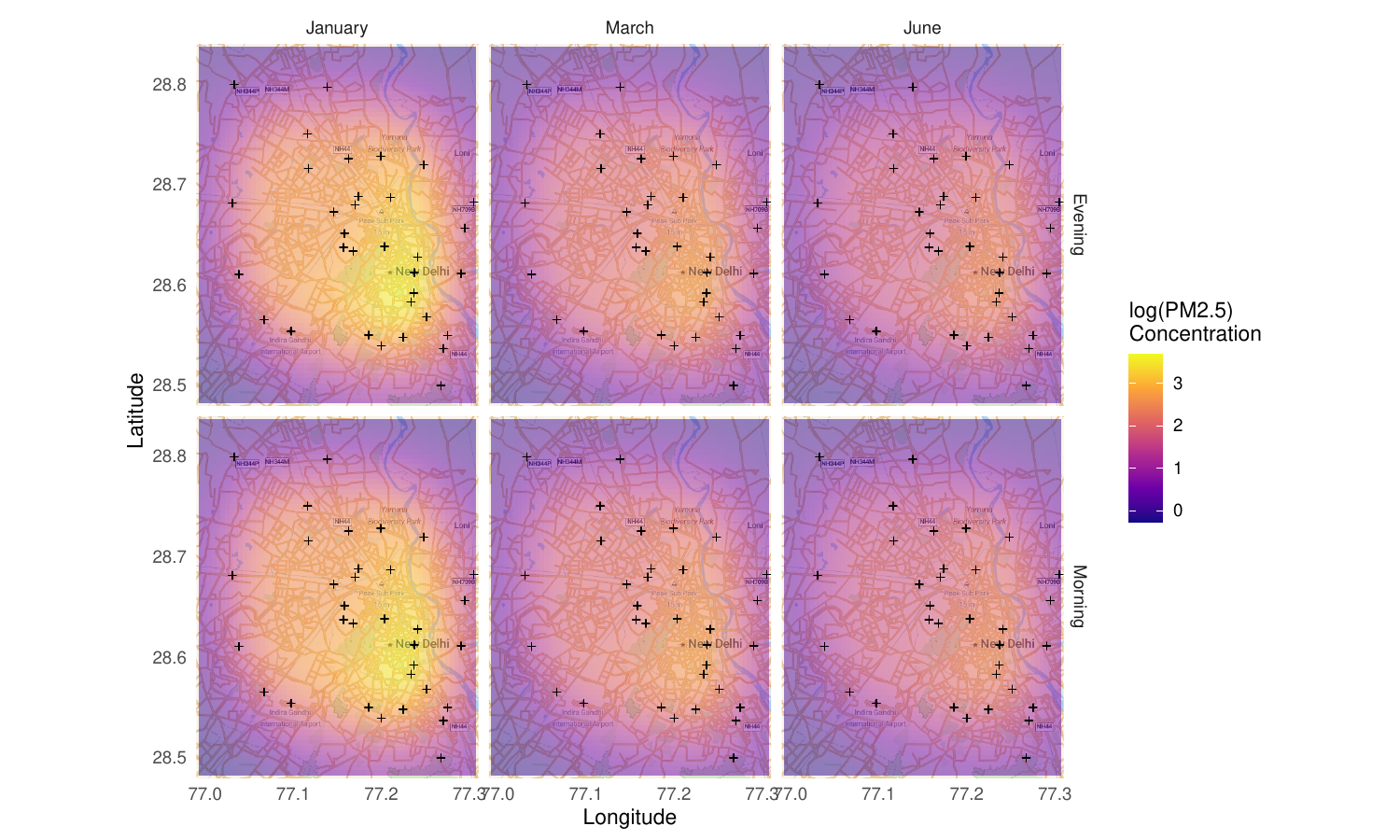}
    \caption{Estimated mean levels of PM$2.5$-concentration across the entire Delhi region for three different months and during two specific times of the day (Morning 10 AM and evening 8 PM). The black plus points indicate the locations of the measurement stations.}
    \label{fig:pollution-spatial-inference}
\end{figure}

Next, in Figure~\ref{fig:pm25_prediction}, we depict the predicted mean response for the month of June, 2020 across four randomly selected stations DL001, DL002, DL018 and DL021 using the pollutant measurements of their closest 3 locations (including self) as the covariates. For ease of visualization, we plot the response and estimates aggregated every 4 hours. This figure presents the temporal aspect captured by the fitted model. Although the predicted response matches the general pattern of the observed datapoints, the performance of the estimation depends on the specific location considered, and possibly each location requires a different number of neighboring locations' data as covariates, depending on several external factors such as the terrain heights, the density of vegetation, and the primary usage of the land area in those regions.

\begin{figure}[!ht]
    \centering
    \includegraphics[width=0.4\textwidth]{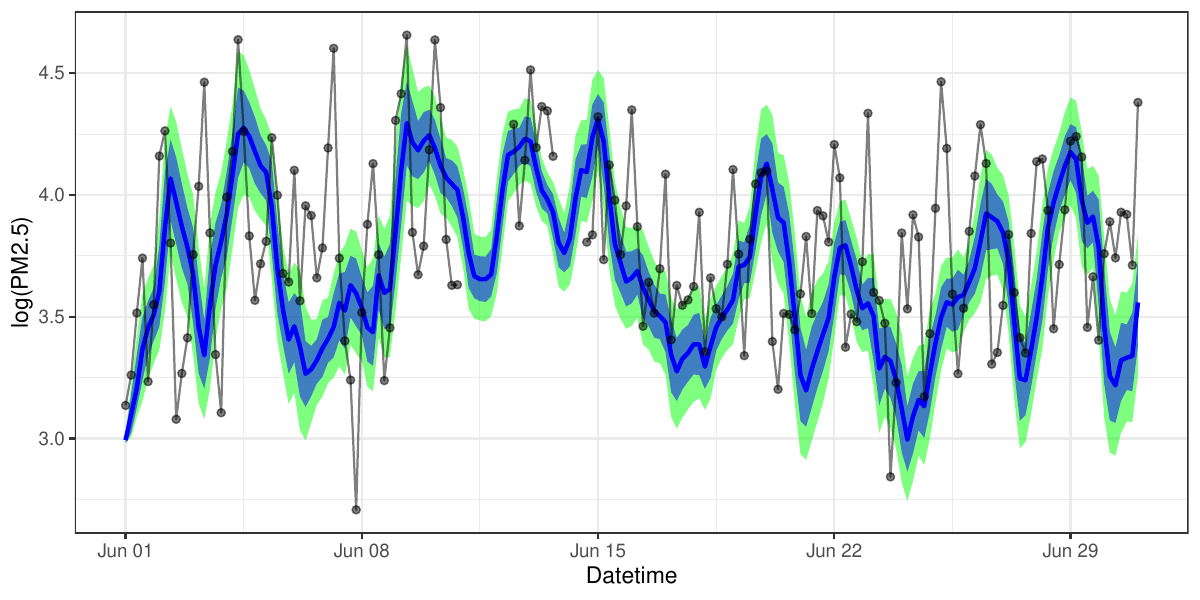}
    \includegraphics[width=0.4\textwidth]{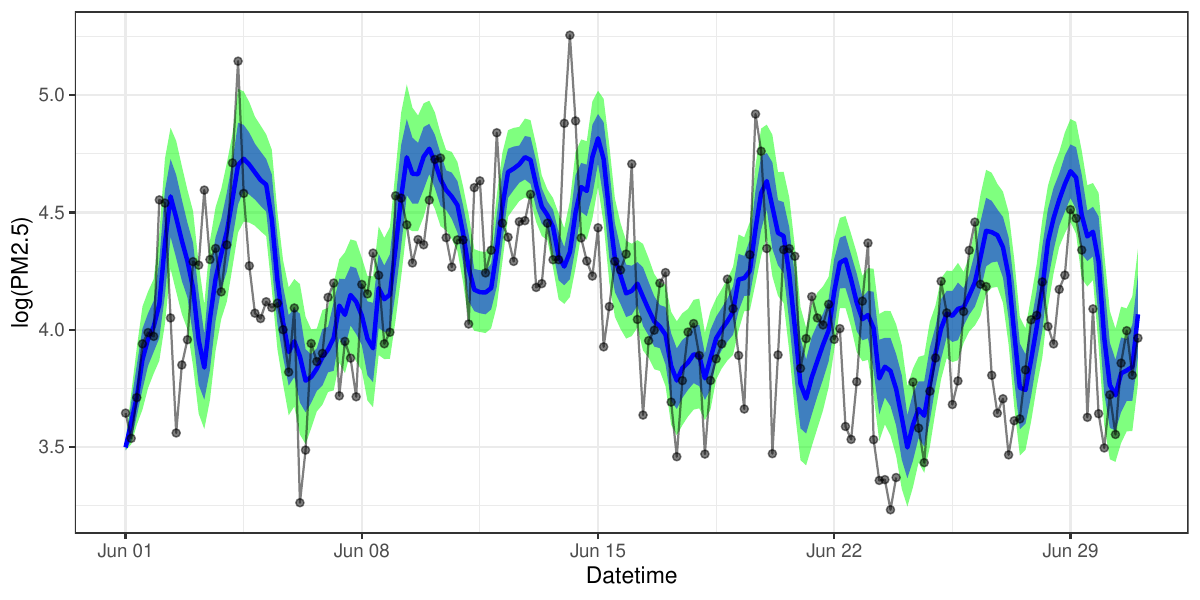}
    \includegraphics[width=0.4\textwidth]{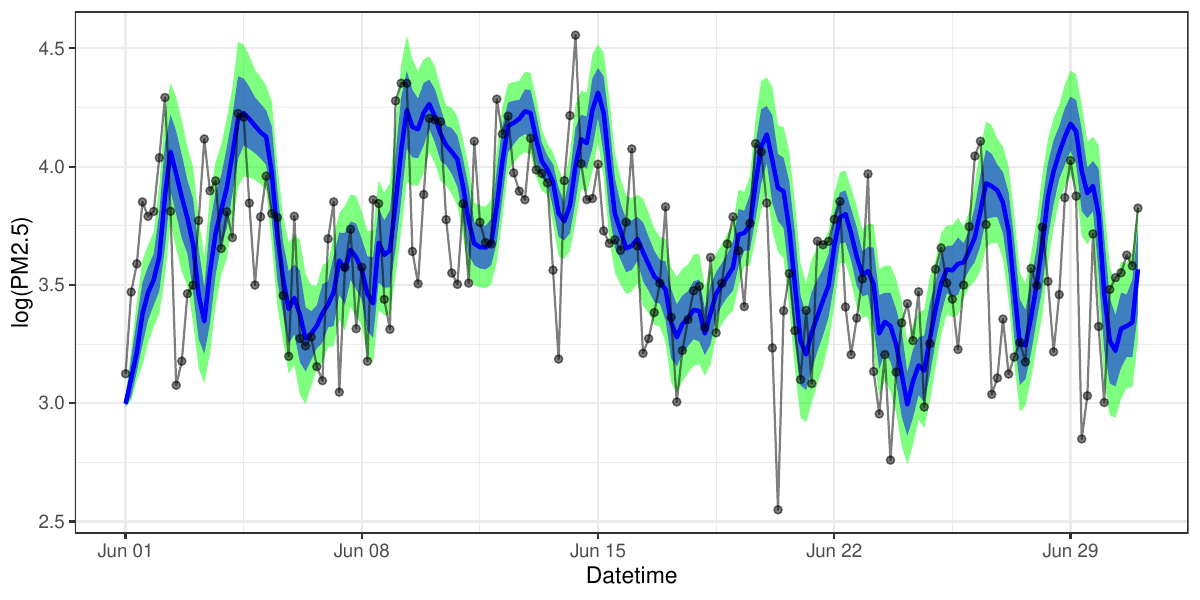}
    \includegraphics[width=0.4\textwidth]{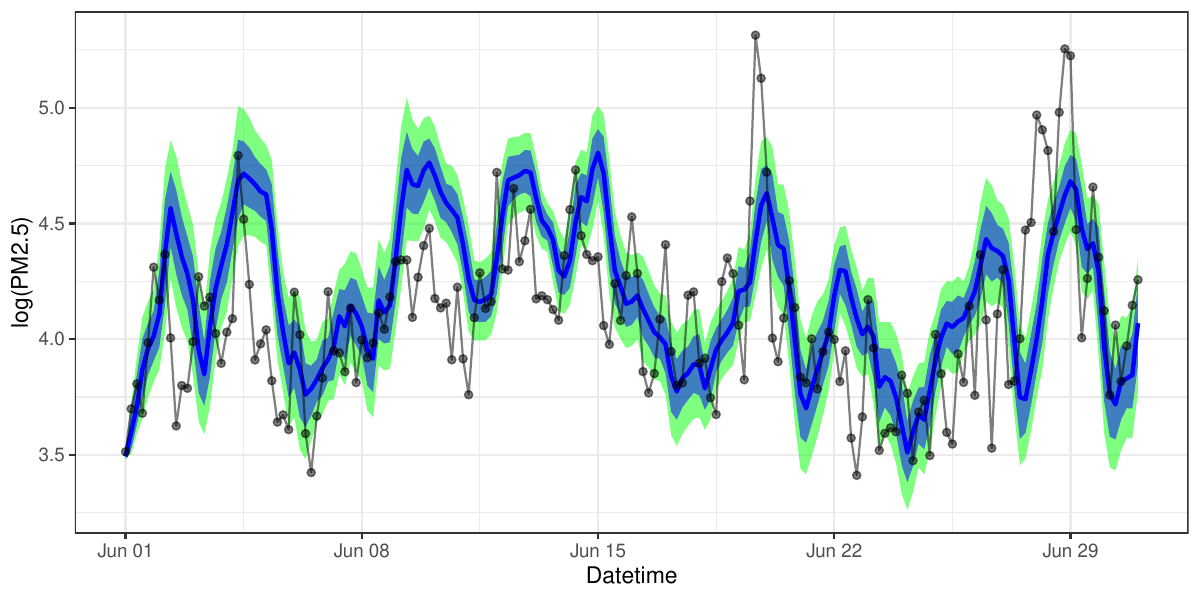}
    \caption{The predicted mean response (log of PM2.5 measurements) for four different measurement stations DL001 (top-left), DL002 (top-right), DL018 (bottom-left), DL021 (bottom-right). The blue band depicts the pointwise confidence interval, and the green band depicts the simultaneous confidence band across all timepoints in June 2020.}
    \label{fig:pm25_prediction}
\end{figure}

To investigate further, we perform estimation for each of the $38$ measurement stations for the month of June, 2020 by considering different numbers of covariates taken from its nearest $r$ locations, where $r$ is varied as $r = 1, \dots, 38$. For example, $r = 1$ indicates that only the past observations of the particular location is considered as covariate, and no spatial effect from other locations is considered in the modeling. With this experiment, for each location, we obtain the optimal number of nearest locations to be taken for covariates by minimizing the RMSE metric between the predicted values and the true log-transformed PM2.5 observations. In Figure~\ref{fig:stations-pm25-result}, we describe this optimal number of locations in the unit of kilometers around each location as a circle, while superimposing the data on the purpose of land usage.

\begin{figure}[!ht]
    \centering
    \includegraphics[width=0.5\textwidth]{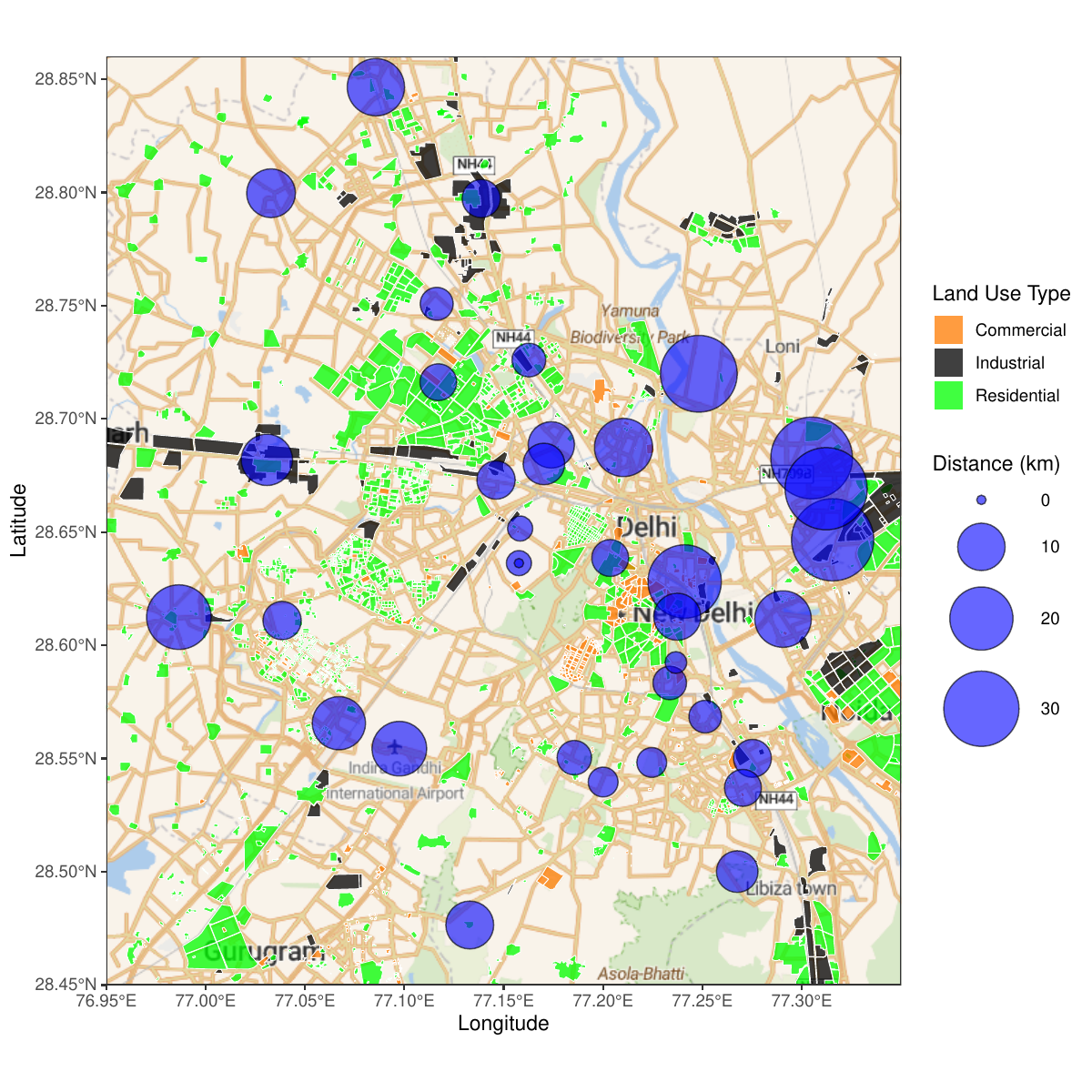}
    \caption{The best choice of the number of spatial covariates (in km) in predicting the PM2.5 measurements for June 2020 in every station at Delhi, along with landuse data.}
    \label{fig:stations-pm25-result}
\end{figure}

As a general pattern from Figure~\ref{fig:stations-pm25-result}, we see that the stations in the central Delhi require much less information for forecasting compared to the measurement stations in the outskirts. This is possibly because the locations of central Delhi are homogeneous in terms of land usage, residential and economic activities, and hence measurements from each additional nearby station in that region have less novel information to account for. On the other hand, the three stations on the east of the Yamuna river require information on many locations as its nearby industrial land usage and related activities are very different from the residential activities on the other side of the river. We believe that a more detailed analysis of the correlation between various factors and the forecasts from our model may provide useful insights about the optimal and strategic placements of these measurement stations in the future.

\subsection{Sensitivity analysis}

There are a few hyperparameters of the proposed nonparametric regression method. It is important to understand to what effect these hyperparameters affect the quality of the final prediction. To this end, we consider two different exercises. For the first one, we aim to analyze the degradation in prediction performance between in-training vs out-of-training locations. Let $s_0 \in \Scal$ be a fixed location. Then, we consider two separate fitted models $\mathcal{M}_{\text{in}}$ and $\mathcal{M}_{\text{out}}$ trained respectively on two separate datasets $\Dcal_{\text{in}}$ and $\Dcal_{\text{out}}$ given by
\begin{equation*}
    \mathcal{D}_{\text{out}}
    = \left\{ (Y_t(s), \bb{X}_t(s)): t \in \Tcal, s \in \Scal \setminus \{ s_0 \} \right\},\
    \mathcal{D}_{\text{in}}
    = \mathcal{D}_{\text{out}} \cup \left\{ (Y_t(s_0), \bb{X}_t(s_0)): t \in \Tcal \right\}.
\end{equation*}

The one-step ahead forecasts for the location $s_0$ from these two models are compared against each other using Diebold-Mariano (DM) test~\citep{diebold2015comparing}. The results of the DM test, along with the prediction error metric MAPE are summarized in Table~\ref{tab:in-vs-out-pred-3up}. It can be seen that for most of the locations ($30$ out of $35$), the null hypothesis (i.e., the two forecasts are same) is not rejected at the nominal level of $\alpha = 0.05$. Also, the error in prediction remains within the range of $10\%-15\%$ at most of the locations. This serves as an additional verification of the fact that our proposed nonparametric method performs equally well to forecast the levels of PM2.5 at an unknown location not present in the training data.

\begin{table}[htbp]
\centering
\caption{MAPE of the two forecast models where the prediction location is included (and excluded) in the training set, along with the results of DM test for their forecasts. Locations ``DL006'' and ``DL011'' are removed due to lack of data.}
\label{tab:in-vs-out-pred-3up}
\renewcommand{\arraystretch}{0.80}
\setlength{\tabcolsep}{2.2pt}
\scriptsize
\begin{tabular}{lccr | lccr | lccr}
\toprule
Location & In & Out & DM ($p$-value) &
Location & In & Out & DM ($p$-value) &
Location & In & Out & DM ($p$-value)\\
\midrule
DL001 & $12.720$ & $12.680$ & $0.235\,(0.407)$ &
DL015 & $11.040$ & $11.080$ & $-7.534\,(1.000)$ &
DL027 & $12.990$ & $13.070$ & $-3.474\,(1.000)$\\

DL002 & $10.370$ & $9.960$  & $4.425\,(0.000)$ &
DL016 & $12.450$ & $12.510$ & $-10.750\,(1.000)$ &
DL028 & $12.420$ & $12.450$ & $-7.601\,(1.000)$\\

DL003 & $12.110$ & $12.120$ & $-1.100\,(0.864)$ &
DL017 & $22.850$ & $22.840$ & $-4.135\,(1.000)$ &
DL029 & $12.000$ & $11.930$ & $1.657\,(0.049)$\\

DL004 & $26.670$ & $29.070$ & $-19.625\,(1.000)$ &
DL018 & $10.690$ & $10.720$ & $-5.654\,(1.000)$ &
DL030 & $20.560$ & $20.210$ & $11.728\,(0.000)$\\

DL005 & $14.610$ & $17.620$ & $-20.867\,(1.000)$ &
DL019 & $12.110$ & $12.060$ & $3.376\,(0.000)$ &
DL031 & $15.650$ & $15.740$ & $-7.595\,(1.000)$\\

DL007 & $13.310$ & $13.400$ & $-4.598\,(1.000)$ &
DL020 & $13.440$ & $13.650$ & $-11.650\,(1.000)$ &
DL032 & $11.130$ & $11.150$ & $-3.238\,(0.999)$\\

DL008 & $11.450$ & $11.440$ & $1.368\,(0.086)$ &
DL021 & $7.410$  & $7.550$  & $-14.008\,(1.000)$ &
DL033 & $15.410$ & $15.740$ & $-19.082\,(1.000)$\\

DL009 & $11.340$ & $12.150$ & $-14.883\,(1.000)$ &
DL022 & $18.070$ & $22.420$ & $-20.891\,(1.000)$ &
DL034 & $10.660$ & $10.660$ & $1.082\,(0.140)$\\

DL010 & $12.060$ & $12.160$ & $-7.580\,(1.000)$ &
DL023 & $23.940$ & $23.360$ & $6.566\,(0.000)$ &
DL035 & $11.020$ & $11.110$ & $-11.765\,(1.000)$\\

DL012 & $15.690$ & $15.740$ & $-3.727\,(1.000)$ &
DL024 & $11.850$ & $11.930$ & $-11.361\,(1.000)$ &
DL036 & $10.990$ & $11.050$ & $-9.657\,(1.000)$\\

DL013 & $16.190$ & $16.980$ & $-18.435\,(1.000)$ &
DL025 & $19.750$ & $19.780$ & $-0.937\,(0.825)$ &
DL037 & $12.500$ & $12.760$ & $-10.086\,(1.000)$\\

DL014 & $9.960$  & $10.070$ & $-10.466\,(1.000)$ &
DL026 & $10.720$ & $10.810$ & $-5.158\,(1.000)$ &
\multicolumn{4}{l}{}\\
\bottomrule
\end{tabular}
\end{table}

\section{Conclusion}\label{sec:conclusion}

The main contribution of this work is a nonparametric regression framework for irregularly sampled spatio-temporal observations with infinite-dimensional covariates. Our work establishes a full asymptotic theory, including simultaneous confidence bands, under a polynomially decaying moment contraction (PMC) condition rather than classical mixing assumptions. We start by establishing the consistency of our estimator of the mean part of the model, which also involves a truncated series approximation of the mean surface. Next, we provide asymptotic normality of the properly scaled and centered mean estimator. We finally end with consistency of variance estimators, which yields the confidence interval and simultaneous confidence band constructions. We address the spatial irregularity through a grid approximation, while the infinite dimensionality of the problem is addressed through small-ball probability assumptions. 

Some future directions one can consider are to strengthen the inferential theory from the mean to the covariance structure, by deriving asymptotic distributions and uniform confidence statements for $\sigma(x, s)$, complementing the consistency results already obtained for covariance estimation. Another possible direction is to relax the stationarity assumptions on $\{ \bb{X}_t\}$
 and develop an analogous theory under local stationarity or structural breaks, while preserving the PMC-based framework.
One could potentially extend the asymptotic results, which are currently obtained pointwise in $x$, and obtain functional central theorems, which could substantially broaden the scope of simultaneous inference. 

\section*{Data Availability Statement}\label{data-availability-statement}

The data used in the paper are sourced from the official website of the Central Pollution Control Board in India (CPCB: \url{https://cpcb.nic.in/}). The cleaned dataset and R implementation codes are available at the following link: \url{https://curated-webrepo.s3.ap-south-1.amazonaws.com/regression-code.zip}.

\begingroup
\setstretch{0.9}
\setlength{\bibsep}{0.2pt}
\bibliography{references.bib}
\endgroup

\renewcommand\thesection{S.\arabic{section}}
\renewcommand\thetable{S.\arabic{table}}
\renewcommand\thefigure{S.\arabic{figure}}
\renewcommand\theequation{S.\arabic{equation}}

\setcounter{table}{0}
\setcounter{figure}{0}
\setcounter{section}{0}
\setcounter{equation}{0}

\allowdisplaybreaks

\newpage
\begin{center}
    {\LARGE \textbf{Supplementary material}}
\end{center}

\section{Inference of the Variance Component}

In the main paper, we have tackled the problem of estimating the mean function $\mu(\bb{x}, s)$, and the component-wise covariance operator $\sigma_{kl}(\bb{x})$, but a detailed exposition on the estimation of the scale component $\sigma(\bb{x}, s)$ in the model \eqref{eqn:model} mentioned in the main manuscript was not included. This section establishes the estimation procedure and the asymptotic properties of the estimator for the variance component. Note that all cross-references in this document may pertain to the contents in the supplement (whenever it has the prefix S.) or the main manuscript (if there is no additional prefix).

\subsection{Estimation Algorithm}\label{sec:estimate-variance}

Akin to the case with the mean function, we again start with the reduced model established in equation \eqref{eqn:model-reduced} of the main paper and use the decomposition given in \eqref{eqn:mean-sigma-basis} to estimate the component functions $\sigma_u(\bb{x})$ for each $u \in \Z^+$. Note that, by definition of $\sigma_{kl}(\bb{x})$, we know that 
\begin{equation}
    \sigma_{kl}(\bb{x}) = \sum_{u=1}^\infty \sum_{v = 1}^\infty c_{u,v}(k,l) \sigma_{u}(\bb{x}) \sigma_v(\bb{x}), \ \text{ for any } k, l \in \Z^+.
    \label{supp-eqn:sigma-k-l-defn}
\end{equation}
If we substitute $\sigma_{kl}(\bb{x})$ by its estimate $\widehat{\sigma}_{k,l}(\bb{x})$ as in \eqref{eqn:mode-reduced-variance-estimate} for each $k, l \in \Z^+$, we have a countable set of equations involving the components $\sigma_u(\bb{x})$. If such a system is solvable, then it would let us recover an estimate of the variance component as $\widehat{\sigma}(\bb{x}, s) = \sum_{u=1}^{\infty} \widehat{\sigma}_u(\bb{x}) b_u(s)$ where $\widehat{\sigma}_u(\bb{x})$ is a solution of these equations.

One approach to solve this system is as follows: Following the decomposition of $\sigma(\bb{x},s)$ as in \eqref{eqn:mean-sigma-basis}, we must have $\sum_{u=1}^\infty \sigma_u(\bb{x})$ to be finitely summable, and hence for any given $\epsilon > 0$, there exists a $U_\epsilon \in \Z^+$ such that $\sum_{u=U_\epsilon+1}^\infty \sigma_u(\bb{x}) < \epsilon$ for each $\bb{x} \in \chi$. Additionally, as $\vert \rho(s,s')\vert \leqslant 1$, by an application of Cauchy-Schwartz inequality, it follows that $c_{u,v}(k,l) \leqslant 1$ for all $u,v,k,l \in \Z^+$. Therefore, we have
\begin{equation*}
    \sum_{\max(u,v) > U_\epsilon} \vert c_{u,v}(k,l) \sigma_u(x)\sigma_v(x) \vert 
    \leqslant \sum_{u=U_\epsilon + 1}^{\infty}\sum_{v=1}^\infty \sigma_u(x)\sigma_v(x) + \sum_{u=1}^{\infty}\sum_{v=U_\epsilon + 1}^\infty \sigma_u(x)\sigma_v(x) = \Ocal(\epsilon).
\end{equation*}

As a result, we can truncate the infinite sums present in the right-hand side of \eqref{supp-eqn:sigma-k-l-defn} with an error of at most $\Ocal(\epsilon)$. At this point, let us denote the $U_\epsilon$-length vector comprising of $\{ \sigma_u(\bb{x}) \}_{u=1}^{U_\epsilon}$ as $\bb{\sigma}_{1:U_\epsilon}(\bb{x})$. Let us denote the corresponding vector of the estimates as $\widehat{\bb{\sigma}}_{1:U_\epsilon}(\bb{x})$. Let us also denote $\bb{C}_{k,l}$ as a $U_\epsilon\times U_\epsilon$-matrix comprising of the elements $\{ c_{u,v}(k,l) \}_{u,v=1}^{U_\epsilon}$. Using these notations and the truncation argument, we can then rewrite the system of quadratic equations to be solved as a system of linear equations in the unknown coordinates of $\text{vec}(\bb{\sigma}_{1:U_\epsilon})$, which yields
\begin{equation}
    \widehat{\bb{\sigma}}_{1:U_\epsilon}(\bb{x})\widehat{\bb{\sigma}}_{1:U_\epsilon}(\bb{x})\tr = \text{vec}^{-1}\left( \left[ \text{vec}(\bb{C}_{11}) : \dots : \text{vec}(\bb{C}_{U_\epsilon, U_\epsilon}) \right]^{-1} \text{vec}(\widehat{\bb{\Sigma}}(\bb{x})) \right)=  \bb{A},
    \label{eqn:sigma-estimate}
\end{equation}
where $\widehat{\bb{\Sigma}}(\bb{x})$ is the $U_\epsilon \times U_\epsilon$-size matrix with entries $\sigma_{kl}(\bb{x})$ for $k, l = 1, 2, \dots, U_\epsilon$, and $\text{vec}(\cdot)$ denotes the vectorization operation of a matrix. However, the solution of the right-hand side of \eqref{eqn:sigma-estimate} may not result in a rank-one matrix, hence we obtain the estimate $\widehat{\bb{\sigma}}_{1:U_\epsilon}(\bb{x})$ by considering the partial eigendecomposition of the output matrix. In summary, we obtain the estimate of the variance component as expressed in Algorithm~\ref{alg:variance-estimation}.

\begin{algorithm}[!ht]
    \DontPrintSemicolon
    \KwInput{$t_i \in \Tcal$, $\{ s_{t_i j} \}_{j=1}^{n_i} \subseteq \Scal$, Response observations $\{ Y_{t_i}(s_{t_i j})\}_{j=1}^{n_i}$}
    \For{$k = 1, 2, \dots, U_\epsilon$}{
        Compute $\widehat{\sigma}_{k l}(\bb{x})$ using \eqref{eqn:mode-reduced-variance-estimate}\;
    }
    Calculate $\bb{A}$ as the RHS of \eqref{eqn:sigma-estimate}\;
    $\bb{A}^\ast \leftarrow \argmin_{\bb{B} \text{ is p.d.}} \Vert \bb{A} - \bb{B}\Vert_F^2$\;
    $\lambda_1(\bb{A}^\ast) \leftarrow \text{maximum eigenvalue of } \bb{A^\ast}$ and $e_1(\bb{A}^\ast)$ be corresponding eigenvector\;
    \For{$u = 1, 2, \dots, U_\epsilon$}{
        $\widehat{\sigma}_u(\bb{x}) \leftarrow \sqrt{\lambda_1(\bb{A}^\ast)}\vert e_{1u}(\bb{A}^\ast)\vert$\;
    }
    \caption{Algorithm for estimation of $\sigma(\bb{x},s)$.}
    \label{alg:variance-estimation}
\end{algorithm}

One caveat of the above approach is that to compute the matrix $\bb{C}_{kl}$ and its elements $c_{u,v}(k,l)$, one requires the knowledge of the spatial correlation function $\rho(s, s')$. If it is known, one can directly use the above algorithm. Often in different contexts, the functional form of $\rho(s,s')$ may be known, e.g., $\rho(s, s') = e^{-\rho\Vert s - s\Vert^2}$ for some unknown parameter $\rho > 0$. In this case, we can estimate $\rho$ by the choice that minimizes the discrepancy $\Vert A - \widehat{\bb{\sigma}}_{1:U_\epsilon}(\bb{x})\widehat{\bb{\sigma}}_{1:U_\epsilon}(\bb{x})\tr \Vert_2$ in the above estimation process. 

In a completely nonparametric setup, the form of $\rho(s, s')$ is unknown. In this case, one can employ an iterative scheme as follows. Given an estimate of $\widehat{\rho}^{(m)}(s, s')$ at $m^{th}$ step of the iteration, we can employ the above procedure to obtain an estimate of $\widehat{\sigma}^{(m)}(\bb{x}, s)$ for any $\bb{x} \in \chi$ and $s \in \Scal$. However, using \eqref{eqn:cov-y-rho}, we can obtain a revised estimate of the $\rho(\cdot, \cdot)$ function as
\begin{equation}
    \widehat{\rho}^{(m+1)}(s, s') = \dfrac{\sum_{i=1}^n K(\Vert{\bb{H}_n^{-1} (\bb{X}_{t_i} - \bb{x}) }\Vert) Y_{t_i}(s)Y_{t_i}(s') }{\sum_{i=1}^n K(\Vert{\bb{H}_n^{-1} (\bb{X}_{t_i} - \bb{x}) }\Vert) \widehat{\sigma}^{(m)}(\bb{x},s)\widehat{\sigma}^{(m)}(\bb{x},s') }.
    \label{eqn:rho-revised-estimate}
\end{equation}
The final estimates of $\rho(\cdot, \cdot)$ and $\sigma(\cdot, \cdot)$ are given by their corresponding converged values.

\subsection{Asymptotic Results}\label{sec:variance-asymptotics}

From Theorem~\ref{thm:sigma-pointwise}, we see that for any fixed $k, l \in \Z^+$, the quantity $\sigma_{k l}(\bb{x})$ can be estimated consistently. Let us denote $\bb{\Sigma}_{1:K,1:K}(\bb{x})$ as the matrix with entries $\sigma_{k l}(\bb{x})$ for $k, l \in \{1, 2, \dots, K\}$, and denote the estimated matrix as $\hat{\bb{\Sigma}}_{1:K,1:K}(\bb{x})$ with corresponding entries $\hat{\sigma}_{k l}(\bb{x})$. Although Theorem~\ref{thm:sigma-pointwise} shows that each element of the matrix $\bb{\Sigma}_{1:U_\epsilon, 1:U_\epsilon}(\bb{x}) - \hat{\bb{\Sigma}}_{1:U_\epsilon, 1:U_\epsilon}(\bb{x})$ is small, the errors can accumulate based on the choice of $\epsilon$ through a considerable increase in $U_\epsilon$. Let us denote a sequence of $U_\epsilon\times U_\epsilon$-matrices $\bb{C}_{k,l}$ for each $k, l \in \{ 1, 2, \dots, U_\epsilon\}$ comprising of the elements $c_{k,l}(u,v)$ as defined in \eqref{eqn:c-kl-defn}. Then, based on the algorithm illustrated in Section~\ref{sec:estimate-variance} and an application of Weyl's inequality, it is easy to see that the estimation error of the variance component is bounded by $o_{\prob}(\sqrt{2U_\epsilon} \sup_{k,l}\text{cond}(\bb{C}_{k,l}))$. Here, $\text{cond}(\bb{A})$ denotes the condition number of a matrix $\bb{A}$. Based on the above discussion, we now state a consistency theorem for the estimation of $\sigma(\bb{x}, s)$ without a formal proof. 

\begin{prop}
    Assume the same conditions as in Theorem~\ref{thm:sigma-pointwise}. Given a fixed $x \in \chi$, for any $\epsilon > 0$, find $U_\epsilon \in \Z^+$ such that $\sum_{u=U_\epsilon}^\infty \sigma_u(\bb{x}) < \epsilon$. Then, for any $u \in \{ 1, 2, \dots, U_\epsilon \}$, 
    \begin{equation*}
        \vert \hat{\sigma}_u(\bb{x}) -\sigma_u(\bb{x}) \vert = \Ocal(\epsilon) + o_{\prob}\left( \sqrt{2 U_\epsilon} \sup_{k,l} \text{cond}(\bb{C}_{k,l}) \right).
    \end{equation*}
\end{prop}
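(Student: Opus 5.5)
The argument splits the error into a deterministic truncation part and a stochastic estimation part, and then propagates the entrywise consistency of Theorem~\ref{thm:sigma-pointwise} through the linear inversion and the rank-one extraction in Algorithm~\ref{alg:variance-estimation}.

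\emph{Step 1 (truncation).} Fix $\bb{x}$ and $\epsilon$, and choose $U_\epsilon$ as in the statement. Using $\abs{c_{u,v}(k,l)}\leqslant 1$ and the absolute summability of $\{\sigma_u(\bb{x})\}$ (the bound displayed just before \eqref{eqn:sigma-estimate}), we get for every $k,l\leqslant U_\epsilon$
\begin{equation*}
\sigma_{kl}(\bb{x}) = \inner{\bb{C}_{k,l}, \bb{\sigma}_{1:U_\epsilon}(\bb{x})\bb{\sigma}_{1:U_\epsilon}(\bb{x})\tr} + r_{kl},
\end{equation*}
where the inner product is the Frobenius one. The remainder satisfies $\abs{r_{kl}}\leqslant 2\epsilon\sum_{v}\abs{\sigma_v(\bb{x})} = \Ocal(\epsilon)$. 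Stacking these identities gives $\text{vec}(\bb{\Sigma}_{1:U_\epsilon,1:U_\epsilon}(\bb{x})) = \bb{G}\,\text{vec}(\bb{\sigma}\bb{\sigma}\tr) + \bb{r}$, where $\bb{G}$ is the design matrix whose rows are $\text{vec}(\bb{C}_{k,l})\tr$. I will assume, as the algorithm does implicitly, that $\bb{G}$ is invertible.

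\emph{Step 2 (linear inversion).} Define $\bb{A}_0 = \text{vec}^{-1}(\bb{G}^{-1}\text{vec}(\bb{\Sigma}))$. Then $\bb{A}-\bb{\sigma}\bb{\sigma}\tr = \text{vec}^{-1}\bigl(\bb{G}^{-1}[\text{vec}(\widehat{\bb{\Sigma}}-\bb{\Sigma}) - \bb{r}]\bigr)$. By Theorem~\ref{thm:sigma-pointwise}, each of the finitely many entries of $\widehat{\bb{\Sigma}}-\bb{\Sigma}$ is $o_{\prob}(1)$, so $\norm{\widehat{\bb{\Sigma}}-\bb{\Sigma}}_F = o_{\prob}(1)$ for fixed $U_\epsilon$. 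The Frobenius error of $\bb{A}$ is then bounded by $\norm{\bb{G}^{-1}}$ times $o_{\prob}(1)+\Ocal(\epsilon)$.

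This step is where I expect the main obstacle: bounding $\norm{\bb{G}^{-1}}$ by the quantity $\sup_{k,l}\text{cond}(\bb{C}_{k,l})$ that appears in the statement. A genuine argument would need this as an assumption, or would need a structural fact such as near block-diagonality of $\bb{G}$ coming from orthonormality of the $b_k$. My first attempt would be to exploit the representation $c_{u,v}(k,l)=\inner{\rho, (b_kb_u)\otimes(b_lb_v)}$. If that fails, I would simply absorb $\norm{\bb{G}^{-1}}$ into the constant and state the bound in terms of it.

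\emph{Step 3 (projection and rank-one extraction).} The projection onto positive definite matrices is $1$-Lipschitz in Frobenius norm, and $\bb{\sigma}\bb{\sigma}\tr$ lies in the closure of that cone. Hence $\norm{\bb{A}^\ast-\bb{\sigma}\bb{\sigma}\tr}_F\leqslant\norm{\bb{A}-\bb{\sigma}\bb{\sigma}\tr}_F$. Weyl's inequality then gives $\abs{\lambda_1(\bb{A}^\ast)-\norm{\bb{\sigma}}^2}$ of the same order. The Davis--Kahan $\sin\theta$ theorem, using the eigengap $\norm{\bb{\sigma}}^2$ (assumed positive), controls the eigenvector error $\norm{e_1(\bb{A}^\ast)\mp\bb{\sigma}/\norm{\bb{\sigma}}}$. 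The map $(\lambda,e)\mapsto\sqrt{\lambda}\,\abs{e_u}$ is Lipschitz near $(\norm{\bb{\sigma}}^2,\bb{\sigma}/\norm{\bb{\sigma}})$, and the absolute value removes the sign ambiguity. This yields $\abs{\hat{\sigma}_u-\abs{\sigma_u}}$ of order $\Ocal(\epsilon)$ plus the stochastic term. Passing from the Frobenius norm to entrywise control and back over $U_\epsilon^2$ entries produces the $\sqrt{2U_\epsilon}$ factor via $\norm{\cdot}_2\leqslant\norm{\cdot}_F$.

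Note that the argument identifies $\sigma_u(\bb{x})$ only up to sign, so I would either assume nonnegative components or read the conclusion as applying to $\abs{\sigma_u(\bb{x})}$.
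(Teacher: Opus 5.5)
Your outline follows the route the paper has in mind. The paper states this proposition explicitly ``without a formal proof'' and justifies it only by the truncation bound, the linear inversion in the algorithm, and ``an application of Weyl's inequality''. Your Steps~1--3 supply more than that remark does. Weyl's inequality controls $\lambda_1(\bb{A}^\ast)$ but not $e_1(\bb{A}^\ast)$, so your Davis--Kahan step is genuinely needed, and so is the nonexpansiveness of the projection onto the closed positive semidefinite cone. Your sign caveat is also correct. The algorithm outputs $\sqrt{\lambda_1(\bb{A}^\ast)}\,\abs{e_{1u}(\bb{A}^\ast)}\geqslant 0$, so whenever $\sigma_u(\bb{x})<0$ the error tends to $2\abs{\sigma_u(\bb{x})}$. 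The statement can therefore only hold for $\abs{\sigma_u(\bb{x})}$, or under a nonnegativity assumption.

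The remaining gap lies in a different place from the one you flag. The stochastic term needs nothing beyond invertibility of $\bb{G}$. With $\epsilon$ fixed, $U_\epsilon$ and $\bb{G}$ are fixed, so $\norm{\bb{G}^{-1}}\,o_{\prob}(1)=o_{\prob}(1)$. That is $o_{\prob}(c)$ for every fixed finite $c>0$, so the factor $\sqrt{2U_\epsilon}\sup_{k,l}\text{cond}(\bb{C}_{k,l})$ carries no information. Your Frobenius-to-entrywise bookkeeping would in any case produce a factor of order $U_\epsilon$, not $\sqrt{2U_\epsilon}$. Nor can $\text{cond}(\bb{C}_{k,l})$ control $\norm{\bb{G}^{-1}}$: for $\rho\equiv 1$ one has $c_{u,v}(k,l)=\delta_{ku}\delta_{lv}$, so every $\bb{C}_{k,l}=e_ke_l\tr$ is singular while $\bb{G}$ is a permutation matrix.

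Where $\norm{\bb{G}^{-1}}$ really matters is the deterministic part. As your own Step~2 shows, the truncation remainder $\bb{r}$ is also pushed through $\bb{G}^{-1}$. Step~3 therefore yields $\Ocal(\norm{\bb{G}_{U_\epsilon}^{-1}}\,\epsilon)$, not $\Ocal(\epsilon)$. Because $\bb{G}$ depends on $U_\epsilon$, which grows as $\epsilon\downarrow 0$, ``absorbing $\norm{\bb{G}^{-1}}$ into the constant'' makes the $\Ocal(\epsilon)$ claim vacuous. To get the stated form you need an explicit hypothesis such as $\sup_U\norm{\bb{G}_U^{-1}}<\infty$, together with $\sigma(\bb{x},\cdot)\not\equiv 0$. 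The latter makes the Davis--Kahan eigengap $\norm{\bb{\sigma}_{1:U}(\bb{x})}^2$ nondecreasing in $U$, hence bounded below uniformly. Without such a hypothesis, the honest conclusion is $\bigl\vert\hat{\sigma}_u(\bb{x})-\abs{\sigma_u(\bb{x})}\bigr\vert=\Ocal(\norm{\bb{G}_{U_\epsilon}^{-1}}\,\epsilon)+o_{\prob}(1)$. The paper's sketch leaves this same issue unaddressed.
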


\section{Proofs of the Results}

\subsection{Proof of Proposition~\ref{lem:f-Xk-bound}}\label{proof-lem:f-Xk-bound}

Let $\bb{X}_{t_i}'$ be another independent copy of $\bb{X}_{t_i}$. Then, consider the chain of equality,
\begin{align*}
    \left( \E(f(\bb{X}_{t_i})) - \E(f(\bb{X}_{t_i}) \mid \Fcal_{t_i - m}) \right)^2
    & = \left( \E\left( f(\bb{X}_{t_i}) - \E(f(\bb{X}_{t_i})) \mid \Fcal_{t_i-m} \right) \right)^2\\
    & \leqslant \E\left( (f(\bb{X}_{t_i}) - \E(f(\bb{X}_{t_i})))^2 \mid \Fcal_{t_i-m} \right)\\
    & = \dfrac{1}{2} \E\left( (f(\bb{X}_{t_i}) - f(\bb{X}'_{t_i}))^2 \mid \Fcal_{t_i-m} \right)\\
    & \leqslant \dfrac{C^2}{2} \E\left( \Vert \bb{D}^{-1}(\bb{X}_{t_i} - \bb{X}'_{t_i})\Vert^2 \mid \Fcal_{t_i-m} \right), \text{as } f \text{ is Lipschitz},\\
    & = \dfrac{C^2}{2} \E\left( \Vert \bb{D}^{-1}(\bb{X}_{t_i} - \bb{X}'_{t_i})\Vert^2 \mid \dots, \xi_{t_i-m-1}, \xi_{t_i-m} \right)\\
    & \leqslant \dfrac{C^2}{2}\Delta_2^2(m),
\end{align*}
where the last line follows from the definition of $\Delta_2(m)$. Taking square root to both sides now yields \eqref{eqn:f-Xk-bound}. 

\subsection{Proof of Corollary~\ref{corr:kernel-covariance-bound}}\label{proof-corr:kernel-covariance-bound}

Let us fix any $\bb{x}_0 \in \chi$ and $\delta > 0$, and consider any two distinct points $\bb{x}, \bb{y} \in \chi$ such that $\Vert \bb{D}^{-1}(\bb{x}-\bb{y})\Vert > \delta$. Since $h_n \rightarrow 0$ as $n \rightarrow \infty$ as in Assumption~\ref{assum:bandwidth-decay}, for sufficiently large $n$ we must have $\delta > 2h_n \lambda$. As a result, by applying triangle inequality, we obtain that either $\Vert \bb{H}_n^{-1}(\bb{x} - \bb{x}_0)\Vert$ or $\Vert \bb{H}_n^{-1}(\bb{y} - \bb{x}_0)\Vert$ must be greater than $\lambda$, where $\bb{H}_n = h_n \bb{D}$. Since the kernel has a bounded support $[0, \lambda]$, either $K(\Vert \bb{H}_n^{-1}(\bb{x} - \bb{x}_0)\Vert) = 0$ or $K(\Vert \bb{H}_n^{-1}(\bb{y} - \bb{x}_0)\Vert) = 0$. Therefore, the following Lipschitz property holds for any $\bb{x}_0 \in \chi$:
\begin{equation*}
    \vert K(\Vert \bb{H}_n^{-1}(\bb{x} - \bb{x}_0)\Vert ) - K(\Vert \bb{H}_n^{-1}(\bb{y} - \bb{x}_0)\Vert ) \vert
    \leqslant C_2 \leqslant (C_2/\delta) \Vert \bb{D}^{-1}(\bb{x} - \bb{y})\Vert.
\end{equation*}
Now, we can decompose the covariance into two terms: When $\Vert\bb{D}^{-1}(\bb{X}_t - \bb{X}_{t'})\Vert > \delta$, we make use of the Lipschitz property along with Proposition~\ref{lem:f-Xk-bound}, and when $\Vert\bb{D}^{-1}(\bb{X}_t - \bb{X}_{t'})\Vert \leqslant \delta$, we make use of the bound given in \eqref{eqn:kernel-cross-bound}. Since $\E\left[ K(\Vert \bb{H}_n^{-1}(\bb{X}_t - \bb{x}) \Vert) \right]$ remains same for any $t$ due to the stationarity of $\bb{X}_t$, let us choose $c$ to denote this expectation. For any $t' < t$, this now yields
\begin{align*}
    & \cov\left( \Kcal(\bb{H}_n^{-1}(\bb{X}_t - \bb{x} )), \Kcal(\bb{H}_n^{-1}(\bb{X}_{t'} - \bb{x} )) \right)\\
    ={} & \E\left[ \left( K(\Vert \bb{H}_n^{-1}(\bb{X}_t - \bb{x}) \Vert) - c \right)\left(  K(\Vert \bb{H}_n^{-1}(\bb{X}_{t'} - \bb{x}) \Vert) - c\right) \right]\\
    ={} & \E\left[ \left( K(\Vert \bb{H}_n^{-1}(\bb{X}_t - \bb{x}) \Vert) - c \right)\left(  K(\Vert \bb{H}_n^{-1}(\bb{X}_{t'} - \bb{x}) \Vert) - c\right) \ind{\Vert\bb{D}^{-1}(\bb{X}_t - \bb{X}_{t'})\Vert > \delta} \right] + \\
    & \qquad \E\left[ \left( K(\Vert \bb{H}_n^{-1}(\bb{X}_t - \bb{x}) \Vert) - c \right)\left(  K(\Vert \bb{H}_n^{-1}(\bb{X}_{t'} - \bb{x}) \Vert) - c\right) \ind{\Vert\bb{D}^{-1}(\bb{X}_t - \bb{X}_{t'})\Vert \leqslant \delta} \right]\\
    \leqslant & \E\left[ \left(  K(\Vert \bb{H}_n^{-1}(\bb{X}_{t'} - \bb{x}) \Vert) - c\right) \E\left[ \left( K(\Vert \bb{H}_n^{-1}(\bb{X}_t - \bb{x}) \Vert) - c \right) \ind{\Vert\bb{D}^{-1}(\bb{X}_t - \bb{X}_{t'})\Vert > \delta} \mid \Fcal_{t'} \right] \right] + \\
    & \qquad C_2^2 C \phi_{\bb{x}}(h_n\lambda) \prob(\Vert\bb{D}^{-1}(\bb{X}_t - \bb{X}_{t'})\Vert \leqslant \delta), \ \text{by \eqref{eqn:kernel-cross-bound}}\\
    \leqslant & \dfrac{\sqrt{2} C_2^2 C}{\delta} \Delta_2(\vert t - t'\vert) \prob(\Vert \bb{D}^{-1}(\bb{X}_t - \bb{X}_{t'})\Vert > \delta) +  C_2^2 C \phi_{\bb{x}}(h_n\lambda) \prob(\Vert\bb{D}^{-1}(\bb{X}_t - \bb{X}_{t'})\Vert \leqslant \delta)
\end{align*}
The above inequality holds for any $\delta > 0$ and for all sufficiently large $n$. By rewriting $\prob\left( \Vert \bb{D}^{-1}(\bb{X}_t - \bb{X}_{t'})\Vert \leqslant \delta \right)$ as a conditional expectation and using the stationarity of $\bb{X}_t$, we obtain
\begin{equation*}
    \prob\left( \Vert \bb{D}^{-1}(\bb{X}_t - \bb{X}_{t'})\Vert \leqslant \delta \right)
    = \E\left( \prob\left( \Vert \bb{D}^{-1}(\bb{X}_t - \bb{X}_s)\Vert \leqslant \delta \right) \mid \bb{X}_t \right)
    = \E\left( \phi_{\bb{X}_t}(\delta) \right)
    = \E\left( \phi_{\bb{X}_0}(\delta) \right).
\end{equation*}
Since we simply require $\delta > 2h_n\lambda$ for sufficiently large $n$, we can choose $\delta = 3h_n\lambda$. Therefore,
\begin{multline*}
    \cov\left( \Kcal(\bb{H}_n^{-1}(\bb{X}_t - \bb{x})), \Kcal(\bb{H}_n^{-1}(\bb{X}_{t'} - \bb{x})) \right)
    \leqslant \dfrac{\sqrt{2}C C_2^2}{9h_n^2\lambda^2} \Delta_2(\vert t - t'\vert) (1 - \E\left( \phi_{\bb{X}_0}(3h_n\lambda) \right))\\ 
    + C_2^2 (1 + C) \phi_{\bb{x}}^2(h_n \lambda) \E\left( \phi_{\bb{X}_0}(3h_n\lambda) \right).
\end{multline*}
This is same as \eqref{eqn:Kt-cov-bound} that we wanted to show.

\subsection{Proof of Proposition~\ref{lem:Yk-convergence}}\label{proof-lem:Yk-convergence}

Let us fix $t_i \in \Tcal$ and $k \in \Z^+$. With $\Scal = [0,1]^d$, let us denote $H_l$ as the $l^{th}$ hypercube with diameter $\epsilon^\ast_{t_i}$, with its center at $s_l^c$, as described in Algorithm~\ref{alg:modified-montecarlo}. Also note that since $\Scal$ is compact and $b_k(s)$ is continuous, $\sup_{s\in\Scal} \vert b_k(s)\vert$ exists and is finite. Similarly, by Assumption~\ref{assum:mean-sigma-bound}, we know that the same conclusion holds for $\widetilde{\mu}(s) = \E(\mu(\bb{X}_0, s))$ and $\widetilde{\sigma}(s) = \E(\sigma(\bb{X}_0, s))$. Let us take $M$ to a uniform upper bound of all these three functions.

Now, the expected error due to the modified Monte-Carlo procedure is
\begin{align*}
    \E\left( \widehat{Y}^\ast_{t_i k} - Y^\ast_{t_i k}\right)
    & = \E\left( \dfrac{1}{r_i} \sum_{l=1}^{r_i} Y_{t_i}(s_{t_i j_l})b_k(s_{t_i j_l}) - \int_{[0,1]^d} Y_{t_i}(s)b_k(s)ds \right)\\
    & = \dfrac{1}{r_i}\sum_{l=1}^{r_i} \E\left( Y_{t_i}(s_{t_i j_l})b_k(s_{t_i j_l}) - \int_{H_l}Y_{t_i}(s)b_k(s)ds \right)\\
    & = \dfrac{1}{r_i}\sum_{l=1}^{r_i} \E\left( Y_{t_i}(s_{t_i j_l})b_k(s_{t_i j_l}) - Y_{t_i}(s_l^\ast)b_k(s_l^\ast) \right), 
\end{align*}
for some $s_l^\ast \in H_l$ by an application of Mean Value Theorem. Continuing, 
\begin{align*}
    \E\left( \widehat{Y}^\ast_{t_i k} - Y^\ast_{t_i k}\right)
    & = \dfrac{1}{r_i}\sum_{l=1}^{r_i} \E\left( \mu(\bb{X}_{t_i}, s_{t_i j_l})b_k(s_{t_i j_l}) - \mu(\bb{X}_{t_i}, s_l^\ast)b_k(s_l^\ast) \right)\\
    & = \dfrac{1}{r_i}\sum_{l=1}^{r_i} \E\left( \mu(\bb{X}_{0}, s_{t_i j_l})b_k(s_{t_i j_l}) - \mu(\bb{X}_{0}, s_l^\ast)b_k(s_l^\ast) \right), \text{by stationarity}\\
    & \leqslant M \sup_{\Vert s - s'\Vert \leqslant \epsilon^\ast_{t_i}} \vert b_k(s) - b_k(s')\vert + M \sup_{\Vert s - s'\Vert \leqslant \epsilon^\ast_{t_i}} \vert \E(\mu(\bb{X}_0, s) - \mu(\bb{X}_0, s')) \vert,
\end{align*}
which takes care of the bias term of the error. Here, the constant $M$ is the generic uniform bound on the basis function $b_k(s)$ and the expected mean function $\tilde{\mu}(s)$; see the discussions following Assumption~\ref{assum:bounded-errors}. Now, turning our attention to the variance, we obtain
\begin{align*}
    \var\left( \widehat{Y}^\ast_{t_i k} \right)
    & = \var\left( \dfrac{1}{r_i} \sum_{l=1}^{r_i} Y_{t_i}(s_{t_i j_l})b_k(s_{t_i j_l}) \right)\\
    & = \dfrac{1}{r_i^2} \left[ \sum_{l=1}^{r_i} b_k^2(s_{t_i j_l}) \var\left( Y_{t_i}(s_{t_i  j_l}) \right) +  \sum_{l_1,l_2} b_k(s_{t_i j_{l_1}})b_k(s_{t_i j_{l_2}}) \cov\left( Y_{t_i}(s_{t_i  j_{l_1}}), Y_{t_i}(s_{t_i j_{l_2}}) \right) \right]\\
    & \leqslant \dfrac{M}{r_i^2} \sum_{l=1}^{r_i} \E(\sigma^2(\bb{X}_{t_i}, s_{t_i j_l})) + \dfrac{M'}{r_i^2} \sum_{l_1 \neq l_2}^{r_i} \E(\sigma(\bb{X}_{t_i}, s_{t_i j_{l_1}}) \sigma(\bb{X}_{t_i}, s_{t_i j_{l_2}})) \rho(s_{t_i j_{l_1}}, s_{t_i j_{l_2}})\\
    & = \dfrac{M}{r_i} \E(\sigma^2(\bb{X}_0, s_{t_i j_l})) + \dfrac{M'}{r_i^2} \sum_{l_1 \neq l_2}^{r_i} \E(\sigma(\bb{X}_{0}, s_{t_i j_{l_1}}) \sigma(\bb{X}_{0}, s_{t_i j_{l_2}})) \rho(s_{t_i j_{l_1}}, s_{t_i j_{l_2}}), \\
    & \leqslant \dfrac{M''}{r_i} + \dfrac{M''}{r_i^2}\sum_{l_1 \neq l_2}^{r_i} \rho(s_{t_i j_{l_1}}, s_{t_i j_{l_2}})
\end{align*}
Here, $M, M'$, and $M''$ are some generic constants which serve as the uniform bound for the basis function $b_k(s)$ and the expected mean $\tilde{\mu}(s)$ and variance $\tilde{\sigma}^2(s)$ functions.

We have the first term $\Ocal( (\epsilon_{t_i}^\ast)^d)$. For the second quantity, let us look at the distances between two points $s_{t_i j_{l_1}}$ and $s_{t_i j_{l_2}}$ for each pair of choices. For a fixed $j_{l_1}$, there are $(2k+1)^d-(2k-1)^d$ hypercubes which are $k$ units distant from $s_{t_i j_{l_1}}$ in $L^1$ distance. Correspondingly, there are $(2k+1)^d-(2k-1)^d = \Ocal(k^{d-1})$ many choices of $l_2$ such that the Euclidean distance between $s_{t_i,j_{l_1}}$ and $s_{t_i,j_{l_2}}$ is at least $k\epsilon_{t_i}^\ast/\sqrt{d}$. It now follows that due to the spatial dependence structure given in~\eqref{eqn:cov-y-rho}, we obtain an upper bound of the second term
\begin{align*}
    \dfrac{M^4}{r_i^2}\sum_{l_1 \neq l_2}^{r_i} \rho(s_{t_i j_{l_1}}, s_{t_i j_{l_2}})
    & \leqslant \dfrac{M^4}{r_i^2} \sum_{l_1 = 1}^{r_i} \sum_{k} \left( (2k+1)^d-(2k-1)^d \right) (k\epsilon_{t_i}^\ast/\sqrt{d})^{-(d+\delta)}\\
    & = \dfrac{M^4}{r_i} \sum_{k} \left( (2k+1)^d-(2k-1)^d \right) (k\epsilon_{t_i}^\ast/\sqrt{d})^{-(d+\delta)}\\
    & \leqslant \dfrac{M^4}{r_i} \sum_{k=1}^{\infty} \Ocal(k^{-(1+\delta)}) = \Ocal(1/r_i),
\end{align*}
since $\sum_{k=1}^{\infty} \Ocal(k^{-(1+\delta)}) < \infty$ for $\delta > 0$. 

Combining all of the above and using the fact that $r_i (\epsilon_{t_i}^\ast/\sqrt{d})^d = 1$, we get
\begin{multline*}
    \E^2\vert \widehat{Y}_{t_i k}^\ast - Y^\ast_{t_i k}\vert 
    \leqslant \E\left[\left( \widehat{Y}_{t_i k}^\ast - Y^\ast_{t_i k} \right)^2\right] \\
    = \Ocal\left( \max\left\{ (\epsilon^\ast_{t_i})^d, \sup_{\Vert s - s'\Vert \leqslant \epsilon^\ast_{t_i}} \vert b_k(s) -b_k(s')\vert,  \sup_{\Vert s - s'\Vert \leqslant \epsilon^\ast_{t_i}} \vert \widetilde{\mu}(s) - \widetilde{\mu}(s')) \vert \right\}^2  \right).
\end{multline*}
\noindent Taking square root on both sides now completes the proof.

\subsection{Proof of Theorem~\ref{thm:pointwise-component}}

We start by fixing any $k \in \Z^+$ and any $\bb{x} \in \chi$. Denoting $K_t = K( \Vert{ \bb{H}_n^{-1}(\bb{X}_t - \bb{x})}\Vert )$, we rewrite the estimate of $k$-th component of the mean function as
\begin{equation}
    \widehat{\mu}_k(\bb{x})
   = \dfrac{\frac{1}{n}\sum_{i=1}^n K_{t_i} \widehat{Y}^\ast_{t_i k} / \E(K_0) }{ \frac{1}{n} \sum_{i=1}^n K_{t_i} / \E(K_0) }
   =  \dfrac{\widehat{\mu}_{k,2}(\bb{x})}{\widehat{\mu}_{k,1}(\bb{x})},
   \label{eqn:mu-k-decomp}
\end{equation}
where $\widehat{\mu}_{k,1}(\bb{x})$ and $\widehat{\mu}_{k,2}(\bb{x})$ are the quantities it is replacing. Then, the error in estimation can be expressed as 
\begin{align}
    \widehat{\mu}_k(\bb{x}) - \mu_k(\bb{x})
    & = \dfrac{\widehat{\mu}_{k,2}(\bb{x})}{\widehat{\mu}_{k,1}(\bb{x})} - \mu_k(\bb{x}) \nonumber \\
    & = \dfrac{\widehat{\mu}_{k,2}(\bb{x}) - \mu_k(x) \widehat{\mu}_{k,1}(\bb{x}) }{\widehat{\mu}_{k,1}(\bb{x})} \nonumber \\
    & = \dfrac{\E\widehat{\mu}_{k,2}(\bb{x}) -\mu_k(\bb{x}) }{\widehat{\mu}_{k,1}(\bb{x})} + \dfrac{ \widehat{\mu}_{k,2}(\bb{x}) - \E\widehat{\mu}_{k,2}(\bb{x}) }{\widehat{\mu}_{k,1}(\bb{x})} - \mu_k(\bb{x}) \dfrac{ \widehat{\mu}_{k,1}(\bb{x}) - \E\widehat{\mu}_{k,1}(\bb{x}) }{\widehat{\mu}_{k,1}(\bb{x})} \nonumber \\
    & = (A_1 + A_2 + A_3) / \widehat{\mu}_{k,1}(\bb{x}), \label{eqn:mu-conv-decomposition}
\end{align}
where we use the mean stationarity of $\bb{X}_t$ to obtain that $\E\widehat{\mu}_{k,1}(\bb{x}) = 1$. 

At this point, the proof follows by showing three things:
\begin{enumerate}
    \item $\E(\widehat{\mu}_{k,1}(\bb{x})) = 1$ and  $(\widehat{\mu}_{k,1}(\bb{x}) - 1) = o_{\prob}(1)$, hence the denominator converges to $1$ and the quantity $A_3$ converges to $0$ in probability. This is demonstrated in Lemma~\ref{lem:muk1-conv}.
    \item The numerator of the term $A_1$ converges to $0$ in probability, uniformly over all $\bb{x} \in \chi$. This is demonstrated in Lemma~\ref{lem:muk2-muk-conv}.
    \item The numerator of the term $A_2$ converges to $0$ in $L_1$ norm, hence in probability as well. This is illustrated in Lemma~\ref{lem:muk2-conv}.
\end{enumerate}
These, together with an application of Slutsky's theorem, will imply that the pointwise convergence  $(\widehat{\mu}_k(\bb{x}) - \mu_k(\bb{x})) \xrightarrow{P} 0$ holds for any fixed $k$ and $\bb{x} \in \chi$. More details about this specific argument can be found in~\cite{hong2020nonparametric}, where the authors use the exact same decomposition.

Now, we will proceed to verify each of these claims through a series of Lemmas.

\begin{lem}\label{lem:muk1-conv}
    Suppose the Assumptions~\ref{assum:mean-sigma-bound}-\ref{assum:gmc-Xt} hold. Then for any fixed $k \in \Z^+, \bb{x} \in \chi$, as $n \rightarrow \infty$,
    \begin{equation*}
        \widehat{\mu}_{k, 1}(x) - 1 = o_{\mathbb{P}}(1).
    \end{equation*}
\end{lem}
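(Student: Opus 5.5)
We prove the lemma by a second-moment (Chebyshev) argument. By stationarity, $\E(\widehat{\mu}_{k,1}(\bb{x})) = 1$, so it suffices to show $\var(\widehat{\mu}_{k,1}(\bb{x})) \to 0$. Writing $K_{t_i} = K(\Vert \bb{H}_n^{-1}(\bb{X}_{t_i} - \bb{x})\Vert)$, we expand
\begin{equation*}
    \var(\widehat{\mu}_{k,1}(\bb{x})) = \dfrac{1}{n^2 \E^2(K_0)}\left[ \sum_{i=1}^n \var(K_{t_i}) + \sum_{i \neq j} \cov(K_{t_i}, K_{t_j}) \right],
\end{equation*}
and bound the diagonal and off-diagonal parts separately. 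By \eqref{eqn:kernel-convergence}, $\E(K_0) \asymp \xi_1 \phi_{\bb{x}}(h_n\lambda)$ and $\var(K_{t_i}) \leqslant \E(K_0^2) \asymp \xi_2 \phi_{\bb{x}}(h_n\lambda)$. Hence the diagonal part is $\Ocal\bigl( (n \phi_{\bb{x}}(h_n\lambda))^{-1} \bigr)$, which tends to zero by \eqref{eqn:bandwidth-density-inf}.

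For the off-diagonal part we use Corollary~\ref{corr:kernel-covariance-bound}. Its second term is at most $C_2^2(1+C)\phi_{\bb{x}}^2(h_n\lambda)\E(\phi_{\bb{X}_0}(3h_n\lambda))$. Summed over at most $n^2$ pairs and divided by $n^2\E^2(K_0) \asymp n^2 \phi^2_{\bb{x}}(h_n\lambda)$, it contributes $\Ocal(\E(\phi_{\bb{X}_0}(3h_n\lambda)))$. This tends to zero by dominated convergence: $\phi_{\bb{y}}(3h_n\lambda) \to 0$ pointwise as $h_n \to 0$ (assuming no atoms, in line with \ref{assum:bandwidth-decay}) and is bounded by $1$. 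If one prefers to avoid this, the cruder bound \eqref{eqn:kernel-cross-bound} gives covariances of order $\phi^2_{\bb{x}}$ directly, but that only yields $\Ocal(1)$. So the \emph{main obstacle} is extracting the extra vanishing factor, and the refined bound of Corollary~\ref{corr:kernel-covariance-bound} is what supplies it.

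The first term of the corollary involves $\Delta_2(\vert t_i - t_j\vert)\,h_n^{-2}$. Here we need to convert time gaps into lags. We interpret the lag between $t_i$ and $t_j$ as the index difference $\vert i-j\vert$, consistent with how the PMC condition \eqref{assum:gmc-Xt} indexes innovations along the observed sequence. Since $\Delta_2(m) = \Ocal(m^{-\tau+1})$ with $\tau > 2$, the sum $\sum_{m\geqslant 1}\Delta_2(m)$ is finite. Summing over pairs then gives $\sum_{i\neq j}\Delta_2(\vert i-j\vert) \leqslant 2n\sum_m \Delta_2(m) = \Ocal(n)$. The contribution of this term is therefore
\begin{equation*}
    \Ocal\left( \dfrac{n}{h_n^2 n^2 \phi_{\bb{x}}^2(h_n\lambda)} \right) = \Ocal\left( \dfrac{1}{n h_n^2 \phi_{\bb{x}}^2(h_n\lambda)} \right) \to 0,
\end{equation*}
which is exactly the rate condition in Assumption~\ref{assum:bandwidth-decay} (with $u=\lambda$). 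This explains the appearance of the factor $h_n^2\phi^2_{\bb{x}}$ in that assumption.

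Combining the three bounds gives $\var(\widehat{\mu}_{k,1}(\bb{x})) \to 0$. Chebyshev's inequality then yields $\widehat{\mu}_{k,1}(\bb{x}) - 1 = o_{\prob}(1)$.
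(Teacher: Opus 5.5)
Your proposal is correct and follows the paper's proof essentially step by step: Chebyshev with $\E(\widehat{\mu}_{k,1}(\bb{x}))=1$, the diagonal term controlled by \eqref{eqn:kernel-convergence} and $n\phi_{\bb{x}}(h_n\lambda)\to\infty$, and the off-diagonal terms controlled by Corollary~\ref{corr:kernel-covariance-bound}. As in the paper, the first piece of that bound is handled by summability of $\Delta_2$ together with $nh_n^2\phi_{\bb{x}}^2(h_n\lambda)\to\infty$, and the second by $\E(\phi_{\bb{X}_0}(3h_n\lambda))\to 0$; your index-lag reading of $\Delta_2(\vert t_i - t_j\vert)$, your statement of summability as $\sum_{m}\Delta_2(m)<\infty$, and your dominated-convergence (atomless) justification of the last limit simply make precise what the paper leaves implicit.
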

\begin{proof}
    Since $\bb{X}_t$ is second-order stationary, we have
    \begin{equation*}
        \E(\widehat{\mu}_{k,1}(\bb{x})) = \dfrac{1}{n \E(K_0)} \sum_{i=1}^n \E(K_{t_i}) = 1.
    \end{equation*}
    Considering the variance, we get
    \begin{align*}
        & \var(\widehat{\mu}_{k,1}(\bb{x}))\\
        = \quad & \dfrac{1}{n^2 \E^2(K_0)} \left[ \sum_{i=1}^n \text{var}(K_{t_i}^2) + \sum_{\vert i - j\vert > 0} \cov(K_{t_i}, K_{t_j}) \right]\\
        = \quad &  \dfrac{1}{n\E^2(K_0)} (\E(K_0^2) - \E^2(K_0)) + \dfrac{1}{n^2\E^2(K_0)}\sum_{\vert i - j\vert > 0} \cov(K_{t_i}, K_{t_j})  \\
        \leqslant \quad & \dfrac{(1+o(1))\xi_2 }{(1-o(1))n\phi_{\bb{x}}(h_n\lambda)\xi_1^2 } + \dfrac{1}{(1-o(1)) \xi_1^2} \sum_{\vert i - j\vert > 0} \left[ \dfrac{\sqrt{2}CC_2^2 \Delta_2(\vert t_i - t_j\vert) }{9n^2 h_n^2\lambda^2 \phi_{\bb{x}}^2(h_n\lambda)} p_n + \dfrac{(1+C)C_2^2}{n^2} \bar{p_n} \right]
    \end{align*}
    where the last line follows from \eqref{eqn:Kt-cov-bound}, where $p_n = 1 - \E\left( \phi_{\bb{X}_0}(3h_n\lambda) \right)$ and $\bar{p_n} = 1 - p_n$. Now, for the first term, we apply the fact that $n\phi_{\bb{x}}(h_n\lambda) \rightarrow \infty$ as $n \rightarrow \infty$ as shown in \eqref{eqn:bandwidth-density-inf}. For the second term, we note that due to the PMC condition~\ref{assum:gmc-Xt}, 
    we have $\sum_{l = 1}^{n-1} \Delta_2(l) = \Ocal(n^{-\tau+2}) < \infty$, and hence the numerator is $\Ocal(n)$. Therefore, the quantity goes to $0$ as $nh_n^2\phi_{\bb{x}}^2(h_n\lambda) \rightarrow \infty$ by the choice of the bandwidth given in Assumption~\ref{assum:bandwidth-decay}. For the third and the final term, as shown in \eqref{eqn:Kt-cov-bound}, it is of $\Ocal(\E(\phi_{\bb{X}_0}(3h_n\lambda)))$. Since, $\phi_{\bb{x}}(h_n u) \rightarrow 0$ for any fixed $\bb{x} \in \chi$ and any $u > 0$, the third term also becomes asymptotically negligible. Therefore, we have
    \begin{equation*}
        \var(\widehat{\mu}_{k,1}(x)) \rightarrow 0,
    \end{equation*}
    as $n \rightarrow \infty$. An application of Chebyshev's inequality now completes the proof.
\end{proof}

\begin{lem}\label{lem:muk2-muk-conv}
    Suppose that the Assumptions~\ref{assum:mean-sigma-bound}-\ref{assum:kernel-mu-lipschitz} hold. Additionally, the effective spatial resolution $\epsilon_{t_i}^\ast$ decays to $0$ uniformly over $t_i \in \Tcal$. Then, for any fixed $k\in\Z^+$, we have $\sup_{\bb{x} \in \chi} \left\vert \E\widehat{\mu}_{k,2}(\bb{x}) -\mu_k(\bb{x}) \right\vert = o_{\prob}(1)$ as $n \rightarrow \infty$.    
\end{lem}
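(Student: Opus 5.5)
We split $\E\widehat{\mu}_{k,2}(\bb{x}) - \mu_k(\bb{x})$ into a Monte-Carlo error and a kernel-smoothing bias, by inserting the unobserved aggregated response $Y^\ast_{t_i k}$. Write
\begin{equation*}
    \E\widehat{\mu}_{k,2}(\bb{x}) - \mu_k(\bb{x})
    = \dfrac{1}{n\E(K_0)}\sum_{i=1}^n \E\left[K_{t_i}\left(\widehat{Y}^\ast_{t_i k} - Y^\ast_{t_i k}\right)\right]
    + \left( \dfrac{1}{n\E(K_0)}\sum_{i=1}^n \E\left[K_{t_i} Y^\ast_{t_i k}\right] - \mu_k(\bb{x}) \right)
    =: B_1 + B_2 .
\end{equation*}
Both terms are deterministic. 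So the $o_{\prob}(1)$ claim is really the statement that $\sup_{\bb{x}}\vert B_1 + B_2\vert \to 0$.

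\textbf{The term $B_2$.} By \eqref{eqn:model-reduced}, $Y^\ast_{t_i k} = \mu_k(\bb{X}_{t_i}) + \eta_{t_i k}$. Since $K_{t_i}$ is $\Fcal_{t_i}$-measurable and $\E(\eta_{t_i k}\mid \Fcal_{t_i}) = 0$, we get $\E[K_{t_i}\eta_{t_i k}] = 0$. By stationarity every summand then equals $\E[K_0\mu_k(\bb{X}_0)]$, so
\begin{equation*}
    B_2 = \dfrac{\E\left[K_0\left(\mu_k(\bb{X}_0) - \mu_k(\bb{x})\right)\right]}{\E(K_0)},
\end{equation*}
which is exactly the bias $\tilde b_{nk}(\bb{x})$ of \eqref{eqn:bias-term}. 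On the support of $K_0$ we have $\Vert \bb{D}^{-1}(\bb{X}_0 - \bb{x})\Vert \leqslant h_n\lambda$. Hence $\vert B_2\vert \leqslant \sup_{\Vert \bb{D}^{-1}(\bb{y}-\bb{x})\Vert \leqslant h_n\lambda} \vert \mu_k(\bb{y}) - \mu_k(\bb{x})\vert$.

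To make this $O(h_n)$ uniformly in $\bb{x}$, we use the Lipschitz control in \ref{assum:kernel-mu-lipschitz}. The function $f_{\bb{x}}(\bb{y}) = K(\cdot)\mu_k(\bb{y})$ is Lipschitz with a constant uniform in $\bb{x}$. Inside the region where $K$ is bounded below by $C_1$ (type-I kernel, \eqref{eqn:kernel-bounds}), this translates into local Lipschitz behaviour of $\mu_k$ near $\bb{x}$. This gives $\vert B_2\vert \leqslant C h_n \lambda \to 0$ uniformly, which is the remark after Theorem~\ref{thm:asymptotic-normality}. If that translation is too lossy, the fallback is to assume uniform continuity of $\mu_k$ in the scaled metric, which still gives $\sup_{\bb{x}}\vert B_2\vert \to 0$ because $h_n \to 0$ by \ref{assum:bandwidth-decay}.

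\textbf{The term $B_1$.} The kernel is bounded, $0 \leqslant K_{t_i} \leqslant C_2$, so
\begin{equation*}
    \vert B_1\vert \leqslant \dfrac{C_2}{\E(K_0)}\,\sup_i \E\left\vert \widehat{Y}^\ast_{t_i k} - Y^\ast_{t_i k}\right\vert \leqslant \dfrac{C_2}{\E(K_0)}\sup_i \delta_k(\epsilon^\ast_{t_i})
\end{equation*}
by Proposition~\ref{lem:Yk-convergence}. By \eqref{eqn:kernel-convergence}, $\E(K_0) \asymp \phi_{\bb{x}}(h_n\lambda) \to 0$, and this factor is the main obstacle: the crude bound loses $\phi_{\bb{x}}^{-1}$.

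I would fix this by not separating $K_{t_i}$ from the Monte-Carlo error. Conditional on $\Fcal_{t_i}$, the proof of Proposition~\ref{lem:Yk-convergence} applies with $\mu(\bb{X}_{t_i},\cdot)$ in place of $\widetilde{\mu}$. This uses that the sampling locations are deterministic and that $\epsilon_{t_i}$ is independent of $\Fcal_{t_i}$. It gives
\begin{equation*}
    \left\vert \E\left[\widehat{Y}^\ast_{t_i k} - Y^\ast_{t_i k} \mid \Fcal_{t_i}\right]\right\vert \leqslant M\,\omega_{b_k}(\epsilon^\ast_{t_i}) + M\,\omega_{\mu(\bb{X}_{t_i},\cdot)}(\epsilon^\ast_{t_i}),
\end{equation*}
where $\omega_g$ denotes the modulus of continuity of $g$. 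Multiplying by $K_{t_i}$ and taking expectations makes the $\E(K_0)$ factors cancel. We are then left with $\sup_i$ of a kernel-weighted modulus of continuity, $\E[K_0\,\omega_{\mu(\bb{X}_0,\cdot)}(\epsilon)]/\E(K_0)$. This tends to $0$ as $\sup_i \epsilon^\ast_{t_i} \to 0$, provided the spatial modulus is locally uniform in the covariate near $\bb{x}$. That follows from the basis representation \eqref{eqn:mean-sigma-basis}, continuity of the $\mu_k$, and the absolute summability of the coefficients.

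Alternatively, one can simply note that uniform decay of $\epsilon^\ast_{t_i}$ combined with the second part of \ref{assum:bandwidth-zero} controls $\phi_{\bb{x}}^{-1/2}\sup_i\delta_k$; pairing this with Cauchy--Schwarz, $\E[K_0\vert e\vert] \leqslant \E(K_0^2)^{1/2}\,\E(e^2)^{1/2}$, and \eqref{eqn:kernel-convergence} gives the required rate.

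Combining the bounds on $B_1$ and $B_2$ gives $\sup_{\bb{x}}\vert \E\widehat{\mu}_{k,2}(\bb{x}) - \mu_k(\bb{x})\vert \to 0$, which proves the lemma.
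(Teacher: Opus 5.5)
\textbf{There is a genuine gap, mainly in $B_2$; your $B_1$ repair also does not close.} Your decomposition into $B_1+B_2$ is the paper's. Your reduction of $B_2$ to $\sup_{\bb x}\sup_{\bb y\in B(\bb x,h_n\lambda)}\vert\mu_k(\bb y)-\mu_k(\bb x)\vert$ is also exactly what the paper does.

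\textbf{The $B_2$ step.} The gap comes next. Lipschitzness of $f_{\bb x}$ alone does not ``translate'' into a Lipschitz bound on $\mu_k$. Write $K_{\bb x}(\bb y)=\Kcal(\bb H_n^{-1}(\bb x-\bb y))$ and let $L$ be the Lipschitz constant. From $f_{\bb x}$ you only get $\vert K_{\bb x}(\bb y)\mu_k(\bb y)-K(0)\mu_k(\bb x)\vert\leqslant L\Vert\bb D^{-1}(\bb x-\bb y)\Vert$. The lower bound $K\geqslant C_1$ on the support does not let you divide out the kernel, because $K_{\bb x}(\bb y)\neq K(0)$ for a general type-I kernel. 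The leftover term $\vert K_{\bb x}(\bb y)-K(0)\vert\,\vert\mu_k(\bb x)\vert$ need not be small. The missing idea is to apply Assumption~\ref{assum:kernel-mu-lipschitz} at both centres $\bb x$ and $\bb y$. Use the symmetry $K_{\bb x}(\bb y)=K_{\bb y}(\bb x)$ and $K_{\bb x}(\bb x)=K_{\bb y}(\bb y)=K(0)$:
\[
\bigl(K(0)+K_{\bb x}(\bb y)\bigr)\bigl(\mu_k(\bb x)-\mu_k(\bb y)\bigr)=\bigl[f_{\bb x}(\bb x)-f_{\bb x}(\bb y)\bigr]+\bigl[f_{\bb y}(\bb x)-f_{\bb y}(\bb y)\bigr].
\]
This gives $\vert\mu_k(\bb x)-\mu_k(\bb y)\vert\leqslant 2L\Vert\bb D^{-1}(\bb x-\bb y)\Vert/(K(0)+C_1)=O(h_n)$, uniformly in $\bb x$. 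With this identity, your fallback assumption of uniform continuity of $\mu_k$ is unnecessary; that fallback would change the hypotheses.

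\textbf{The $B_1$ step.} Here you are right, and more careful than the paper. The paper bounds this term by $(C_2/C_1)\sup_{t_i}\E\vert\widehat Y^\ast_{t_ik}-Y^\ast_{t_ik}\vert$. That bound is valid only for $\E[\,\vert\widehat Y^\ast_{t_ik}-Y^\ast_{t_ik}\vert\mid\bb X_{t_i}\in B(\bb x,h_n\lambda)]$, not for the unconditional mean. So the paper silently drops the factor $\phi_{\bb x}^{-1}(h_n\lambda)$ that you flag.

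Conditioning on $\Fcal_{t_i}$ is the right repair: only the quadrature error of $s\mapsto\mu(\bb X_{t_i},s)b_k(s)$ survives. However, the step you use to finish is false as stated.
\begin{itemize}
\item Continuity of each $\mu_k$ plus pointwise absolute summability of $\{\mu_k(\bb y)\}_k$ gives neither equicontinuity nor uniform boundedness of $\{\mu(\bb y,\cdot)\}$ for $\bb y$ near $\bb x$.
\item There is no locally uniform tail control; that is why Theorem~\ref{thm:simult-conf-1} must assume locally uniform convergence separately. Nor is any bound on $\sup_k\sup_s\vert b_k(s)\vert$ assumed in this lemma.
\item Your constant $M$ in front of $\omega_{b_k}$ is really the random quantity $\sup_s\vert\mu(\bb X_{t_i},s)\vert$.
\item Local control would in any case give only pointwise convergence, not convergence uniform over $\bb x\in\chi$.
\end{itemize}

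Your Cauchy--Schwarz alternative has similar problems. It invokes Assumption~\ref{assum:bandwidth-zero}, which is not a hypothesis of the lemma. It needs a second-moment version of Proposition~\ref{lem:Yk-convergence}. It is also pointwise in $\bb x$.

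To make $B_1$ rigorous you must do one of two things. Either add an explicit assumption, e.g.\ $\sup_{\bb y}\sup_s\vert\mu(\bb y,s)\vert<\infty$ and $\sup_{\bb y}\omega_{\mu(\bb y,\cdot)}(\epsilon)\to0$ as $\epsilon\to0$. Or, as the paper implicitly does, read Proposition~\ref{lem:Yk-convergence} as holding conditionally on the covariate.
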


\begin{proof}
    Fix any $k\in\Z^+$. By using second-order stationarity of $\bb{X}_t$, we decompose the given quantity as
    \begin{align*}
    & \sup_{\bb{x} \in \chi} \left\vert \E\widehat{\mu}_{k,2}(\bb{x}) - \mu_k(\bb{x}) \right\vert\\ 
    = {} & \sup_{\bb{x} \in \chi} \left\vert \dfrac{1}{n} \sum_{i =1}^n \dfrac{\E K_{t_i}\widehat{Y}^\ast_{tk} }{\E K_0} - \mu_k(\bb{x}) \right\vert\\
    = {} & \sup_{\bb{x} \in \chi} \left\vert \dfrac{1}{n}\sum_{i=1}^n \dfrac{\E K_{t_i} (\widehat{Y}^\ast_{t_i k} - {Y}^\ast_{t_i k}) ) }{\E K_0} \right\vert  + \sup_{\bb{x} \in \chi} \left\vert \dfrac{1}{n} \sum_{i =1}^n \dfrac{\E K_{t_i} {Y}^\ast_{t_i k} }{\E K_0} - \dfrac{1}{n} \sum_{i=1}^n \dfrac{\E K_{t_i} \mu_k(\bb{x})}{\E K_0} \right\vert \\
    \leqslant {} & \sup_{\bb{x} \in \chi} \dfrac{1}{n} \sum_{i=1}^n \dfrac{\left\vert \E K_{t_i} (\widehat{Y}^\ast_{t_i k} - {Y}^\ast_{t_i k})\right\vert }{\left\vert \E K_0 \right\vert} 
    + \sup_{\bb{x} \in \chi} \dfrac{1}{n}\sum_{i=1}^n \left\vert \dfrac{ \E( K_{t_i}\E( \mu_k(\bb{X}_{t_i}) + \eta_{t_i k} - \mu_k(\bb{x})  \mid \mathcal{F}_{t_i} )) }{\E K_0} \right\vert\\
    \leqslant {} & \dfrac{C_2}{C_1}\sup_{t_i} \E\left\vert \widehat{Y}^\ast_{t_i k} - {Y}^\ast_{t_i k} \right\vert + \dfrac{C_2}{C_1}
    \sup_{x \in \chi} \sup_{y \in B(\bb{x}, h_n\lambda)} \left\vert \mu_k(\bb{y}) - \mu_k(\bb{x}) \right\vert,
    \end{align*}
    where the last line follows from the fact that $\E(\eta_{t_i k}\mid \Fcal_{t_i}) = 0$, and the kernel function has bounded support on $[0,\lambda]$ (cf.~\eqref{eqn:kernel-bounds}). Here, $B(\bb{x}, h_n\lambda)$ denotes the infinite-dimensional hyper-ellipsoid centered at $x$ with axis lengths $2h_n\eta_j\lambda$ in the $j^{th}$ direction. Note that, for the second term, the supremum can also be taken over the choice of any $\bb{x}$ and $\bb{y} \in \chi$ such that $\Vert \bb{H}_n^{-1}(\bb{x} - \bb{y}) \Vert \leqslant \lambda$.

    The first term of the above bound is $\Ocal(\delta_k(\epsilon_{t_i}^\ast))$ where $\delta_k(\epsilon_{t_i}^\ast)$ is as given in Proposition~\ref{lem:Yk-convergence}. This is asymptotically negligible since $\sup_{t_i} \epsilon_{t_i}^\ast \rightarrow 0$ and that the basis functions $b_k(s)$ is continuous over the spatial horizon $\Scal$.

    For the second term, let us define the notation $K_{\bb{x}}(\bb{y}) =  \Kcal(\bb{H}_n^{-1}(\bb{x}-\bb{y}))$. Then using the triangle inequality and the symmetry of the kernel function, we get that
    \begin{align*}
        & \vert \mu_k(\bb{x}) - \mu_k(\bb{y})\vert\\
        ={} & \dfrac{1}{K(0)+K_{\bb{x}}(\bb{y})} \vert (K(0) + K_{\bb{x}}(\bb{y}))(\mu_k(\bb{x})  - \mu_k(\bb{y}))\vert \\
        \leqslant{} & \dfrac{1}{K(0) + K_{\bb{x}}(\bb{y})} \left( \vert K_{\bb{x}}(\bb{x})\mu_k(\bb{x}) - K_{\bb{x}}(\bb{y})\mu_k(\bb{y}) \vert + \vert K_{\bb{y}}(\bb{x})\mu_k(\bb{x}) - K_{\bb{y}}(\bb{y})\mu_k(\bb{y}) \vert \right)\\
        \leqslant{} & \dfrac{1}{K(0) + C_1} \Ocal(\Vert \bb{D}^{-1}(\bb{x} - \bb{y}) \Vert),
    \end{align*}
    where in the last line, we make use of the Lipschitz property given in Assumption~\ref{assum:kernel-mu-lipschitz} and the boundedness of the type-I kernel. Since we are considering the supremum when $\Vert \bb{H}_n^{-1}(\bb{x}-\bb{y})\Vert < \lambda$, it follows that the above quantity is $\Ocal(h_n)$, which is asymptotically negligible as $n \rightarrow 0$, due to the choice of bandwidth as in Assumption~\ref{assum:bandwidth-decay}.    
\end{proof}

\begin{lem}\label{lem:muk2-conv}
    Suppose that the Assumptions~\ref{assum:mean-sigma-bound}-\ref{assum:kernel-mu-lipschitz} hold. Additionally, the effective spatial resolution $\epsilon_{t_i}^\ast$ decays to $0$ uniformly over $t_i \in \Tcal$. Fix any $k \in \Z^+$ and $\bb{x} \in \chi$. Then as $n \rightarrow \infty$,
    \begin{equation*}
        \E\vert \widehat{\mu}_{k,2}(\bb{x}) - \E(\widehat{\mu}_{k,2}(\bb{x})) \vert \rightarrow 0.
    \end{equation*}
\end{lem}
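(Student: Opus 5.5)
We need to show that the $L^1$ distance between $\widehat{\mu}_{k,2}(\bb{x})$ and its mean vanishes. Since $L^2$ convergence implies $L^1$ convergence, the plan is to control the variance. We first split off the Monte-Carlo error. Write $\widehat{Y}^\ast_{t_ik} = Y^\ast_{t_ik} + (\widehat{Y}^\ast_{t_ik} - Y^\ast_{t_ik})$ and decompose $\widehat{\mu}_{k,2}(\bb{x}) = T_1 + T_2$, where
\[
T_1 = \frac{1}{n\E(K_0)}\sum_{i=1}^n K_{t_i} Y^\ast_{t_ik}, \qquad T_2 = \frac{1}{n\E(K_0)}\sum_{i=1}^n K_{t_i}(\widehat{Y}^\ast_{t_ik} - Y^\ast_{t_ik}).
\]
For $T_2$ we use only the crude bound $K_{t_i} \leqslant C_2$ from \eqref{eqn:kernel-bounds} together with Proposition~\ref{lem:Yk-convergence}. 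This gives $\E|T_2 - \E T_2| \leqslant 2\E|T_2| \leqslant 2C_2 \sup_{t_i}\delta_k(\epsilon^\ast_{t_i})/\E(K_0)$. Since $\E(K_0) \asymp \phi_{\bb{x}}(h_n\lambda)$ by \eqref{eqn:kernel-convergence}, this bound is too weak on its own, because $\delta_k$ is not assumed to beat $\phi_{\bb{x}}$ under the hypotheses here. So the bound has to be sharpened. We condition on the covariate $\sigma$-field and use that $K_{t_i}$ is $\Fcal_{t_i}$-measurable, while the spatial sampling error only enters through $Y_{t_i}(\cdot)$. Repeating the argument of Proposition~\ref{lem:Yk-convergence} conditionally on $\bb{X}_{t_i}$ then gives $\E(|\widehat{Y}^\ast_{t_ik} - Y^\ast_{t_ik}| \mid \bb{X}_{t_i}) \leqslant$ a uniformly vanishing modulus. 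This uses the continuity of $\mu(\bb{x},\cdot)$, $b_k$ and the moment bound \ref{assum:bounded-errors}. With it, $\E|T_2| \leqslant (C_2/C_1)\,o(1)$, exactly as in the first term of Lemma~\ref{lem:muk2-muk-conv}.

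For $T_1$ we use the reduced model \eqref{eqn:model-reduced} to write $K_{t_i}Y^\ast_{t_ik} = K_{t_i}\mu_k(\bb{X}_{t_i}) + K_{t_i}\eta_{t_ik}$. This splits $T_1 - \E T_1$ into a covariate part $U$ and a noise part $V$.

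The noise part $V$ is handled by martingale structure. Since $\E(\eta_{t_ik}\mid\Fcal_{t_i}) = 0$ and the errors are temporally independent, the cross terms $\E(K_{t_i}K_{t_j}\eta_{t_ik}\eta_{t_jk})$ vanish for $i \neq j$: condition on everything up to $t_j > t_i$. Hence
\[
\var(V) = \frac{\sum_i \E(K_{t_i}^2\,\E(\eta_{t_ik}^2\mid\Fcal_{t_i}))}{n^2\E^2(K_0)}.
\]
The conditional variance $\E(\eta^2_{t_ik}\mid\Fcal_{t_i}) = \sum_{u,v}c_{u,v}(k,k)\sigma_u(\bb{X}_{t_i})\sigma_v(\bb{X}_{t_i})$ is locally bounded near $\bb{x}$ by continuity of the $\sigma_u$ on the kernel support. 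Combined with \eqref{eqn:kernel-convergence}, this gives $\var(V) = \Ocal\bigl((n\phi_{\bb{x}}(h_n\lambda))^{-1}\bigr) \to 0$ by \eqref{eqn:bandwidth-density-inf}.

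The covariate part $U$ is the main obstacle, since it carries the temporal dependence. We mimic Lemma~\ref{lem:muk1-conv} with $K_{t_i}$ replaced by $f_{\bb{x}}(\bb{X}_{t_i}) = K_{t_i}\mu_k(\bb{X}_{t_i})$, which is Lipschitz by \ref{assum:kernel-mu-lipschitz}. The diagonal terms are bounded by $\sup_{B(\bb{x},h_n\lambda)}\mu_k^2\,\E(K_0^2)$, which is of order $\phi_{\bb{x}}(h_n\lambda)$. For the off-diagonal covariances we reproduce the proof of Corollary~\ref{corr:kernel-covariance-bound}, splitting on $\{\Vert\bb{D}^{-1}(\bb{X}_{t_i}-\bb{X}_{t_j})\Vert > 3h_n\lambda\}$. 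On that event Proposition~\ref{lem:f-Xk-bound} yields a bound of order $h_n^{-2}\Delta_2(|t_i-t_j|)$ (possibly with a better power of $h_n$, since here the Lipschitz constant is assumed directly). On the complement, \ref{assum:bandwidth-decay} bounds the joint small-ball term by $\Ocal(\phi_{\bb{x}}^2(h_n\lambda))$. After dividing by $n^2\E^2(K_0)$, summability of $\Delta_2$ (from $\tau > 2$) and $nh_n^2\phi^2_{\bb{x}}(h_n\lambda) \to \infty$ make the first contribution vanish. The second contribution is $\Ocal(\E\phi_{\bb{X}_0}(3h_n\lambda)) \to 0$. Hence $\var(U) \to 0$.

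Finally, Cauchy--Schwarz and the triangle inequality give $\E|\widehat{\mu}_{k,2} - \E\widehat{\mu}_{k,2}| \leqslant \sqrt{\var U} + \sqrt{\var V} + 2\E|T_2| \to 0$. The delicate points are the careful handling of $T_2$, which needs a conditional version of Proposition~\ref{lem:Yk-convergence}, and the lag-dependent covariance bound for $U$.
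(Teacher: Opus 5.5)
Your route differs from the paper's. The paper keeps the Monte-Carlo error and the noise inside $U_i=K_{t_i}\widehat{Y}^\ast_{t_ik}-\E(K_{t_i}\widehat{Y}^\ast_{t_ik})$. It checks the $L^1$-mixingale conditions of \cite{andrews1988lln} via Proposition~\ref{lem:f-Xk-bound} and a first-moment bound, and concludes $n^{-1}\sum_i U_i\to0$ in $L^1$. You instead compute second moments: you split off the Monte-Carlo error, remove the noise cross terms by the martingale-difference property of $\eta_{t_ik}$, and treat the covariate part as in Lemma~\ref{lem:muk1-conv}.

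The mixingale route needs fewer moments, but it only controls $n^{-1}\sum_iU_i$. The lemma, however, carries the factor $\E(K_0)^{-1}\asymp\phi_{\bb{x}}^{-1}(h_n\lambda)\to\infty$. After that normalization the summands are of size $\phi_{\bb{x}}^{-1}(h_n\lambda)$ on an event of probability $\asymp\phi_{\bb{x}}(h_n\lambda)$. They are therefore not uniformly integrable, and Andrews' law cannot simply be applied to the normalized array. Your variance computation is what actually uses $n\phi_{\bb{x}}(h_n\lambda)\to\infty$ and $nh_n^2\phi_{\bb{x}}^2(h_n\lambda)\to\infty$. You also rightly see that the Monte-Carlo error must be small relative to $\E(K_0)$, which the paper absorbs into an unquantified $o_{\prob}(1)$.

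Two steps do not go through as written.

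\textbf{The covariate part $U$.} The complement-event piece is not shown to be $\Ocal(\E\phi_{\bb{X}_0}(3h_n\lambda))$. The best \ref{assum:bandwidth-decay} yields there is $\Ocal(\phi_{\bb{x}}^2(h_n\lambda))$ per pair. Summing over $\asymp n^2$ pairs and dividing by $n^2\E^2(K_0)$ then leaves only $\Ocal(1)$. The extra factor $\E\phi_{\bb{X}_0}(3h_n\lambda)$ in Corollary~\ref{corr:kernel-covariance-bound} comes from multiplying the cross-moment bound by a probability, which \ref{assum:bandwidth-decay} does not justify.

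You do not need the split. \ref{assum:kernel-mu-lipschitz} makes $f_{\bb{x}}$ globally Lipschitz, say with constant $L_n$. So for $j<i$, Proposition~\ref{lem:f-Xk-bound} gives $|\cov(f_{\bb{x}}(\bb{X}_{t_i}),f_{\bb{x}}(\bb{X}_{t_j}))|\leqslant\E|f_{\bb{x}}(\bb{X}_{t_j})-\E f_{\bb{x}}(\bb{X}_{t_j})|\,L_n\Delta_2(i-j)/\sqrt{2}=\Ocal(\phi_{\bb{x}}(h_n\lambda)L_n\Delta_2(i-j))$. The off-diagonal part of $\var(U)$ is then $\Ocal(L_n/(n\phi_{\bb{x}}(h_n\lambda)))$, which vanishes under \ref{assum:bandwidth-decay} whenever $L_n=\Ocal(h_n^{-2})$.

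\textbf{The Monte-Carlo term $T_2$.} Your conditional version of Proposition~\ref{lem:Yk-convergence} needs $\E(|\widehat{Y}^\ast_{t_ik}-Y^\ast_{t_ik}|\mid\bb{X}_{t_i}=\bb{y})\to0$ uniformly in $i$ and in $\bb{y}\in B(\bb{x},h_n\lambda)$. Continuity of $\mu(\bb{x},\cdot)$ at the single point $\bb{x}$, together with \ref{assum:bounded-errors}, does not give this, since \ref{assum:mean-sigma-bound} only controls $\E|\mu(\bb{X}_0,s)|$ and $\E\sigma^2(\bb{X}_0,s)$. You need one of two things:
\begin{enumerate}
\item an extra local regularity condition, namely moduli of continuity and bounds for $\mu(\bb{y},\cdot)$ and $\sigma(\bb{y},\cdot)$ that are uniform over $\bb{y}$ near $\bb{x}$; or
\item the rate in \ref{assum:bandwidth-zero}, which already makes your crude bound $C_2\sup_i\delta_k(\epsilon^\ast_{t_i})/\E(K_0)$ vanish because $n\phi_{\bb{x}}(h_n\lambda)\to\infty$.
\end{enumerate}

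\textbf{A smaller point in $V$.} Local boundedness of $\sum_{u,v}c_{u,v}(k,k)\sigma_u\sigma_v$ near $\bb{x}$ does not follow from continuity of each $\sigma_u$. You can avoid it: $K_{t_i}^2\leqslant C_2^2$ and $\E(\eta_{t_ik}^2\mid\bb{X}_{t_i})\leqslant\int_{\Scal}\sigma^2(\bb{X}_{t_i},s)\,ds$ give $\var(V)=\Ocal(1/(n\phi_{\bb{x}}^2(h_n\lambda)))\to0$ by \ref{assum:bandwidth-decay}.
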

\begin{proof}
    We start by noting that, 
    \begin{equation}
        \left\vert \widehat{\mu}_{k,2}(\bb{x}) - \E(\widehat{\mu}_{k,2}(\bb{x})) \right\vert
        \leqslant \dfrac{1}{\E K_0 } \left\vert \dfrac{1}{n}\sum_{i=1}^n K_{t_i}\widehat{Y}_{t_i k}^\ast - \E(K_{t_i}\widehat{Y}_{t_i k}^\ast) \right\vert = \dfrac{1}{\E K_0} \left\vert \dfrac{1}{n}\sum_{i=1}^n U_i \right\vert.
        \label{eqn:proof-lem-muk2-conv-1}
    \end{equation}
    where
    \begin{equation*}
        U_i = K_{t_i}\widehat{Y}_{t_i k}^\ast - \E\left( K_{t_i}\widehat{Y}_{t_i k}^\ast \right).
    \end{equation*}
    Note that, $\E(U_i) = 0$ for each $i = 1, 2, \dots, n$. Although $\{ U_i \}_{i=1}^n$ are dependent, if we can show a result similar to the law of large numbers for the average of these $U_i$s, then the result will follow. We will use Theorem 1 by~\cite{andrews1988lln} to show this. In view of this, it suffices to show that there exists an appropriate sequence of real numbers $\{ \psi(m) \}_{m=1}^{\infty}$ and a large absolute constant $C_0$ such that
    \begin{enumerate}
        \item $\E\vert \E(U_i \mid \Fcal_{t_i-m}) \vert < C_0 \psi(m)$ with $\psi(m) \rightarrow 0$ as $m \rightarrow \infty$.
        \item $\E\vert U_i - \E(U_i \mid \Fcal_{t_i+m}) \vert < C_0 \psi(m+1)$ for $m \geq 0$.
        \item $U_i$s are uniformly integrable.
    \end{enumerate}
    Beginning with the first condition, we have
    \begin{align*}
        \vert \E(U_i \mid \Fcal_{t_i-m}) \vert
        & = \left\vert \E\left( K_{t_i}\widehat{Y}_{t_i k}^\ast - \E(K_{t_i}\widehat{Y}_{t_i k}^\ast) \mid \Fcal_{t_i-m} \right) \right\vert\\
        & = o_{\prob}(1) + \left\vert  \E\left( K_{t_i}Y_{t_i k}^\ast - \E(K_{t_i}Y_{t_i k}^\ast) \mid \Fcal_{i-m} \right) \right\vert, \text{as } \sup_{t_i} \E \vert \widehat{Y}_{t_i k}^\ast - Y_{t_i k}^\ast\vert = o_{\mathbb{P}}(1)\\
        & \leqslant o_{\prob}(1) + \left\vert \E\left( K_{t_i}\mu_k(\bb{X}_{t_i}) - \E(K_{t_i}\mu_k(\bb{X}_{t_i})) \mid \Fcal_{i-m} \right) \right\vert, \text{ as } \E(\eta_{t_i,k}\mid \Fcal_{t_i-m}) = 0,\\
        & \leqslant o_{\prob}(1) + C m^{-\tau},
    \end{align*}
    where the last line follows from the fact that the function $f(\bb{y}) =  \Kcal(\bb{H}_n^{-1}(\bb{y} - \bb{x}))\mu_k(\bb{y})$ is Lipschitz due to Assumption~\ref{assum:kernel-mu-lipschitz}, and a direct application of Proposition~\ref{lem:f-Xk-bound} in addition of the PMC condition~\eqref{assum:gmc-Xt}. Since $\tau > 0$, $\E\vert\E(Z_i \mid \Fcal_{t_i-m})\vert < c\psi(m)$ holds with $\psi(m) = m^{-\tau}$ and $C_0 = 2C$. 

    For the second condition, note that $\bb{X}_{t_i}$ is a causal process with respect to the filtration $\{ \Fcal_{t_i} \}_{i=1}^\infty$. Since $U_i$ is a linear combination of only $\{ Y_{t_i}(s_{t_i j}) \}_{j=1}^{n_i}$ with the weights being a function of $\bb{X}_{t_i}$, it is known almost surely given the history $\Fcal_{t_i + m}$ for any $m \geq 0$. Hence, $\E(U_i \mid \Fcal_{t_i+m}) = U_i$, and the second condition holds.

    For the final condition regarding the uniform integrability of $U_i$, we consider the decomposition
    \begin{align}
        \sup_i \E\vert U_i \vert
        & \leqslant \sup_{t_i} 2 \E\vert K(t_i) \widehat{Y}^\ast_{t_i k} \vert, \nonumber \\
        & = o_{\prob}(1) + \sup_{t_i} 2C_2 \E\left\vert \mu_k(\bb{X}_{t_i}) + \int_{\Scal} \sigma(\bb{X}_{t_i}, s)\epsilon_{t_i}(s)b_k(s) ds \right\vert \label{eqn:uniform-ui-bound-1} \\
        & \leqslant o_{\prob}(1) + 2C_2 \sup_{t_i} \left( \int_{\Scal} \E(\vert \mu(\bb{X}_{t_i}, s)\vert)b_k(s)ds + \int_{\Scal} \E\left(\sigma(\bb{X}_{t_i}, s) \E(\vert \epsilon_{t_i}(s)\vert) \right) b_k(s) ds \right) \nonumber \\
        & = o_{\prob}(1) + 2C_2 \left( \int_{\Scal} \E(\vert \mu(\bb{X}_0, s)\vert)b_k(s)ds + M \int_{\Scal} \E\left(\sigma(\bb{X}_{0}, s) \right) b_k(s) ds \right), \label{eqn:uniform-ui-bound-2}\\
        & \leqslant o_{\prob}(1) + 2C_2 \left( M' \int_{\Scal} \widetilde{\mu}(s) ds + M \left( \int_{\Scal} \widetilde{\sigma}^2(s)ds \right)^{1/2} \left( \int_{\Scal} b_k^2(s)ds \right)^{1/2} \right) \label{eqn:uniform-ui-bound-3}
    \end{align}
    where $M$ and $M'$ are some generic constants. The equation~\eqref{eqn:uniform-ui-bound-1} follows from the boundedness of the type-I kernel and Proposition~\ref{lem:Yk-convergence}, equation~\eqref{eqn:uniform-ui-bound-2} is a consequence of the stationarity of $\bb{X}_t$ and boundedness of expected error given in Assumption~\ref{assum:bounded-errors}. The last inequality~\eqref{eqn:uniform-ui-bound-3} is an application of the Cauchy-Schwarz inequality. Here, we use the notation $\widetilde{\mu}(s) = \E(\mu(\bb{X}_0, s))$ and $\widetilde{\sigma}^2(s) = \E(\sigma^2(\bb{X}_0, s))$ as in Assumption~\ref{assum:mean-sigma-bound}. Because of Assumption~\ref{assum:mean-sigma-bound}, both the corresponding integrals are finite, and $\int_{\Scal} b_k^2(s)ds = 1$ due to the orthonormality of the basis function. Hence $\{ U_i: 1 \leqslant i \leqslant n \}$ is uniformly integrable.

    Now an application of Theorem 1 by~\cite{andrews1988lln} implies that $\vert n^{-1}\sum_{i=1}^n U_i \vert$ converges to $0$ in $L_1$. We take expectation on both sides of the inequality \eqref{eqn:proof-lem-muk2-conv-1}.
\end{proof}

\subsection{Proof of Theorem~\ref{thm:pointwise-full}}\label{proof-thm:pointwise-full}

Fix $\bb{x} \in \chi$ and any $\epsilon > 0$. As $\mu(\bb{x}, s)$ has a decomposition as in \eqref{eqn:mean-sigma-basis}, we have $\sum_{k=1}^\infty \vert \mu_k(\bb{x})\vert < \infty$, where $\mu(\bb{x}, s) = \sum_{k=1}^\infty \mu_k(\bb{x}) b_k(s)$. 
    
\noindent\textbf{Consistency:} For the first part of the proof, let us fix $s \in \Scal$ as well. It is, therefore, possible to find $K_\epsilon$ such that the tail series $\sum_{k=K_{\epsilon}+1}^\infty \vert \mu_k(\bb{x})\vert < \epsilon/2b_\infty(s)$, as $b_\infty(s) > 0$ by definition. Hence,
\begin{equation*}
    \left\vert \sum_{k=K_{\epsilon}+1}^\infty \mu_k(\bb{x})b_k(s) \right\vert
    \leqslant \sum_{k=K_{\epsilon}+1}^\infty \vert \mu_k(\bb{x})\vert \left\vert b_\infty(s) \right\vert < \dfrac{\epsilon}{2}.
\end{equation*}
For the first $K_\epsilon$ terms, due to Theorem~\ref{thm:pointwise-component}, it is possible to choose sufficiently large $n$ such that for each $k = 1, 2, \dots, K_\epsilon$, the difference satisfies $(\widehat{\mu}_{k}(\bb{x}) - \mu_k(\bb{x})) = o_{\prob}(\epsilon/2K_\epsilon b_\infty(s))$. Therefore,
\begin{align*}
    \left\vert \sum_{k=1}^{K_{\epsilon}} \widehat{\mu}_k(\bb{x})b_k(s) - \mu(\bb{x}, s) \right\vert
    & = \left\vert \sum_{k=1}^{K_{\epsilon}} \widehat{\mu}_k(\bb{x})b_k(s) - \sum_{k=1}^\infty \mu_k(\bb{x})b_k(s) \right\vert\\
    & \leqslant \sum_{k=1}^{K_{\epsilon}} \left\vert \widehat{\mu}_k(\bb{x}) - \mu_k(\bb{x}) \right\vert \vert b_\infty(s)\vert + \sum_{k=1}^{K_\epsilon + 1} \vert \mu_k(\bb{x}) \vert \vert b_\infty(s)\vert\\
    & \leqslant \sum_{k=1}^{K_{\epsilon}} o_{\prob}\left( \dfrac{\epsilon}{2K_\epsilon b_\infty(s)} \right) \vert b_\infty(s)\vert + \dfrac{\epsilon}{2b_\infty(s)} \vert b_\infty(s)\vert = o_{\prob}(\epsilon).
\end{align*}

\textbf{Uniform Consistency:} Now we move over to the second part of the proof. As $b_\infty(s)$ is continuous on the compact set $\Scal$, there exists a constant $B_\infty \in (0, \infty)$ such that the supremum $\sup_{s\in\Scal} \vert b_\infty(s)\vert < B_\infty$. As a result, we have the decomposition
\begin{equation*}
    \left\vert \sum_{k=1}^{K_{\epsilon}} \widehat{\mu}_k(\bb{x})b_k(s) - \mu(\bb{x}, s) \right\vert
    \leqslant B_\infty \sum_{k=1}^{K_\epsilon} \left\vert \widehat{\mu}_k(\bb{x}) - \mu_k(\bb{x}) \right\vert + B_\infty \sum_{k=K_\epsilon+1}^\infty \vert \mu_k(\bb{x})\vert. 
\end{equation*}
Since the right-hand side is free of $s$, the inequality still holds true when the left-hand side of the inequality is replaced by its supremum over $s \in \Scal$. Finally, we choose $K_\epsilon$ such that $\sum_{k=1}^{K_\epsilon+1}\vert \mu_k(\bb{x})\vert < \epsilon/2B_\infty$ and choose sufficiently large $n$ such that $(\widehat{\mu}_k(\bb{x}) - \mu_k(\bb{x})) = o_{\prob}(\epsilon/2K_\epsilon B_\infty)$ for all $k = 1, 2, \dots, K_\epsilon$. The proof for uniform consistency now follows the same steps as in the previous part.

\subsection{Proof of Theorem~\ref{thm:asymptotic-normality}}\label{proof-thm:asymptotic-normality}

Following the same approach as in~\cite{masry2005nonparametric} and~\cite{hong2020nonparametric}, we decompose the centered estimate of the mean components into 
\begin{equation*}
    \widehat{\mu}_k(\bb{x}) - \mu_k(\bb{x}) - \tilde{b}_{nk}(\bb{x}) = \dfrac{ Q_{n k}(\bb{x}) - \tilde{b}_{nk}(\bb{x}) (\widehat{\mu}_{k,1}(\bb{x}) - \E(\widehat{\mu}_{k,1}(\bb{x})) ) }{ \widehat{\mu}_{k,1}(\bb{x}) }    
\end{equation*}
where 
\begin{align*}
    \tilde{b}_{nk}(\bb{x}) & = \dfrac{ \E(\widehat{\mu}_{k,2}(\bb{x})) - \mu_k(\bb{x}) \E(\widehat{\mu}_{k,1}(\bb{x})) }{ \E(\widehat{\mu}_{k,1}(\bb{x})) } = \E(\widehat{\mu}_{k,2}(\bb{x})) - \mu_k(\bb{x}) \\
    Q_{nk}(\bb{x}) & = ( \widehat{\mu}_{k,2}(\bb{x}) - \E(\widehat{\mu}_{k,2}(\bb{x})) ) - \mu_k(\bb{x}) ( \widehat{\mu}_{k,1}(\bb{x}) - \E(\widehat{\mu}_{k,1}(\bb{x})) ) 
\end{align*}
First, we consider the bias term $\tilde{b}_{nk}$. In Lemma~\ref{lem:muk2-muk-conv}, we have already established that under suitable assumptions it is asymptotically negligible as $n \rightarrow \infty$, hence it is $o_{\prob}(1)$. Also, we already know by Lemma~\ref{lem:muk1-conv} that $\widehat{\mu}_{k,1}(\bb{x})$ converges in probability to $\E(\widehat{\mu}_{k,1}(\bb{x})) =  1$. Therefore, we can rewrite the above decomposition as 
\begin{equation*}
    \widehat{\mu}_k(\bb{x}) - \mu_k(\bb{x}) - \tilde{b}_{nk}(\bb{x}) = \dfrac{Q_{n k}(\bb{x})}{ \widehat{\mu}_{k,1}(\bb{x}) }(1 + o_{\prob}(1)).
\end{equation*}
Hence, the interesting quantity is $Q_{n,k}(\bb{x})$, whose asymptotic distribution governs the asymptotic limit of the mean estimator. Therefore, we consider the normalized sum
\begin{align*}
    \sqrt{n\phi_{\bb{x}}(h_n\lambda)} Q_{nk}(\bb{x})
    & = \dfrac{ \sqrt{\phi_{\bb{x}}(h_n \lambda)} }{\sqrt{n}\E(K_0)} \sum_{i=1}^n \left[ K_{t_i}\widehat{Y}_{t_i, k}^\ast - \mu_k(\bb{x}) K_{t_i} - \E\left( K_{t_i}\widehat{Y}_{t_i, k}^\ast - \mu_k(\bb{x}) K_{t_i} \right) \right] \\
    & = \tilde{v}_n \sum_{i=1}^n U_i + \tilde{v}_n \sum_{i=1}^n V_i
\end{align*}
where 
\begin{align*}
    U_i & = K_{t_i}\left( Y_{t_i, k}^\ast - \mu_k(\bb{x}) \right) - \E\left(K_{t_i}\left( Y_{t_i k} - \mu_k(\bb{x}) \right) \right) \\
    V_i & = K_{t_i}\left( \widehat{Y}_{t_i k}^\ast - Y_{t_i k}^\ast \right) - \E\left( K_{t_i}\left( \widehat{Y}_{t_i k}^\ast - Y_{t_i k}^\ast \right) \right),\\
    \text{and, } \tilde{v}_n & = n^{-1/2}\phi^{1/2}_{\bb{x}}(h_n \lambda) / \E(K_0).
\end{align*}
Note that, since $(\phi_{\bb{x}}^{-1}(h_n\lambda))\E(K_0) \rightarrow \xi_1$ where $\xi_1 \in (0, \infty)$ as $n \rightarrow \infty$, so $\tilde{v}_n = (n\phi_{\bb{x}}(h_n\lambda))^{-1/2} \xi_1^{-1} + o_{\prob}(1)$. As shown in \eqref{eqn:bandwidth-density-inf}, we have $n\phi_{\bb{x}}(h_n\lambda) \rightarrow \infty$, hence $\tilde{v}_n \rightarrow 0$.

Now, by Lemma~\ref{lem:Yk-convergence} and the boundedness of the kernel as in~\eqref{eqn:kernel-bounds}, it follows that each $V_i = \Ocal(\delta_k(\epsilon^\ast_{t_i}))$. Therefore,
\begin{equation*}
    \tilde{v}_n \sum_{i=1}^n V_i = \Ocal\left( \dfrac{n \sup_{t_i} \delta_k(\epsilon^\ast_{t_i}) }{\xi_1 \sqrt{n \phi_{\bb{x}}(h_n\lambda)}}  \right) = \Ocal\left( \sup_{t_i} \delta_k(\epsilon^\ast_{t_i}) \sqrt{\dfrac{n}{\phi_{\bb{x}}(h_n\lambda)}} \right) = o_{\prob}(1)
\end{equation*}
where the last line follows from Assumption~\ref{assum:bandwidth-zero}. Also, since $\widehat{\mu}_{k, 1} \xrightarrow{P} 1$ due to Lemma~\ref{lem:muk1-conv}, we have 
\begin{equation}
    \sqrt{n\phi_{\bb{x}}(h_n\lambda)} \left( \widehat{\mu}_k(x) - \mu_k(x) - \tilde{b}_{nk}(x)\right) = \tilde{v}_n \sum_{i=1}^n U_i \left( \dfrac{1 + o(1)}{1 - o(1)} \right).
    \label{eqn:mu-asymp-decomp}
\end{equation}
Therefore, we simply have to restrict our attention to the scaled sum $\tilde{v}_n \sum_{i=1}^n U_i$ alone.

Our proof for establishing the asymptotic distribution of the scaled sum $\tilde{v}_n \sum_{i=1}^n U_i$ follows similar to the big-block small-block approach as in~\cite{masry2005nonparametric}, but with some modifications to incorporate the PMC-type condition instead of the strong mixing conditions. We begin by picking integers $a_n, b_n$ such that $a_n = \Ocal(n^{\beta})$ and $b_n = \Ocal(n^\alpha)$ where $\alpha$ and $\beta$ are as asserted by Assumption~\ref{assum:bandwidth-zero}. Let $c_n = (a_n + b_n)$ and $g_n = [n/c_n]$, where $[x]$ denote the largest integer less than or equal to $x$. Then the sum $\tilde{v}_n \sum_{i=1}^n U_i$ can be further decomposed as
\begin{equation}
    \tilde{v}_n\sum_{i=1}^n U_i = \tilde{v}_n\sum_{j=1}^{g_n} \sum_{i=(j-1)c_n+1}^{(j-1)c_n + a_n} U_i + \tilde{v}_n\sum_{j=1}^{g_n} \sum_{i=(j-1)c_n + (a_n + 1)}^{jc_n} U_i + \tilde{v}_n\sum_{i=g_n c_n}^{n} U_i =: S_n +  B_n + R_n,
    \label{eqn:small-block-big-block-defn}
\end{equation}
where $S_n, B_n$ and $R_n$ are the quantities it is replacing. Note that, since $a_n/b_n \rightarrow 0$ as $n \rightarrow \infty$, it follows that $b_n / c_n \rightarrow 1$ and hence $g_n$ is in the asymptotic order of $n^{1-\alpha}$. Now, to show the proof of the asymptotic normality, the idea is to show that the contribution from the small block sum $S_n$ and the remainder $R_n$ are asymptotically negligible, and the summands of the big-block sum are asymptotically independent, which we will show through the multiplicative decomposition of the characteristic functions of the appropriate random variables. These ideas are presented in precise mathematical terms through the following series of lemmas.

\begin{lem}\label{lem:small-block-zero}
    Suppose the Assumptions~\ref{assum:mean-sigma-bound}-\ref{assum:bandwidth-zero} hold. Then as $n \rightarrow \infty$, $S_n \xrightarrow{P} 0$, where $S_n$ is as defined in \eqref{eqn:small-block-big-block-defn}.
\end{lem}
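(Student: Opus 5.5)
We will show $\E(S_n^2)\to 0$ and then conclude by Chebyshev's inequality. Write $S_n = \tilde{v}_n \sum_{j=1}^{g_n} \zeta_j$, where $\zeta_j = \sum_{i=(j-1)c_n+1}^{(j-1)c_n+a_n} U_i$ is the $j$-th small-block sum of length $a_n = \Ocal(n^\beta)$. Expanding the square gives
\begin{equation*}
    \E(S_n^2) = \tilde{v}_n^2 \Big[ g_n a_n \var(U_1) + \sum_{i \neq i' \in \text{small blocks}} \cov(U_i, U_{i'}) \Big].
\end{equation*}
Since $\tilde{v}_n^2 \asymp (n\phi_{\bb{x}}(h_n\lambda))^{-1}$, the task reduces to bounding the diagonal and off-diagonal parts separately.

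\emph{Diagonal part.} Condition on $\Fcal_{t_i}$ and use $\E(\eta_{t_ik}\mid\Fcal_{t_i})=0$, the conditional variance formula \eqref{eqn:model-reduced-variance-cond}, and the Lipschitz continuity of $\mu_k$ on the kernel support (as in Lemma~\ref{lem:muk2-muk-conv}). These give
\begin{equation*}
    \var(U_i) \leqslant \E\big[K_{t_i}^2 (Y^\ast_{t_ik}-\mu_k(\bb{x}))^2\big] \leqslant C\,\E(K_0^2)\big(\sigma_{kk}(\bb{x}) + o(1) + \Ocal(h_n^2)\big) = \Ocal(\phi_{\bb{x}}(h_n\lambda))
\end{equation*}
by \eqref{eqn:kernel-convergence}. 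The diagonal contribution is therefore $\Ocal\big(g_na_n\phi_{\bb{x}}/(n\phi_{\bb{x}})\big) = \Ocal(n^{\beta-\alpha}) \to 0$, because $g_n \asymp n^{1-\alpha}$ and $\beta<\alpha$.

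\emph{Off-diagonal part.} For $i < i'$, split $U_{i'}$ as $K_{t_{i'}}(\mu_k(\bb{X}_{t_{i'}})-\mu_k(\bb{x}))$ plus $K_{t_{i'}}\eta_{t_{i'}k}$, both centered. The $\eta$ term contributes nothing to $\cov(U_i,U_{i'})$. Indeed, $U_i$ is determined by $\Fcal_{t_{i'}}$ together with the errors at time $t_i$, while $\epsilon_{t_{i'}}$ is independent of both, so conditioning yields zero. For the remaining term, apply Proposition~\ref{lem:f-Xk-bound} with $m = |i-i'|$ to the function $f_{\bb{x}}$ from Assumption~\ref{assum:kernel-mu-lipschitz}, and also to the bare kernel via the argument of Corollary~\ref{corr:kernel-covariance-bound}. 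This gives
\begin{equation*}
    |\cov(U_i,U_{i'})| \leqslant \E|U_i|\cdot C\Delta_2(|i-i'|) + C\phi_{\bb{x}}^2(h_n\lambda)
\end{equation*}
together with the $h_n^{-2}$-type factor appearing in \eqref{eqn:Kt-cov-bound}. Since $\E|U_i| = \Ocal(\phi_{\bb{x}}(h_n\lambda))$ and $\sum_m \Delta_2(m) < \infty$ (because $\tau>2$), summing over pairs within a block and across distinct blocks gives two pieces. The within-block piece is bounded by $g_na_n\cdot\Ocal(\phi_{\bb{x}} h_n^{-2})$. The $\phi_{\bb{x}}^2$ piece is bounded by $(g_na_n)^2\phi_{\bb{x}}^2$. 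After dividing by $n\phi_{\bb{x}}$, the second piece becomes $n^{1-2\alpha+2\beta}\phi_{\bb{x}}(h_n\lambda)$, which tends to zero by exactly the first condition of Assumption~\ref{assum:bandwidth-zero}. This strongly suggests that this assumption was designed for this step.

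\emph{Main obstacle.} The hard step is the cross-block $\Delta_2$ sum together with the stray $h_n^{-2}$ factor from Corollary~\ref{corr:kernel-covariance-bound}. Small blocks are separated by gaps of at least $b_n \asymp n^\alpha$, so the cross-block dependence terms are $\Ocal\big(g_n^2 a_n^2 \Delta_2(b_n)\big) = \Ocal\big(n^{2-2\alpha+2\beta}\, n^{-\alpha(\tau-1)}\big)$. These have to be made negligible using $\beta > (\tau-1)^{-1}$. For the within-block terms, I would first try to avoid the $h_n^{-2}$ loss entirely by applying Proposition~\ref{lem:f-Xk-bound} directly to the uniformly Lipschitz $f_{\bb{x}}$ from Assumption~\ref{assum:kernel-mu-lipschitz}, whose constant does not depend on $h_n$. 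If that fails, I would fall back on $nh_n^2\phi^2_{\bb{x}}\to\infty$ from Assumption~\ref{assum:bandwidth-decay}, which gives $a_n g_n/(n^2 h_n^2\phi^2_{\bb{x}}) \to 0$, exactly as in Lemma~\ref{lem:muk1-conv}.
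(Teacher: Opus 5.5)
Your diagonal estimate, $\tilde v_n^2 g_na_n\var(U_i)=\Ocal(n^{\beta-\alpha})$, and your handling of the $\phi_{\bb{x}}^2(h_n\lambda)$ piece through the first condition of Assumption~\ref{assum:bandwidth-zero} are exactly the paper's argument. On their own they already prove the lemma. The gap is the PMC term you carry along. You build the covariance bound by imitating the split in Corollary~\ref{corr:kernel-covariance-bound}, so it is a \emph{sum} of a $\Delta_2$ term and a $\phi_{\bb{x}}^2$ term. That forces you to show the $\Delta_2$ part is negligible as well, and the proposal does not do this.

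\begin{itemize}
\item Your first option cannot avoid the $h_n^{-2}$ loss. After discarding $\eta_{t_{i'}k}$, $U_{i'}$ still contains the centered bare-kernel term $-\mu_k(\bb{x})(K_{t_{i'}}-\E K_0)$. Assumption~\ref{assum:kernel-mu-lipschitz} gives no Lipschitz property for the kernel alone; a type-I kernel such as the uniform one is not even continuous.
\item Your second option uses the wrong normalization. In Lemma~\ref{lem:muk1-conv} the prefactor is $(n\E K_0)^{-2}\asymp(n\phi_{\bb{x}})^{-2}$, whereas here it is $\tilde v_n^2\asymp(n\phi_{\bb{x}})^{-1}$. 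With your own bound $g_na_n\,\Ocal(\phi_{\bb{x}}h_n^{-2})$, the within-block contribution is of order $n^{\beta-\alpha}h_n^{-2}$, not $a_ng_n/(n^2h_n^2\phi_{\bb{x}}^2)$.
\item Assumptions~\ref{assum:bandwidth-decay} and~\ref{assum:bandwidth-zero} do not force $n^{\beta-\alpha}h_n^{-2}$ to vanish. Take $\phi_{\bb{x}}(h)\asymp h$, $h_n=n^{-0.24}$, $\beta=0.5$, $\alpha=0.9$, $\tau=4$. Both rate conditions hold, yet $n^{\beta-\alpha}h_n^{-2}=n^{0.08}$.
\end{itemize}

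The cross-block piece you flagged as the main obstacle is in fact harmless. With the $h_n^{-2}$ factor and $nh_n^2\phi_{\bb{x}}^2\to\infty$, it is $o\bigl(n^{1-\alpha(\tau-1)}\cdot n^{1-2\alpha+2\beta}\phi_{\bb{x}}\bigr)\to0$.

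The missing observation is that this lemma needs no PMC input at all. $U_iU_{i'}$ can be nonzero only when both $\bb{X}_{t_i}$ and $\bb{X}_{t_{i'}}$ lie in $B(\bb{x},h_n\lambda)$. Hence the joint small-ball part of Assumption~\ref{assum:bandwidth-decay}, as used in \eqref{eqn:kernel-cross-bound}, directly gives $|\cov(U_i,U_{i'})|=\Ocal(\phi_{\bb{x}}^2(h_n\lambda))$ for every $i\neq i'$. This covers the $\mu_k$ cross term and the $\sigma$--$\rho$ cross term alike. It is what the paper does, and it does not rely on the $\eta_{t_{i'}k}$ contribution vanishing; the paper's proof in fact keeps a $\rho(s,s')$ cross term there, in line with \eqref{eqn:cov-y-rho}. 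Then, exactly as you computed, the at most $(g_na_n)^2$ off-diagonal pairs contribute $\tilde v_n^2(g_na_n)^2\phi_{\bb{x}}^2\asymp n^{1-2\alpha+2\beta}\phi_{\bb{x}}(h_n\lambda)\to0$. The diagonal contributes $\Ocal(n^{\beta-\alpha})$, and Chebyshev's inequality finishes. Replacing your additive bound with this one completes your proof. The PMC condition and the requirement $\beta>(\tau-1)^{-1}$ are used only in the big-block factorization, Lemma~\ref{lem:big-block-factor}.
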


\begin{proof}
Clearly, $\E(S_n) = 0$. In view of Chebyshev's inequality, it is, therefore, enough to show that the variance of the small block sum goes to $0$ as $n \rightarrow \infty$. To this direction, we have
\begin{align*}
    \var(S_n) 
    & = \tilde{v}_n^2 \var\left( \sum_{j=1}^{g_n} \sum_{i=(j-1)c_n + 1}^{(j-1)c_n + a_n} U_i \right)\\
    & = \tilde{v}_n^2 \sum_{j=1}^{g_n} \var\left( \sum_{i=(j-1)c_n + 1}^{(j-1)c_n + a_n} U_i \right) + \tilde{v}_n^2 \sum_{j \neq l}^{g_n} \cov\left( \sum_{i=(j-1)c_n + 1}^{(j-1)c_n + a_n} U_i, \sum_{i=(l-1)c_n + 1}^{(l-1)c_n + a_n} U_i \right)\\
    & = \tilde{v}_n^2 \sum_{j=1}^{g_n} \left[ \sum_{i=(j-1)c_n + 1}^{(j-1)c_n + a_n} \var(U_i) + \sum_{i \neq i'} \cov(U_i, U_{i'})  \right] + \tilde{v}_n^2 \sum_{j\neq l}^{g_n} \sum_{i \neq i'} \cov(U_i, U_i')
\end{align*}
where the sums over $i \neq i'$ are taken over the appropriate groups. For the variance term, we have
\begin{align*}
    \var(U_i)
    & = \var\left( K_{t_i}(Y_{t_i k}^\ast - \mu_k(\bb{x})) \right)\\
    & = \E\left[ \var\left(  K_{t_i}(Y_{t_i k}^\ast - \mu_k(\bb{x})) \mid \Fcal_{t_i} \right) \right] + \var\left( \E\left[ K_{t_i}(Y_{t_i k}^\ast - \mu_k(\bb{x})) \mid \Fcal_{t_i} \right] \right)\\
    & = \E\left[ K_{t_i}^2 \int_{\Scal^2} \sigma(\bb{X}_{t_i}, s)\sigma(\bb{X}_{t_i}, s')\rho(s, s') b_k(s)b_k(s')ds ds'\right] + \E\left[K^2_{t_i}(\mu_k(\bb{X}_{t_i}) - \mu_k(\bb{x}))^2 \right]\\
    & = \E\left[ K_{0}^2 \int_{\Scal^2} \sigma(\bb{X}_{0}, s)\sigma(\bb{X}_{0}, s')\rho(s, s') b_k(s)b_k(s')ds ds'\right] + \E\left[K^2_{0}(\mu_k(\bb{X}_{0}) - \mu_k(\bb{x}))^2 \right]
\end{align*}
The last line follows from stationarity of $\bb{X}_t$. We bound the second term $\E\left[K^2_{0}(\mu_k(\bb{X}_{0}) - \mu_k(\bb{x}))^2 \right]$ by noting that, the random variable inside the expectation is nonzero only when $\bb{X}_0$ lies in the infinite-dimensional ball $B(\bb{x}, h_n\lambda)$. Additionally, when $\bb{X}_0$ is indeed inside the ball $B(\bb{x}, h_n\lambda)$, we can apply the Lipschitz condition presented in Assumption~\ref{assum:kernel-mu-lipschitz} to bound the term. In summary, we get
\begin{align*}
    \E\left[K^2_{0}(\mu_k(\bb{X}_{0}) - \mu_k(\bb{x}))^2 \right] & 
    \leqslant \E\left[ \indd{\bb{X}_0 \in B(\bb{x},h_n\lambda)} \left( \Ocal\left( \Vert \bb{D}^{-1}(\bb{X}_0 - \bb{x})\Vert  \right) + (K_0-K(0))^2\mu_k^2(x)  \right) \right]\\
    & \leqslant \Ocal(h_n \lambda) + \Ocal(\phi_{\bb{x}}(h_n\lambda)) \rightarrow 0,
\end{align*}
as $n \rightarrow \infty$. To deal with the first term, note that again it is enough to consider the situation when $\bb{X}_0$ is within the ball $B(\bb{x}, h_n\lambda)$. By a similar logic as above and using the Lipschitz continuity of $\Kcal(\Vert \bb{H}_n^{-1}(\bb{X}_0 - \bb{x})\Vert) \sigma_k(\bb{X}_0)$ as in Assumption~\ref{assum:kernel-mu-lipschitz}, we obtain that $K_0\sigma(\bb{X}_0, s) = K_0\sigma(\bb{x}, s) + o_{\prob}(1)$ for any fixed $s \in \Scal$. Since $\Scal$ is compact, this convergence may be strengthened to a uniform convergence as $\sigma(\bb{x}, s)$ is continuous over $s \in \Scal$ due to continuity of the basis functions. Therefore,
\begin{align}
    & \int_{\Scal^2} K_0^2 \left\vert \sigma(\bb{X}_0, s)\sigma(\bb{X}_0,s') - \sigma(\bb{x},s) \sigma(\bb{x},s') \right\vert \vert \rho(s,s')\vert b_k(s)b_k(s')ds ds' \nonumber\\
    \leqslant{} & \int_{\Scal^2} K_0^2\left( \vert \sigma(\bb{X}_0, s) \vert \vert \sigma(\bb{X}_0,s') - \sigma(\bb{x},s') \vert + \vert \sigma(\bb{x}, s') \vert \vert \sigma(\bb{X}_0,s) - \sigma(\bb{x},s) \vert \right) \vert \rho(s,s')\vert b_k(s)b_k(s')ds ds'\nonumber\\
    \leqslant{} & o_{\prob}(1) \left[ \int_{\Scal^2} \vert \sigma(\bb{X}_0, s) \vert b_k(s)b_k(s')dsds' + \int_{\Scal^2} \vert \sigma(\bb{x},s')\vert b_k(s)b_k(s')ds ds' \right] \label{eqn:sigma-cont-proof-1}\\
    ={} & o_{\prob}(1)\left[ \left( \int_{\Scal}\sigma^2(\bb{X}_0, s)ds \right)^{1/2} \int_{\Scal}b_k(s)ds + \left( \int_{\Scal}\sigma^2(\bb{x}, s')ds'\right)^{1/2} \int_{\Scal}b_k(s)ds  \right]\label{eqn:sigma-cont-proof-2}
\end{align}
The inequality \eqref{eqn:sigma-cont-proof-1} in the third step follows from noting that $\vert \rho(s,s')\vert \leqslant 1$. The inequality \eqref{eqn:sigma-cont-proof-2} in the fourth step is a simple application of Cauchy-Schwartz inequality and the fact that $\int_{\Scal} b_k^2(s)ds = 1$. Finally, by taking expectation on both sides, and using Assumption~\ref{assum:mean-sigma-bound} on the integrability of $\E(\sigma_k(\bb{X}_0, s))$ as a function of $s$, we note that the above quantity is also asymptotically negligible. Now, we use the relationship given in \eqref{eqn:kernel-convergence} to get
\begin{equation*}
    \E\left[ K_{0}^2 \int_{\Scal^2} \sigma(\bb{X}_{0}, s)\sigma(\bb{X}_{0}, s')\rho(s, s') b_k(s)b_k(s')ds ds'\right]
    = \phi_x(h_n\lambda) \xi_2 \sigma_{kk}(\bb{x}) + o_{\prob}(1),
\end{equation*}
where $\sigma_{k,k}(\bb{x})$ is as defined in \eqref{eqn:mode-reduced-variance-estimate}. Together, we have $\var(U_i) = \phi_x(h_n\lambda) \xi_2 \sigma_{k k}(x) + o_{\prob}(1)$. 

For the covariance term, with $i < i'$, analogus to the above, we obtain
\begin{align*}
    \cov(U_i, U_{i'})
    & = \cov\left( K_{t_i}(Y_{t_i k}^\ast - \mu_k(\bb{x})), K_{t_{i'}}(Y_{t_{i'} k}^\ast - \mu_k(\bb{x})) \right)\\
    & = \E\left[ K_{t_i}K_{t_{i'}} \int_{\Scal^2} \sigma(\bb{X}_{t_i},s)\sigma(\bb{X}_{t_{i'}}, s')\rho(s, s')b_k(s)b_k(s')ds ds' \right] \\
    & \qquad \qquad + \E\left[ K_{t_i}K_{t_{i'}}(\mu_k(\bb{X}_{t_i}) - \mu_k(\bb{x}))(\mu_k(\bb{X}_{t_{i'}}) - \mu_k(\bb{x})) \right]
\end{align*}
Note that, both of the terms are nonzero only when both $\bb{X}_{t_i}$ and $\bb{X}_{t_{i'}}$ lie in the infinite-dimensional ball $B(\bb{x}, h_n\lambda)$. By applying the Lipschitz conditions as before and using the bounds on the joint small ball probabilities described in Section~\ref{sec:kernel-choice}, it is easy to obtain that
\begin{align*}
    \E\left[ K_{t_i}K_{t_{i'}}(\mu_k(\bb{X}_{t_i}) - \mu_k(\bb{x}))(\mu_k(\bb{X}_{t_{i'}}) - \mu_k(\bb{x})) \right] 
    & = \Ocal(\phi_{\bb{x}}^2(h_n\lambda)) \rightarrow 0,\\
    \E\left[ K_{t_i}K_{t_{i'}} \int_{\Scal^2} \sigma(\bb{X}_{t_i},s) \sigma(\bb{X}_{t_{i'}}, s')\rho(s, s')b_k(s)b_k(s')ds ds' \right] 
    & = \E(K_{t_i}K_{t_{i'}}) \sigma_{k,k}(\bb{x}) + o_{\prob}(1)\\
    & = \Ocal(\phi_{\bb{x}}^2(h_n\lambda)) \rightarrow 0,
\end{align*}
under the asymptotic regime considered in the paper. As a result, we have $\cov(U_i, U_{i'}) \rightarrow 0$ as $n \rightarrow \infty$ for any $i \neq i'$. Putting everything back together, we obtain that 
\begin{align*}
    \var(S_n) 
    & = \tilde{v}_n^2 \phi_{\bb{x}}(h_n \lambda)\xi_2 g_n a_n \sigma_{k k}(\bb{x}) + o( \tilde{v}_n^2 g_n a_n^2 \phi_{\bb{x}}^2(h_n \lambda) ) + \Ocal\left( \tilde{v}_n^2 g_n^2 a_n^2 \phi_{\bb{x}}^2(h_n\lambda) \right)\\
    & = \dfrac{n^\beta \xi_2}{n^\alpha \xi_1^2} \sigma_{k k}(\bb{x}) + o\left( \dfrac{n^{2\beta}\phi_{\bb{x}}(h_n\lambda) }{n^\alpha \xi_1^2} \right) + \Ocal\left( \dfrac{n^{2\beta} n^{1-2\alpha} \phi_{\bb{x}}(h_n\lambda) }{ \xi_1^2 } \right)\\
    & = \dfrac{n^{\beta - \alpha} \xi_2}{\xi_1^2} \sigma_{k k}(\bb{x}) + o\left( \dfrac{n^{2\beta - \alpha}\phi_{\bb{x}}(h_n\lambda) }{\xi_1^2} \right) + \Ocal\left( \dfrac{n^{1 - 2\alpha + 2\beta} \phi_{\bb{x}}(h_n\lambda) }{ \xi_1^2 } \right)
\end{align*}
It is now immediate that all the terms are asymptotically negligible due to Assumption~\ref{assum:bandwidth-zero} and since $\alpha < 1$.
\end{proof}

Since the number of terms contained in $R_n$ is even smaller than the number of terms in $S_n$, the contribution of the remainder term $R_n$ is also asymptotically negligible as $n \rightarrow \infty$. Now, for the big-block sum $B_n$, we wish to show that the summands of $B_n$ are asymptotically independent. Then, an application of the central limit theorem as in~\cite{masry2005nonparametric} can be applied to establish the asymptotic normality of the $B_n$. To show this asymptotic independence result, our objective will be to demonstrate that the characteristic function of $B_n$ asymptotically splits into a product of characteristic functions of individual summands $\sum_{i=(j-1)c_n + a_n+1}^{jc_n} U_i$ (to be denoted as $B_{n,j}$ for notational convenience), adjusting for the proper normalization factor $\tilde{v}_n$. In view of this, we obtain the relationship $B_n = \tilde{v}_n \sum_{j=1}^{g_n} B_{n,j}$. 

\begin{lem}\label{lem:big-block-factor}
    Suppose that Assumptions~\ref{assum:mean-sigma-bound}-\ref{assum:bandwidth-zero} hold. Then, as $n \rightarrow \infty$, 
    \begin{equation*}
        \left\vert \E( e^{i t B_n} ) - \prod_{j=1}^{g_n} \E( e^{i t \tilde{v}_n B_{n,j} } )  \right\vert \rightarrow 0,
    \end{equation*}
    where $B_n$ is the big-block as in \eqref{eqn:small-block-big-block-defn} and $B_{n,j} = \sum_{i=(j-1)c_n + a_n+1}^{jc_n} U_i$ for each $j = 1, \dots, g_n$.
\end{lem}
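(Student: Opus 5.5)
Because we have no mixing coefficients, the plan is to replace each big block by an approximation that depends only on innovations within a window, so that blocks separated by small blocks become exactly independent. Then we control the approximation error through the PMC condition~\eqref{assum:gmc-Xt}. Concretely, for each big block $B_{n,j}$ we define $\tilde B_{n,j} = \E(B_{n,j} \mid \Fcal^{(j)})$, where $\Fcal^{(j)}$ is the $\sigma$-field generated by the innovations $Z_l$ with $l > (j-1)c_n + a_n - m_n$ together with the spatial errors $\epsilon_{t_i}(\cdot)$ inside block $j$. Here $m_n = a_n$ is the length of the preceding small block. Since the $Z_l$ are iid and the $\epsilon_t$ are temporally iid and independent of the covariates, the $\tilde B_{n,j}$, $j = 1, \dots, g_n$, are mutually independent. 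Hence $\E e^{it\tilde v_n \sum_j \tilde B_{n,j}} = \prod_j \E e^{it\tilde v_n \tilde B_{n,j}}$ holds exactly.

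It then remains to bound two quantities, using $\abs{e^{ia}-e^{ib}} \leqslant \abs{a-b}$ and the telescoping inequality $\abs{\prod a_j - \prod b_j} \leqslant \sum \abs{a_j - b_j}$ for complex numbers of modulus at most one:
\begin{equation*}
\abs{\E e^{itB_n} - \E e^{it\tilde v_n\sum_j \tilde B_{n,j}}} + \abs{\prod_j \E e^{it\tilde v_n B_{n,j}} - \prod_j \E e^{it\tilde v_n \tilde B_{n,j}}} \leqslant 2\abs{t}\,\tilde v_n \sum_{j=1}^{g_n} \E\abs{B_{n,j} - \tilde B_{n,j}}.
\end{equation*}
Each $B_{n,j}$ is a sum of $b_n$ terms $U_i$. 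The term $U_i$ depends on $\bb{X}_{t_i}$ through $K_{t_i}\mu_k(\bb{X}_{t_i})$ and $K_{t_i}$, and through $K_{t_i}\sigma(\bb{X}_{t_i},\cdot)$ multiplying the errors. Since $\E(\eta_{t_ik} \mid \Fcal_{t_i}) = 0$ and the errors are retained in $\Fcal^{(j)}$, the difference $U_i - \E(U_i \mid \Fcal^{(j)})$ is controlled by Lipschitz functionals of $\bb{X}_{t_i}$. By Assumption~\ref{assum:kernel-mu-lipschitz} these are $f_{\bb{x}}$, $g_{\bb{x}}$ and the kernel alone, the latter Lipschitz with constant $\Ocal(h_n^{-1})$ as in the proof of Corollary~\ref{corr:kernel-covariance-bound}. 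Arguing exactly as in Proposition~\ref{lem:f-Xk-bound}, replacing the innovations older than the lag by independent copies gives $\E\abs{U_i - \E(U_i\mid\Fcal^{(j)})} \leqslant C h_n^{-1}\Delta_2(\ell_i)$, where $\ell_i \geqslant a_n$ is the distance of $t_i$ to the cutoff. Summing over the block and using $\Delta_2(m) = \Ocal(m^{-\tau+1})$, we get $\E\abs{B_{n,j} - \tilde B_{n,j}} \leqslant C h_n^{-1}\sum_{\ell \geqslant a_n}\ell^{-\tau+1} = \Ocal(h_n^{-1} a_n^{-\tau+2})$. We could refine this by using at least $\min(b_n, \cdot)$ terms.

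The total error is therefore $\Ocal\big(\tilde v_n g_n h_n^{-1} a_n^{2-\tau}\big)$, with $\tilde v_n \asymp (n\phi_{\bb{x}}(h_n\lambda))^{-1/2}$, $g_n \asymp n^{1-\alpha}$ and $a_n \asymp n^{\beta}$. This becomes
\begin{equation*}
\Ocal\big( n^{1/2-\alpha}\, h_n^{-1}\phi_{\bb{x}}^{-1/2}(h_n\lambda)\, n^{-\beta(\tau-2)} \big).
\end{equation*}
The main obstacle is showing this vanishes. We would use $nh_n^2\phi_{\bb{x}}^2 \to \infty$ from Assumption~\ref{assum:bandwidth-decay} to bound $h_n^{-1}\phi_{\bb{x}}^{-1/2} \leqslant h_n^{-1}\phi_{\bb{x}}^{-1} = o(n^{1/2})$. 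This leaves the requirement $1 - \alpha - \beta(\tau-2) \leqslant 0$, which we would try to extract from $\beta > (\tau-1)^{-1}$ in Assumption~\ref{assum:bandwidth-zero}. If this is too weak, we fall back on a sharper per-term bound that avoids the $h_n^{-1}$ loss. We would use the cross-moment bound~\eqref{eqn:kernel-cross-bound} and the small-ball factor to gain a $\phi_{\bb{x}}(h_n\lambda)$ factor, that is $\E\abs{U_i - \E(U_i\mid\Fcal^{(j)})} \lesssim \min\{\phi_{\bb{x}}(h_n\lambda), h_n^{-1}\Delta_2(\ell_i)\}$, and then balance the two regimes in the sum over $\ell$.
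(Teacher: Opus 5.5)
Your proof is correct in outline and takes a genuinely different route from the paper's. It is at least as rigorous as the paper's proof, subject to the Lipschitz caveat in the second paragraph.

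\textbf{Comparison of the two routes.} The paper never builds independent blocks. It peels off the last big block by conditioning on $\Fcal_{t_{(g_n-1)c_n}}$ and bounds $\abs{\E(e^{it\tilde v_nB_{n,g_n}}\mid\Fcal_{t_{(g_n-1)c_n}})-\E e^{it\tilde v_nB_{n,g_n}}}$ by $\Ocal(b_n\Delta_2(a_n))$, using the Lipschitz property of sine and cosine together with Proposition~\ref{lem:f-Xk-bound}. Iterating this $g_n$ times gives $\Ocal(g_nb_n\Delta_2(a_n))=\Ocal(n^{1-\beta(\tau-1)})$. This is a Volkonskii--Rozanov-type recursion with the functional dependence measure in place of mixing coefficients.

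You instead replace each $B_{n,j}$ by a coupled version $\tilde B_{n,j}$. By causality, $\tilde B_{n,j}$ depends only on the $Z_l$ with $(j-1)c_n<l\leqslant jc_n$ and on the block-$j$ errors, so the windows are disjoint even though your $\Fcal^{(j)}$ has no upper cutoff. The approximants therefore factorize exactly, and you pay only an $L^1$ error through $\abs{e^{ia}-e^{ib}}\leqslant\abs{a-b}$ and telescoping of products.

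What your route buys:
\begin{itemize}
\item It avoids the paper's passage from $\cos(t\tilde v_nB_{n,g_n})$ to $b_n$ times a single-summand bound. Cosine is not additive, so that step is not valid as written.
\item It tracks the factor $\tilde v_n$ and the $n$-dependence of the kernel's Lipschitz constant, which the paper leaves implicit. You need Assumption~\ref{assum:bandwidth-decay} exactly to make $\tilde v_nh_n^{-1}=o(1)$.
\item Summing $\Delta_2(\ell)$ over the lags inside a block gives $\Ocal(a_n^{2-\tau})$ per block, sharper than $b_n\Delta_2(a_n)$.
\end{itemize}

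What it costs:
\begin{itemize}
\item It uses the Bernoulli-shift representation directly, not just Proposition~\ref{lem:f-Xk-bound}.
\item It needs $\tau>2$ for summability, which is assumed.
\item For the $\tilde B_{n,j}$ to be exactly independent, the error field must be independent of the whole innovation sequence, or be absorbed into it. The paper explicitly assumes only independence from past covariates.
\end{itemize}

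\textbf{Two remarks.} First, your rate closes without the fallback. Since $\alpha>\beta>(\tau-1)^{-1}$, you have $\alpha+\beta(\tau-2)>\beta(\tau-1)>1$, so the total error is $o(n^{1-\alpha-\beta(\tau-2)})=o(1)$.

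Second, the proof of Corollary~\ref{corr:kernel-covariance-bound} does not justify the bound $\E\abs{U_i-\E(U_i\mid\Fcal^{(j)})}\leqslant Ch_n^{-1}\Delta_2(\ell_i)$ for the bare-kernel term $\mu_k(\bb{x})K_{t_i}$. That argument only controls pairs at distance larger than $\delta$. A type-I kernel must jump at $\lambda$, so $\bb{y}\mapsto K(\norm{\bb{H}_n^{-1}(\bb{y}-\bb{x})})$ is not Lipschitz. Likewise, Assumption~\ref{assum:kernel-mu-lipschitz} is stated componentwise ($\Kcal\,\sigma_k$ for fixed $k$). The $\eta_{t_ik}$ part of $U_i$ involves all of $\sigma(\cdot,s)$, so you would also need summability of the component Lipschitz constants. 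The paper's proof relies on the same kind of Lipschitz property and simply omits the bare-kernel term, so this is an inherited weakness rather than a defect of your route. You should still state the needed condition explicitly as an assumption in the spirit of Assumption~\ref{assum:kernel-mu-lipschitz}, or handle the indicator-type kernel with a boundary small-ball estimate, rather than attribute it to Corollary~\ref{corr:kernel-covariance-bound}.
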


\begin{proof}
It follows that
\begin{align*}
    & \left\vert \E( e^{i t B_n} ) - \prod_{j=1}^{g_n} \E( e^{i t \tilde{v}_n B_{n,j} } )  \right\vert\\
    = {} & \left\vert \E( e^{i t B_{g_n c_n}} ) - \prod_{j=1}^{g_n} \E( e^{i t \tilde{v}_n B_{n,j} } ) \right\vert\\
    \leqslant {} & \left\vert \E( e^{i t B_{g_n c_n}} ) - \E( e^{i t B_{(g_n-1) c_n}} )\E\left( e^{it \tilde{v}_n B_{n,g_n} }\right) \right\vert + \left\vert \E( e^{i t B_{(g_n-1) c_n}} )\E\left( e^{it \tilde{v}_n B_{n,g_n} }\right) - \prod_{j=1}^{g_n} \E( e^{i t \tilde{v}_n B_{n,j} } ) \right\vert\\
    = {} & \left\vert \E( e^{i t B_{(g_n-1) c_n}} )\E\left( e^{it \tilde{v}_n B_{n,g_n} } \mid \Fcal_{t_{(g_n - 1)c_n}} \right) - \E( e^{i t B_{(g_n-1) c_n}} )\E\left( e^{it \tilde{v}_n B_{n,g_n} }\right) \right\vert + \\
    & \qquad \left\vert \E\left( e^{it \tilde{v}_n B_{n,g_n} }\right) \right\vert \left\vert \E( e^{i t B_{(g_n-1) c_n}} ) - \prod_{j=1}^{g_n - 1} \E( e^{i t \tilde{v}_n B_{n,j} } ) \right\vert\\
    \leqslant{} & \left\vert \E\left( e^{it \tilde{v}_n B_{n,g_n} } \mid \Fcal_{t_{(g_n - 1)c_n}} \right) - \E\left( e^{it \tilde{v}_n B_{n,g_n} }\right) \right\vert + \left\vert \E( e^{i t B_{(g_n-1) c_n}} ) - \prod_{j=1}^{g_n - 1} \E( e^{i t \tilde{v}_n B_{n,j} } ) \right\vert
\end{align*}
where in the last inequality we use the fact that $\vert \E(e^{itX})\vert \leqslant \E(\vert e^{itX}\vert) = 1$ for any complex-valued random variable $X$. Note that, this creates an inductive inequality on the factorization provided, namely we provide a bound on the factorization error up to the term $g_nc_n$ using the factorization error till the term $(g_n - 1)c_n$, plus an additional term which takes care of the last term $B_{n, g_n}$ based on a conditional expectation.

To deal with the first term of the right-hand side of the above inequality, consider the difference
\begin{multline}
    \left\vert \E\left( \cos\left(t \tilde{v}_n B_{n, g_n} \mid \Fcal_{t_{(g_n-1)c_n}} \right) \right) - \E\left( \cos\left( t\tilde{v}_n B_{n, g_n} \right) \right) \right\vert\\
    = b_n \sup_{(g_n-1)c_n + a_n + 1 \leqslant i \leqslant g_n c_n} \left\vert \E(\cos(t \tilde{v}_n U_i) \mid \Fcal_{t_{(g_n-1)c_n}} ) - \E(\cos(t \tilde{v}_n U_i)) \right\vert.
    \label{eqn:cosine-diff-proof-1}
\end{multline}
Now, $\cos(x)$ is Lipschitz with constant $1$ and the function $f(y) = K(\Vert \bb{H}_n^{-1}(y-x)\Vert) \mu_k(y)$ is also Lipschitz due to the Assumption~\ref{assum:kernel-mu-lipschitz}. Also, $\E(\sigma(X_{t_i},s)\epsilon_{t_i}(s)) = 0$ for all $s \in \Scal$ due to independence of $\epsilon_{t_i}$ and $\Fcal_{t_i}$ and $\tilde{v}_n = o(1)$. By combining all of these along with Proposition~\ref{lem:f-Xk-bound}, it follows that the difference in the cosine term as in \eqref{eqn:cosine-diff-proof-1} is $\Ocal(\Delta_2(a_n))$ for all $t \in [-M', M']$ for some arbitrarily large but fixed constant $M' > 0$. 
A similar derivation holds for bounding the discrepancy through the sine function as well, i.e.,
\begin{equation}
    \left\vert \E\left( \sin\left(t \tilde{v}_n B_{n, g_n} \mid \Fcal_{t_{(g_n-1)c_n}} \right) \right) - \E\left( \sin\left( t\tilde{v}_n B_{n, g_n} \right) \right) \right\vert = \Ocal(\Delta_2(a_n)).
    \label{eqn:sine-diff-proof-1}
\end{equation}
In particular, by combining the inequalities from \eqref{eqn:cosine-diff-proof-1} and \eqref{eqn:sine-diff-proof-1}, we obtain that
\begin{equation*}
    \left\vert \E\left( e^{it \tilde{v}_n B_{n,g_n} } \mid \Fcal_{t_{(g_n - 1)c_n}} \right) - \E\left( e^{it \tilde{v}_n B_{n,g_n} }\right) \right\vert
    = \Ocal(b_n \Delta_2(a_n)),
\end{equation*}
for any $t \in [-M, M]$, some interval around $0$. As a result, by repeatedly using this bound, we obtain
\begin{equation*}
    \left\vert \E( e^{i t B_n} ) - \prod_{j=1}^{r} \E( e^{i t \tilde{v}_n B_{n, j}} )  \right\vert
    \leqslant \Ocal(b_n\Delta_2(a_n)) + \left\vert \E( e^{it B_{(g_n-1)c_n }} ) - \prod_{j=1}^{g_n-1} \E( e^{i t \tilde{v}_n B_{n,j} } ) \right\vert = \Ocal(g_n b_n \Delta_2(a_n)).
\end{equation*}
Since $g_n = [n/c_n]$ and $b_n / c_n \rightarrow 1$, we have $\Ocal(g_n b_n \Delta_2(a_n)) = \Ocal(n a_n^{-\tau+1}) = \Ocal(n^{1 - \beta(\tau - 1)})$. Again, due to Assumption~\ref{assum:bandwidth-zero}, this difference tends to $0$ asymptotically as $n \to \infty$. 
\end{proof}

Lemma~\ref{lem:big-block-factor} now establishes that the summands of the big blocks are asymptotically independent as $n \rightarrow \infty$. As demonstrated before in the proof of Lemma~\ref{lem:small-block-zero}, we have $\E(B_n) = 0$ and 
\begin{align*}
    \var(B_n) 
    & = \tilde{v}_n^{2} g_n b_n \var(U_i) +  \tilde{v}_n^{2} g_n b_n^2 \cov(U_i, U_{i'}) + o(1)\\
    & = \dfrac{\xi_2}{\xi_1^2}\sigma_{k k}(x) + \Ocal\left( \dfrac{n^{\alpha} \phi_x(h_n\lambda) }{\xi_1^2} \right)
\end{align*}
Note that, here we use the asymptotic independence between $U_i$ and $U_{i'}$ for $i \neq i'$ established in the proof of Lemma~\ref{lem:small-block-zero} to calculate the covariance term. The remainder part can be made small using Assumption~\ref{assum:bandwidth-zero}. Therefore, an application of the Central Limit Theorem as in~\cite{masry2005nonparametric} shows that the normalized sum of the big blocks, i.e., $B_n$ is asymptotically normal with mean zero and variance $\xi_2\xi_1^{-2} \sigma_{k k}(\bb{x})$. 

Finally, combining this result with Lemma~\ref{lem:small-block-zero}, we obtain the asymptotic distribution of the normalized estimate of the mean component, completing the proof.

\subsection{Proof of Theorem~\ref{thm:sigma-pointwise}}\label{proof-thm:sigma-pointwise}

Let us fix $\bb{x} \in \chi$ and denote $K_{t_i} = K(\Vert \bb{H}_n^{-1}(\bb{X}_{t_i} - \bb{x})\Vert)$ and $w_i = K_{t_i}/\sum_{i} K_{t_i}$. Then we can decompose the estimate $\widehat{\sigma}_{k l}(\bb{x})$ as 
\begin{align*}
    \widehat{\sigma}_{k l}(\bb{x}) 
    & = \sum_{i} w_i (\widehat{Y}^\ast_{t_i k} - \widehat{\mu}_k(\bb{x}))(\widehat{Y}^\ast_{t_i l} - \widehat{\mu}_l(\bb{x}))\\
    & = \sum_i w_i \left( \widehat{Y}^\ast_{t_i k} - Y_{t_i k}^\ast - \widehat{\mu}_k(\bb{x}) + \mu_k(\bb{x}) + \eta_{t_i k} \right)\left( \widehat{Y}^\ast_{t_i l} - Y_{t_i l}^\ast - \widehat{\mu}_l(\bb{x}) + \mu_l(\bb{x}) + \eta_{t_i l} \right).
\end{align*}
First, we apply Proposition~\ref{lem:Yk-convergence} and an application of Cauchy-Schwartz inequality to conclude that for any $k, l \in \Z^+$, we have
\begin{equation*}
    \E\left(\sum_i w_i \left\vert \widehat{Y}^\ast_{t_i k} - Y_{t_i k}^\ast \right\vert \left\vert \widehat{Y}^\ast_{t_i, l} - Y_{t_i l}^\ast \right\vert \right)
    \leqslant \left( \sup_{t_i} \E\left\vert \widehat{Y}^\ast_{t_i k} - Y_{t_i k}^\ast \right\vert^2 \right)^{1/2} \left( \sup_{t_i} \E\left\vert \widehat{Y}^\ast_{t_i l} - Y_{t_i l}^\ast \right\vert^2 \right)^{1/2} = o_{\prob}(1).
\end{equation*}
Due to the asymptotic normality of the estimator of the mean as in Theorem~\ref{thm:asymptotic-normality}, it follows that for any fixed $k \in \Z^+$ and fixed $\bb{x} \in \chi$,
\begin{equation*}
    \E\vert \mu_k(x) - \widehat{\mu}_k(x)\vert^2 = \E^2\vert \mu_k(x) - \widehat{\mu}_k(x)\vert + \var(\widehat{\mu}_k(x))
    = o_{\prob}(1) + \Ocal((n\phi_x(h_n\lambda))^{-1/2}) = o_{\prob}(1),
\end{equation*}
as $n \rightarrow \infty$, where the last equality follows from the Assumption~\ref{assum:bandwidth-decay} on the choice of bandwidth. Additionally, Assumption~\ref{assum:bounded-errors} implies that $\eta_{t_i k}$ is uniformly bounded over $t_i \in [0, 1]$ by a deterministic constant free of $n$. Therefore, as $n \rightarrow \infty$, we obtain
\begin{align*}
    & \E\vert \widehat{\sigma}_{k l}(\bb{x}) - \sigma_{k,l}(\bb{x})\vert\\
    = {} & o_{\prob}(1) + \E\left\vert \sum_i w_i \eta_{t_i k}\eta_{t_i l} - \sigma_{k l}(\bb{x}) \right\vert\\
    = {} & o_{\prob}(1) + \dfrac{1}{\E(K_0)}\E\left\vert \dfrac{1}{n}\sum_i K_{t_i} (\eta_{t_i k}\eta_{t_i l} - \sigma_{k l}(\bb{x})) \right\vert \\
    \leqslant {} & o_{\prob}(1) + \dfrac{1}{\E(K_0)} \E\left\vert \sum_i \dfrac{K_{t_i}}{n}(\eta_{t_i k}\eta_{t_i l} - \E(\eta_{t_i k}\eta_{t_i l}\mid \Fcal_{t_i})) \right\vert + \dfrac{1}{\E(K_0)}\E\left\vert \sum_i \dfrac{K_{t_i}}{n} (\E(\eta_{t_i k}\eta_{t_i l}\mid \Fcal_{t_i}) - \sigma_{k,l}(\bb{x}) ) \right\vert\\
    \leqslant {} & o_{\prob}(1) + \dfrac{1}{\E(K_0)} \E\left\vert \dfrac{1}{n}\sum_i V_i \right\vert + \dfrac{1}{\E(K_0)}\E\left\vert \dfrac{1}{n}\sum_i K_{t_i} (\sigma_{k l}(\bb{X}_{t_i}) - \sigma_{k l}(\bb{x})) \right\vert,
\end{align*}
where 
\begin{equation*}
    V_i = K_{t_i}\eta_{t_i k}\eta_{t_i,l} - \E(K_{t_i}\eta_{t_i k}\eta_{t_i l} \mid \Fcal_{t_i}),
\end{equation*}
and we make use of the stationarity of $\bb{X}_t$ and the form of covariance given in \eqref{eqn:model-reduced-variance-cond}. We deal with both of these terms separately.

Since $\E(V_i \mid \Fcal_{t_i}) = 0$ for all $i = 1, \dots, n$, we can apply the same technique shown in the proof of Lemma~\ref{lem:muk2-conv} based on the $L^1$-type convergence result by~\cite{andrews1988lln}, to show that the first term of the bound is $o_{\prob}(1)$ as $n \rightarrow \infty$.

For the second term of the bound, choose any $\epsilon > 0$. By the existence of the basis decomposition as in \eqref{eqn:mean-sigma-basis}, we note that there exists $U_{\epsilon,1}$ such that for the fixed $\bb{x} \in \chi$, $\sum_{v=1}^{U_{\epsilon, 1}} \vert \sigma_v(\bb{x})\vert < \epsilon$. Additionally, because of the decomposition in \eqref{eqn:sigma-k-l-defn}, we must have some $U_{\epsilon,2}$ such that the tail series of the variance decomposition satisfy
\begin{equation*}
    \sum_{u,v: \max(u, v) > U_{\epsilon, 2}} c_{u,v}(k,l) \E(\sigma_u(\bb{X}_{0})\sigma_v(\bb{X}_0)) < \epsilon.
\end{equation*}
Therefore, by choosing $U_\epsilon = \max(U_{\epsilon, 1}, U_{\epsilon, 2})$ and using the boundedness of the kernel function as in~\eqref{eqn:kernel-bounds}, we can write the second term of the bound as,
\begin{align*}
    & \E\left\vert \dfrac{1}{n}\sum_{i} K_{t_i} (\sigma_{k l}(\bb{X}_{t_i}) - \sigma_{k l}(\bb{x})) \right\vert \\
    ={} & \E\left\vert \dfrac{1}{n}\sum_i K_{t_i} \sum_{u,v}c_{u,v}(k,l) (\sigma_{u}(\bb{X}_{t_i}) \sigma_v(\bb{X}_{t_i}) - \sigma_u(\bb{x})\sigma_v(\bb{x})) \right\vert \\
    ={} & \Ocal(\epsilon) + \E\left\vert \dfrac{1}{n}\sum_{i}K_{t_i} \sum_{u=1}^{U_\epsilon}\sum_{v=1}^{U_\epsilon} c_{u,v}(k,l) (\sigma_{u}(\bb{X}_{t_i}) \sigma_v(\bb{X}_{t_i}) - \sigma_u(\bb{x})\sigma_v(\bb{x})) \right\vert\\
    \leqslant {} & \Ocal(\epsilon) + \E\left\vert \dfrac{1}{n}\sum_{u=1}^{U_\epsilon}\sum_{v=1}^{U_\epsilon} c_{u,v}(k,l) \sum_{i=1}^n K_{t_i}\sigma_{u}(\bb{X}_{t_i}) (\sigma_v(\bb{X}_{t_i}) - \sigma_v(\bb{x})) \right\vert \\
    & \qquad + \E\left\vert \dfrac{1}{n} \sum_{u=1}^{U_\epsilon}\sum_{v=1}^{U_\epsilon} c_{u,v}(k,l) \sum_{i=1}^n K_{t_i}\sigma_{v}(\bb{x}) (\sigma_u(\bb{X}_{t_i}) - \sigma_u(\bb{x})) \right\vert \\
    ={} & \Ocal(\epsilon) + 2U_\epsilon^2 C (h_n \lambda),
\end{align*}
for some constant $C \in (0, \infty)$. Here, we use the fact that $c_{u,v}(k,l) \leqslant 1$ and the function $K_{t_i}\sigma_u(\bb{x})$ is Lipschitz as assured by Assumption~\ref{assum:kernel-mu-lipschitz}. By choosing sufficiently large $n$ (depending on $\epsilon$), as the bandwidth $h_n \rightarrow \infty$, we can ensure that the second term $2U_\epsilon^2 C (h_n \lambda) \leqslant \epsilon$. This makes the entire different $\Ocal(\epsilon)$. Since $\epsilon$ is arbitrary, this completes the proof.

\subsection{Proof of Theorem~\ref{thm:simult-conf-1}}\label{proof-thm:simult-conf-1}

Due to the specific choice of the $\chi_n$, for any $i = 1, 2, \dots, n$, we have either $\Vert{\bb{D}^{-1}(\bb{X}_{t_i} - \bb{x})}\Vert > h_n\lambda$ or $\Vert{\bb{D}^{-1}(\bb{X}_{t_i} - \bb{x}')}\Vert > h_n\lambda$. As the type-I kernel function has bounded support, it follows that 
\begin{equation*}
    \mathcal{K}(\bb{H}_n^{-1}(\bb{X}_{t_i} - \bb{x}) )\mathcal{K}(\bb{H}_n^{-1}(\bb{X}_{t_i} - \bb{x}') ) = 0, 
\end{equation*}
\noindent for all $i = 1, 2, \dots, n$. By the form of the estimates of the mean components as in \eqref{eqn:mean-estimator}, it follows that the covariance between $\widehat{\mu}_k(\bb{x})$ and $\widehat{\mu}_l(\bb{x}')$ is zero for any $k, l \in \Z^+$ and $\bb{x} \neq \bb{x}' \in \chi_n$. The same conclusion holds for the variance estimates $\widehat{\sigma}_{k l}(\bb{x})$ and $\widehat{\sigma}_{k' l'}(\bb{x}')$ as well for any $k, k', l, l' \in \Z^+$. Therefore, by using the same principle of extreme value theory as in~\cite{zhao2008confidence}, we can obtain the asymptotic probability
\begin{multline}
    \lim_{n \rightarrow \infty}\mathbb{P}\left( \sup_{\bb{x} \in \chi_n} \dfrac{\xi_1 \sqrt{n\phi_{\bb{x}}(h_n\lambda)}}{\sqrt{\xi_2 Q_{K}(\bb{x},s) }} \left\vert \widehat{\mu}_{1:K}(\bb{x},s) - \mu_{1:K}(\bb{x},s) - b^\ast_{n, 1:K}(\bb{x}, s) \right\vert < B_{m_n}(z) \right) = e^{-2e^{-z}},
    \label{eqn:conf-simult-proof-1}
\end{multline}
for any fixed $z > 0$, $K \in \Z^+$ and $s \in \Scal$. This produces a probabilistic bound for the sum of first $K$ components of the mean function. Now, for the given $\delta > 0$, consider an $\delta$-net $\mathcal{N}(\delta)$ of the set $\chi_n$, i.e., for every $\bb{x} \in \chi_n$, there exists a point $\bb{x}_0 \in \mathcal{N}(\delta)$ such that $\Vert{\bb{D}^{-1}(\bb{x} - \bb{x}_0)}\Vert < \delta$. Since $\chi_n$ is bounded and $\delta$ is free of $n$, it follows that $\vert \mathcal{N}(\delta)\vert$ is finite and free of $n$. Also, 
\begin{align*}
    \left\vert \sum_{k=(K+1)}^\infty \mu_k(\bb{x})b_k(s)\right\vert 
    & \leqslant \left\vert \sum_{k=K+1}^\infty (\mu_k(\bb{x}) - \mu_k(\bb{x}_0))b_k(s) \right\vert + \left\vert \sum_{k=K+1}^\infty \mu_k(\bb{x}_0)b_k(s) \right\vert \\
    & \leqslant b_\infty(s) \sup_{\bb{x} \in N_\delta(\bb{x}_0) } \left\vert \sum_{k=K+1}^\infty (\mu_k(\bb{x}) - \mu_k(\bb{x}_0)) \right\vert + b_\infty(s) \sum_{k=K+1}^\infty \vert \mu_k(\bb{x}_0)\vert.
\end{align*}
Since $\sum_{k=1}^\infty \mu_k(\bb{x})$ is uniformly convergent for all $\bb{x} \in N_\delta(\bb{x}_0)$, it follows that for any given $\epsilon > 0$, there exists $K_\epsilon(\bb{x}_0)$ such that the above bound is less than $\epsilon b_\infty(s)$. Taking supremum over both sides yields,
\begin{equation*} 
    \sup_{\bb{x} \in \chi_n} \left\vert \sum_{k=(K+1)}^\infty \mu_k(\bb{x})b_k(s)\right\vert \leqslant \sup_{\bb{x}_0 \in \mathcal{N}(\delta)} b_\infty(s) \left[ \sup_{\bb{x} \in N_\delta(\bb{x}_0) } \left\vert \sum_{k=K+1}^\infty (\mu_k(\bb{x}) - \mu_k(\bb{x}_0)) \right\vert + \sum_{k=K+1}^\infty \vert \mu_k(\bb{x}_0)\vert \right].
\end{equation*}
Since $\mathcal{N}(\delta)$ has finitely many elements, taking $K_\epsilon =  \max_{\bb{x}_0 \in \mathcal{N}(\delta)} K_{\epsilon}(\bb{x}_0)$ we can ensure that the right-hand side of the above quantity is less than $\epsilon b_\infty(s)$.

For ease of notation, let us denote 
\begin{align*}
    \alpha_{n,K} =  & \dfrac{\xi_1 \sqrt{n\phi_{\bb{x}}(h_n\lambda)}}{\sqrt{\xi_2 Q_{K}(\bb{x},s) }}, \\
    \beta_{n,K} =  & \widehat{\mu}_{1:K}(\bb{x},s) - \mu_{1:K}(\bb{x},s) - b^\ast_{n, 1:K}(\bb{x}, s), \\
    \gamma_{n,K} =  & \widehat{\mu}_{1:K}(\bb{x},s) - \mu(\bb{x},s) - b^\ast_{n, 1:K}(\bb{x}, s) = \beta_{n,K} - \sum_{k=(K+1)}^\infty \mu_k(\bb{x})b_k(s).
\end{align*}
By an application of triangle inequality, it follows that
\begin{equation*}
    \alpha_{n,K_\epsilon}(\vert \beta_{n,K_\epsilon}\vert - \epsilon b_\infty(s) ) \leqslant \alpha_{n,K_\epsilon}\vert \gamma_{n,K_\epsilon}\vert \leqslant  \alpha_{n,K_\epsilon}(\vert \beta_{n,K_\epsilon}\vert + \epsilon b_\infty(s) ).
\end{equation*}
Combining this with \eqref{eqn:conf-simult-proof-1} now completes the proof.

\section{Real-life example: Analyzing Soccer data}

As another interesting illustration, we apply our proposed methodology to analyze the shot accuracy patterns in the English Premier League (EPL) across multiple seasons. The dataset is obtained from the \texttt{worldfootballR} package \citep{worldfootballR}. In a soccer match, various players from both teams take shots from different positions on the field, and as mentioned in \Cref{sec:intro}, xG quantifies the quality of each attempt based on the shot characteristics. The reader is referred to \cite{anzer2021goal} for a detailed exposition on this metric. However, it is important to recognize that xG does not explicitly account for team-specific or player-specific effect. We undertake a small-scaled analysis to uncover how this measure varies spatially across the field and to identify systematic effects associated with different teams and seasons. Specifically, we analyze the attacking patterns of Arsenal, Liverpool and Manchester City for three different seasons -- 2017/18, 2021/22, and 2024/25. These three teams are chosen for specific reasons: since 2017/18, only Manchester City and Liverpool have managed to win the EPL whereas during the same period Arsenal have been on a path of steady growth and are at the top of the league table in the ongoing season. In the three seasons we consider, Manchester City were the winners in the first two and finished third in the last one. Liverpool won the league in 2024/25 and finished fourth and second in the other two seasons respectively. Arsenal, on the other hand, finished in the sixth, fifth and second position in the league during these three seasons.

For modeling purposes, the response variable $Y_t(s)$ is taken to be the logit-transformed xG for a shot taken from location $s$ in the $t^{th}$ match. The covariate vector $\bb{X}_t$ is defined as team-specific profile at the $t^{th}$ match ($t$ can be 1 to 38 during a season), including the proportion of matches already played in the season, win–loss ratios, average goals scored and conceded, mean number of shots, and proportion of shots on target, among others. Figure~\ref{fig:football} displays the estimated mean xG surface -- the sigmoid function applied on $\hat{\mu}(\bb{X}_t, s)$ -- for the three representative teams in the three different seasons as mentioned. For illustration, these estimates are obtained under the setting where the covariates are fixed at an average level. 

\begin{figure}[!ht]
    \centering
    \includegraphics[width=\linewidth]{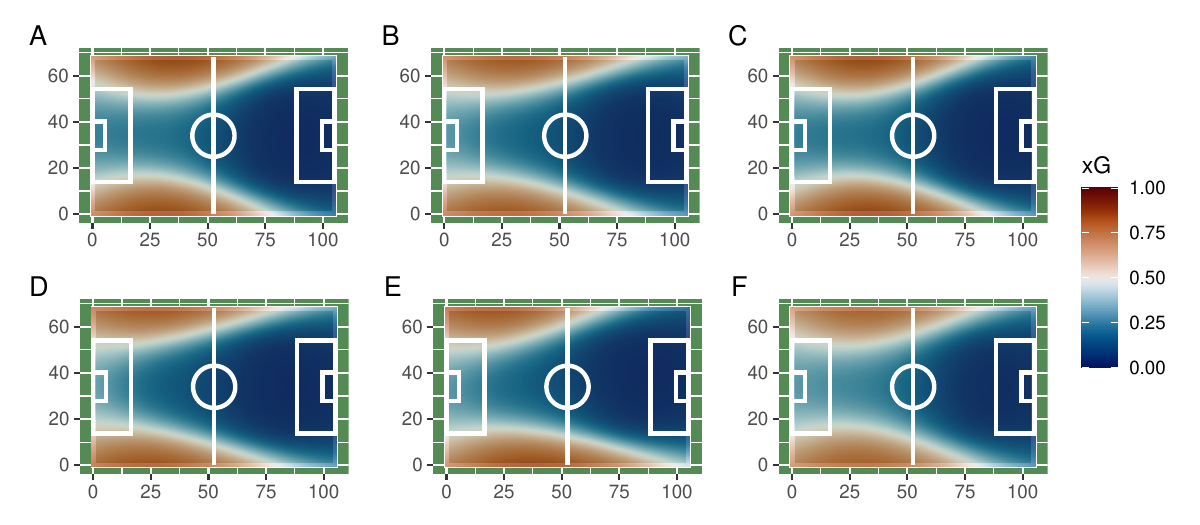}
    \caption{Estimated Spatial Surface for expected goals (xG) for Arsenal, Liverpool, and Manchester City in different seasons, when playing against an equally strong team.}
    \label{fig:football}
\end{figure}

Typically, we observe that the shots are most effective from central locations close to goal, particularly within the penalty area, and this canonical spatial structure is present across all estimated surfaces, serving as a basic validation that the fitted mean function aligns with domain expectations. The key insight is in the team and season-specific deviations from this baseline geometry. For Arsenal, the last season's estimated surface exhibits a comparatively broader region of moderate expected xG outside the penalty box, which is consistent with a shift toward taking (or creating) shots from longer range and wider areas. Although this should not be interpreted as claiming that long shots become high-probability outcomes, we find it imperative to highlight that Arsenal's ranking in the league table improved in 2025. Statistically, our model suggests a relative increase in the conditional expected shot quality associated with those locations, once the match-level context $\bb{X}_t$ is taken into account. 

For Liverpool, we notice a more concentrated pattern of xG in the central locations inside the box. It suggests that the chances created by Liverpool were more impactful in the last season, potentially becoming critical in helping them win the league. One must further observe that it is qualitatively similar to Manchester City's surfaces from the first two seasons under consideration, where they emerged victorious in the end. Generally, it suggests a stable attacking mechanism that consistently produces high-quality chances from the most dangerous zones. In contrast, Manchester City's estimated xG during 2024/25 was less pronounced and can directly be connected to them failing to finish in top two. 

Overall, the analysis summarizes how the spatial distribution of shot quality varies across teams and seasons: while all teams exhibit the expected concentration of high xG near goal, the extent and spread of moderate-to-high xG regions differ in ways that are consistent with changes in performance across seasons. These patterns provide a compact, interpretable representation of team-specific attacking profiles. With this context in mind, a further advantage of our proposed framework is that these comparisons are not purely descriptive heatmaps: they are conditional spatial surfaces $\hat{\mu}(\bb{X}_t,s)$, so differences across teams or seasons can be interpreted as systematic effects beyond what is explained by observable team-level covariates. Finally, the results are obtained by fixing equally strong opponent for both the teams. This setting should be read as a controlled scenario in which opponent-related components of $\bb{X}_t$ are held at comparable levels. Consequently, observed differences in the surfaces are most naturally attributed to team-specific spatial shot-generation profiles rather than imbalances in opponent strength.

\section{Simulation Studies}\label{sec:simulation}

To empirically verify the results on the behaviour of the proposed estimator, we consider a simple numerical exercise as follows. Let $n$ be the number of timepoints and $p^2$ be the number of spatial locations at each timepoint. We pick regularly spaced time and location intervals, namely $\Tcal_n = \{ 0, 1/(n-1), \dots, 1 \}$ and a regularly spaced grid $\Scal$ on $[0, 1]^2$ given by the points $\{ (i/(p-1), j/(p-1)) : i, j \in 0, 1, \dots, (p-1) \}$. Since, $\Scal \subset \R^2$, we pick a set of polynomial basis functions for the Hilbert space given by
\begin{align*}
    b_1(s) & = 1, &
    b_2(s) & = \sqrt{12}\left( s_1 - \frac{1}{2} \right), &
    b_3(s) & = \sqrt{12}\left( s_2 - \frac{1}{2} \right)\\
    b_4(s) & = 6\sqrt{5}\left( s_1^2 - s_1 + \frac{1}{6} \right), &
    b_5(s) & = 6\sqrt{5}\left( s_2^2 - s_2 + \frac{1}{6} \right), &
    b_6(s) & = 12\left( s_1 - \frac{1}{2} \right)\left( s_2 - \frac{1}{2} \right),
\end{align*}
where $s = (s_1, s_2) \in \Scal$. The above set of basis functions is derived from a shifted version of the Legendre polynomials upto degree $2$ on $[0, 1]^2$. 

First, we generate the covariate $X_t$ based on an autoregressive model
\begin{equation*}
    X_t = 1 + 0.5X_{t-1} + e_t, \ e_t \sim N(0, 0.1^2).
\end{equation*}
Then, we generate the response variable as
\begin{equation*}
    Y_t(s) = X_t(s_1 + s_2) + \epsilon_{t}(s), (\epsilon_t(s_1), \dots, \epsilon_t(s_{p^2})) \sim N\left(0, 0.1^2 ((e^{-\Vert s_i - s_j\Vert^2 }))_{i,j=1}^{p^2}  \right),
\end{equation*}
so that $\mu(X_t, s) = X_t(s_1 + s_2)$ and $\sigma(X_t, s) = 1$. It is clear that, the provided truncated basis functions can exactly express the mean and the covariance terms as a linear combination without requiring the higher order terms. Figure~\ref{fig:sim-sample} showcases one such sample realization.

\begin{figure}[htbp]
    \centering
    \includegraphics[width=0.8\textwidth]{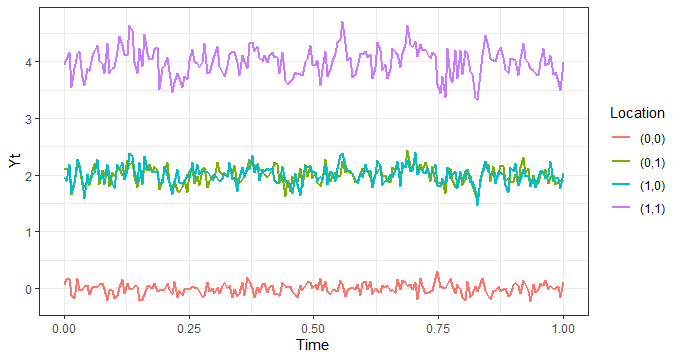}
    \caption{A sample realization of the simulated data for $n = 200$ and $p = 2$ (i.e., $4$ locations)}
    \label{fig:sim-sample}
\end{figure}

After we simulate the data from the above model with $n = 100$ and $p = 15$, we use the proposed methodologies to predict the mean function $\hat{\mu}$, and as a result, obtain prediction for $\widehat{Y}_{t}(s)$ for the last $10$ timepoints (among those $n = 100$) for all locations using the first $90$ timepoints as observations. About the choice of the covariate, we make three different situations as follows:
\begin{enumerate}
    \item[(S1)] At timepoint $t$, the original covariate $X_t$ is known, i.e., we can estimate $\widehat{Y}_t(s)$ by $\widehat{\mu}(X_t, s)$.
    \item[(S2)] At timepoint $t$, the original covariate $X_t$ is not known, but only its lagged value $X_{t-1}$ is known. Since, $X_t = 1 + 0.5X_{t-1} + e_t$, it follows that
    \begin{equation*}
        Y_t(s) = (s_1 + s_2) + 0.5X_{t-1}(s_1 + s_2) + \epsilon_t(s) + (s_1+s_2)e_t,
    \end{equation*}
    which is also covered by the proposed model in \eqref{eqn:model}.
    \item[(S3)] At timepoint $t$, only the lagged value of response $Y_{t-1}$ is known. Note that, the average of $Y_{t}(s)$ over the spatial grid is equal to $\sum_{s \in \Scal_p}Y_t(s) = X_t \times C + \sum_{s \in \Scal_p}\epsilon_t(s)$, where $C$ is a fixed deterministic constant. Therefore, the knowledge of $Y_{t-1}$ can be incorporated into model \eqref{eqn:model} by taking a proxy covariate $X'_t$ to be the average of lagged responses across all locations, which reduces it to the former case (S2).
\end{enumerate}

We consider $B = 100$ replications of this dataset, and for each replication we compute the estimates. For demonstration purposes, we tabulate the metrics such as Bias, Mean Absolute Error (MAE), Root Mean Squared Error (RMSE) and Mean Absolute Percentage Error (MAPE) for each timepoints across all replications and all locations in the Table~\ref{tab:simulations-transposed}.

\begin{table}[htbp]
    \centering
    \resizebox{\textwidth}{!}{
    \begin{tabular}{llrrrrrrrrrr}
        \toprule
        & & \multicolumn{10}{c}{\textbf{Time Index ($t$)}} \\ 
        \cmidrule(l){3-12}
        \textbf{Scenario} & \textbf{Metric} 
        & \textbf{0.909} & \textbf{0.919} & \textbf{0.929} & \textbf{0.939} & \textbf{0.949} 
        & \textbf{0.960} & \textbf{0.970} & \textbf{0.980} & \textbf{0.990} & \textbf{1.000} \\
        \midrule
        \multirow{4}{*}{(S1)} 
            & Bias       & -0.100 & -0.106 & -0.104 & -0.103 & -0.0980 & -0.0950 & -0.107 & -0.0981 & -0.108 & -0.102 \\
            & MAE        &  0.439 &  0.439 &  0.440 &  0.440 &  0.439  &  0.438  &  0.440 &  0.437  &  0.441 &  0.439 \\
            & RMSE       &  0.543 &  0.543 &  0.544 &  0.544 &  0.542  &  0.540  &  0.544 &  0.540  &  0.545 &  0.543 \\
            & MAPE (\%)  & 25.5   & 25.2   & 25.4   & 25.3   & 25.5    & 25.5    & 25.3   & 25.4    & 25.2   & 25.4 \\
        \midrule
        \multirow{4}{*}{(S2)} 
            & Bias       & -0.0935 & -0.0999 & -0.0923 & -0.0971 & -0.0919 & -0.0952 & -0.0993 & -0.0985 & -0.0920 & -0.0961 \\
            & MAE        &  0.444  &  0.445  &  0.446  &  0.444  &  0.445  &  0.445  &  0.447  &  0.443  &  0.449  &  0.447 \\
            & RMSE       &  0.550  &  0.551  &  0.553  &  0.550  &  0.550  &  0.551  &  0.554  &  0.548  &  0.558  &  0.555 \\
            & MAPE (\%)  & 25.9    & 25.7    & 26.0    & 25.7    & 25.9    & 25.8    & 25.8    & 25.6    & 26.1    & 25.9 \\
        \midrule 
        \multirow{4}{*}{(S3)} 
            & Bias       & -0.111 & -0.154 & -0.148 & -0.152 & -0.153 & -0.167 & -0.157 & -0.170 & -0.173 & -0.176 \\
            & MAE        &  0.449 &  0.566 &  0.567 &  0.570 &  0.588 &  0.595 &  0.596 &  0.606 &  0.606 &  0.621 \\
            & RMSE       &  0.558 &  0.704 &  0.704 &  0.708 &  0.732 &  0.743 &  0.744 &  0.759 &  0.759 &  0.784 \\
            & MAPE (\%)  & 25.6   & 32.8   & 33.1   & 33.1   & 34.7   & 34.9   & 35.6   & 36.2   & 35.5   & 37.6 \\
        \bottomrule
    \end{tabular}
    }
    \caption{Simulation results for scenarios (S1)-(S3) with $n = 100, p = 15$.}
    \label{tab:simulations-transposed}
\end{table}

It is also interesting to see how the prediction error would decrease as a function of the $p$, the number of gridpoints on the spatial horizon. Figure~\ref{fig:simdata-p-vary} demonstrates this behaviour which shows that the MAPE decreases as the number of spatial locations increases, as assured by the consistency result in Theorem~\ref{thm:pointwise-full}.

\begin{figure}[htbp]
    \centering
    \includegraphics[width=0.8\textwidth]{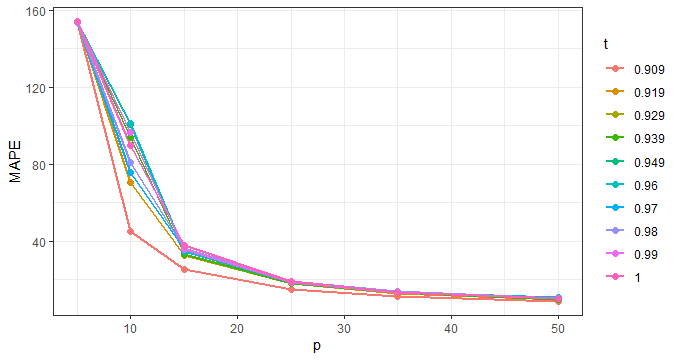}
    \caption{Resulting MAPE in predicting different timepoints as a function of $p$, the number of grid points}
    \label{fig:simdata-p-vary}
\end{figure}

\end{document}